\newtheorem{thm}{Theorem}[section]
\newtheorem{lem}[thm]{Lemma}
\newtheorem{prop}[thm]{Proposition}
\theoremstyle{definition}
\newtheorem{defn}[thm]{Definition}
\newtheorem{ass}[thm]{Assumption}
\theoremstyle{remark}
\newtheorem{rem}[thm]{Remark}
\numberwithin{equation}{section}
\newcommand{\filt}{\mathbb{F}}
\newcommand{\filtg}{\mathbb{G}}
\newcommand{\prob}{\mathbb{P}}
\newcommand{\tprob}{\widetilde{\mathbb{P}}}
\newcommand{\qprob}{\mathbb{Q}}
\newcommand{\reals}{\mathbb R}
\newcommand{\symmat}{\mathbb S}
\newcommand{\sdpos}{\mathbb{S}^d_{++}}
\newcommand{\A}{\mathcal{A}}
\newcommand{\B}{\mathcal{B}}
\newcommand{\E}{\mathcal{E}}
\newcommand{\F}{\mathcal{F}}
\newcommand{\G}{\mathcal{G}}
\newcommand{\M}{\mathcal{M}}
\newcommand{\mcp}{\mathcal{P}}
\newcommand{\V}{\mathcal{V}}
\newcommand{\We}{\mathcal{W}}
\newcommand{\eps}{\varepsilon}
\newcommand{\such}{\ | \ }
\newcommand{\probtriple}{(\Omega, \mathcal{F}, \mathbb{P})}
\newcommand{\Leb}{\textrm{Leb}}
\newcommand{\dfn}{\, := \,}
\newcommand{\rdfn}{\, =: \,}
\newcommand{\indep}{\perp\!\!\!\perp}
\newcommand{\pvct}{\varphi}
\newcommand{\pmtx}{\chi}
\newcommand{\expv}[3]{\mathbb{E}^{#1}_{#2}\left[#3\right]}
\newcommand{\condexpv}[4]{\mathbb{E}^{#1}_{#2}\left[#3\big| #4\right]}
\newcommand{\expvs}[1]{\mathbb{E}\left[#1\right]}
\newcommand{\condexpvs}[2]{\mathbb{E}\left[#1\big| #2\right]}
\newcommand{\condprobs}[2]{\mathbb{P}\left[#1\big| #2\right]}
\newcommand{\nada}[1]{}
\newcommand{\wt}[1]{\widetilde{#1}}
\newcommand{\wh}[1]{\widehat{#1}}
\newcommand{\bra}[1]{\left[#1\right]}
\newcommand{\cbra}[1]{\left\{#1\right\}}
\newcommand{\ol}[1]{\overline{#1}}
\begin{document}

\title[Equilibrium with Flows]{Equilibrium with Heterogeneous Information Flows}

\author{Scott Robertson}
\address{Questrom School of Business, Boston, University, Boston MA, 02215, USA}
\email{scottrob@bu.edu}
\keywords{Equilibrium, Heterogenous Information, Filtration Convergence}
\subjclass[2020]{91B25, 91B69, 91B44, 91G30}

\date{\today }

\begin{abstract}
   We study a continuous time economy where throughout time, insiders receive private signals regarding the risky assets' terminal payoff.  We prove existence of a partial communication equilibrium  where, at each private signal time, the public receives a signal of the same form as the associated insider, but of lower quality.  This causes a jump in both the public information flow and equilibrium asset price.  The resultant markets, while complete between each jump time, are incomplete over each jump.  After establishing equilibrium for a finite number of private signal times, we consider the limit as the private signals become more and more frequent.  Under appropriate scaling we prove convergence of the public filtration to the natural filtration generated by both the fundamental factor process $X$ and a continuous time process $J$ taking the form $J_t = X_1 + Y_t$ where $X_1$ is the terminal payoff and $Y$ an independent Gaussian process.  This coincides with the filtration considered in \cite{MR2213260}. However, while therein the filtration was exogenously assumed to be that of an insider who observes a private signal flow, here it arises endogenously as the public filtration when there are a large number of insiders receiving signals throughout time.

\end{abstract}

\maketitle

\section{Introduction}  In this paper, we study equilibria in a multi-asset, multi-agent continuous time economy where there is asymmetric information.  The novelty is that we allow for agents with private information to dynamically enter the market.  This enables us to identify how the market reacts and equilibriates upon the arrival of new information.

The notion that prices convey private information dates at least back to \cite{hayek1945use}, and formalizing this notion is a classic problem in the asset pricing literature.  The earliest works are \cite{grossman1976efficiency, grossman1980impossibility}, which consider a one period model where the asset payoff and private signals are jointly normal, all agents have preferences described by an exponential (CARA) utility function, and the economy is competitive in that agents take prices as given.

\cite{grossman1976efficiency} showed that prices are fully revealing, in that they provide as much information regarding the risky asset terminal payoff as the private signals. This created a paradox, as fully revealing prices remove the incentive to be an insider, but the price itself depends on the presence of insiders. \cite{grossman1980impossibility, hellwig1980aggregation, diamond1981information} avoid the paradox by introducing ``noise'' traders (agents with inelastic exogenous demand). Additionally, \cite{grossman1980impossibility} introduces an uninformed agent who is rational, but does not have a private signal.

After these early works, many papers extended the basic model.  While a complete literature review is beyond the scope of this paper (see \cite{garcia2015asymmetric} for a partial overview), in the case of a competitive economy we highlight  \cite{brennan1996information} which allowed for agents to receive flows of private signals in a discrete time\footnote{\cite{brennan1996information} also allowed the trading periods to increase, informally obtaining a limiting continuous time economy.}; \cite{han2018dynamic} which considered when private information may be dynamically obtained at a cost; and  \cite{MR4192554} which allowed for continuous trading in a diffusive economy where a single signal is received at time zero.

Alternatively, a strand of literature arose which allowed the insider to perceive her impact on prices. The idea is that an agent who is influential enough to obtain a private signal might also be large enough to move prices, and should account for price impact.  In the seminal \cite{kyle1985continuous}, the insider is risk neutral and receives a perfect signal (i.e. with no noise) about the asset terminal payoff. She accounts for price impact by recognizing her demand will be combined with the noise traders' demand when submitted to the market maker (also risk neutral) who then quotes prices.   As in the price-taking literature, all random variables are jointly normal.

\cite{kyle1985continuous} has been extended in numerous directions, and as with the price taking case, a full literature review is beyond our scope. Rather, we selectively highlight extensions to when the insider sees the noise trader's demand prior to submitting her order (\cite{rochet1994insider}); to continuous time (\cite{back1992insider}); to when multiple insider interact (for noisy signals see \cite{foster1996strategic}, and for perfect signals see \cite{holden1992long} in discrete time and \cite{back2000imperfect} in continuous time); and to when the insiders and/or market makers have CARA preferences (see \cite{subrahmanyam1991risk} in one period, and the recent \cite{bose2023kyle, ccetin2022order} in continuous time).

Common to all the above price-impact papers is that private signals occur only at time $0$.  As such, the question of how an economy reacts to the onset of new private information is left unanswered.  Our paper allows one to see, in a competitive economy, how markets change with the arrival of new information, and provides a benchmark to which price impact analysis could compare, should equilibrium models accounting for price impact and dynamic private signal arrival be developed in the future.

In our model, there are $d$ risky assets $S$ with terminal (time $1$) payoff $S_1 = X_1$, where $X$ is a $d$-dimensional Ornstein-Uhlenbeck (OU) process driven by a Brownian motion $B$.\footnote{We discuss extending to general diffusions in Remark \ref{R:why_ou}.}  $X$ is publicly observable so all agents have access to the information flow $\filt^B$. There are $N$ insiders, and the $n^{th}$ insider receives a private signal $G_n$ at time $t_n$ where $0 = t_1 < t_2< ... < t_N < 1$.  The signals take the form $G_n = X_1 + Y_n$ where  $\cbra{Y_n}$ are independent Gaussian random vectors.   There is also an uninformed agent with no private information, and noise traders with exogenously specified Gaussian demands. The uniformed agent trades continuously over $[0,1]$, while the $n^{th}$ insider trades over $[t_n,1]$\footnote{This conforms the insider waiting to ``clean'' or refine her signal before trading. In Section \ref{SS:two} we briefly discuss what happens if one allows insiders to trade prior to receiving their signal.}. Agents have CARA preferences and derive utility over terminal wealth.

In this environment, we expect a ``shock'' to the market at each signal time $t_n$ as new private information enters which, as we show, causes a jump in both the equilibrium price $S$ and public information flow $\filt^m$.  This filtration enlarges $\filt^B$, and the determination of $\filt^m$, along with $S$, is part of the equilibrium problem.   Due to the jumps, we also expect the resultant market is incomplete, as new information cannot be perfectly hedged prior to its arrival.  

The equilibrium is constructed in  Theorem \ref{T:main_result}. In it, at each signal time $t_n$ the public, comprising of the uninformed agent as well as insiders $k=1,...n-1$, receives the signal $H_n = G_n + $``noise'' (see \eqref{E:H_def}).  Therefore, there is a public flow of signals, with each signal being of lesser quality than the associated insider signal.  This implies (see Section \ref{SS:two} for a qualitative explanation when there are $N=2$ insiders)  the public filtration is $\filt^m = \filt^B \vee \sigma(H_1,H_2,...,H_n)$ over the period $[t_n,t_{n+1})$, and we show the $\cbra{H_n}$ are such that one may use standard results on initial enlargements (see \cite{Jacod1985}, \cite{MR884713}).  The $k^{th}$ insider's filtration is $\filt^B\vee \sigma(H_1,...,H_n, G_k)$, and the resultant filtrations satisfy the usual conditions.

The equilibrium price process $S$ jumps at each signal time, and is diffusive in between signal times.  Due to the exponential-Gaussian nature of the problem, the price process is affine in the both the factor process $X$ and relevant signals (e.g. $(H_1,...,H_n)$ over $[t_n,t_{n+1})$) with time varying coefficients.  Additionally, $S$ has non-degenerate volatility and hence the market is endogenously complete over the intervals of continuous trading.

Equilibrium trading strategies are affine in the price process and relevant public signals, and take a very simple form: see \eqref{E:strats}.  In particular, each insider takes a static position in her private signal and it is this static position, when combined with the noise traders' demand, that both creates the public signal and communicates the public signal to the market. This in turn shows the equilibrium satisfies the rational expectations property introduced in \cite{radner1979rational} and extended in \cite{MR4192554}: see Definition \ref{D:DNREE}.

To establish equilibrium, we work backwards.  This leads to $2N-1$ ``periods'' to the model: $N$ intervals of continuous trading, and $N-1$ jumps when the signals arrive.  Over the last period $[t_N,1]$ of continuous trading, we establish equilibria using arguments very similar to those in \cite{MR4192554}. To establish equilibrium over the remaining periods we exploit the OU structure which yields effective endowments implied by future trading that are linear-quadratic in the price process, factor process and signals. However, even with the OU process, establishing equilibrium is delicate as agents have differing information sets.  To put things on common ground we first consider a ``signal-realization'' equilibrium where agents have common information but random endowments parameterized by signal realizations.  Here well known results, summarized in Propositions \ref{P:abstract_cont_time}, \ref{P:abstract_sp},  yield an equilibrium.

Having established equilibrium at the signal realization level, we then pass to the signal level using the martingale preserving measure as introduced in \cite{MR1418248, MR1632213}.  Here, a delicate issue arises when constructing the acceptable trading strategy sets because the signals, even though they are random variables, are seen by the agents and should be thought of as ``constants''.  To account for this, we follow \cite{detemple2022dynamic} and define the acceptable strategies at the signal level, as those which conform to the standard definition of acceptability for exponential utility (c.f. \cite{MR1891730, MR2489605}) at the signal realization level, for almost all realizations. With everything properly defined, we follow an inductive argument in Section \ref{S:pce_construct} to establish the equilibrium of Theorem \ref{T:main_result}.

In Section \ref{S:One_jump} we provide a numerical example in the two insider case. First, in Figure \ref{F:Trading_SP} we demonstrate the jump in the asset price, trading strategies, and market price of risk when the second signal enters. Next, we consider agent welfare, which following \cite{L-1985} is defined in terms of ex-ante certainty equivalents.  Here, we stress that a proper welfare analysis deserves its own treatment. This is because the key question ``Is is better to receive a worse signal now, or a better signal later?'' requires careful modeling of the relationship between signal time and signal precision.  Indeed, as shown in Figure \ref{F:Welfare}, based upon one's modeling choice, one may come to a variety of different, contradicting, conclusions. To overcome this, we believe additional criteria must be layered onto the welfare analysis to endogenize the signal time to signal precision relationship. However, this is topic for future research.

Having considered the case of finitely many signals, in Section \ref{S:CJ} we investigate what happens as the signals become more and more frequent. Here, we are motivated by both computational considerations and \cite{MR2213260} which considered when the insider has access to a flow $G=\cbra{G_t}$ of private signals taking the form (using our notation) $G_t = G(X_1, Y_t)$ where $Y$ is an independent  process. \cite{MR2213260} took this information flow, as well as the risky asset price process, as given and then considered the optimal investment problem for an agent with log-utility. It is natural to wonder if such an information structure can arise endogenously in equilibrium, and in Proposition \ref{P:filt_converge} and Theorem \ref{T:limit_bm} we show that it is indeed possible under the scaling assumptions in Assumption \ref{A:converge}.

To prove this result, we embed the finite signal analysis into a sequence of markets where signals arrive at a function $\tau$ of the dyadic rationals.  Under appropriate scaling, the market filtrations $\cbra{\filt^{N,m}}$ established at the discrete level, converge, in the sense of \cite{MR1113218, MR1739298, MR1837294}, to the filtration $\filt^{B,J}$, where $J_t = X_1 + Y_t$ where $Y$ is Gaussian process, whose mean and covariance function is determined entirely by the time function $\tau$ as well as the precision weighted risk tolerance function $\lambda$ of \eqref{E:ll_def}. To prove this result, we adapt results found in \cite{neufcourtthesis}.  Lastly, we identify the ``information drift'' (c.f. \cite{MR1940496}) process $\theta$ so that $B_\cdot - \int_0^\cdot \theta_u$ is a Brownian motion in the limiting filtration.

This paper is organized as follows: the probabilistic and financial setup, along with the main result, is given in Section \ref{S:model}. Section \ref{S:ExpCaraEq} provides the core exponential-Gaussian equilibrium results, both over a single period and in continuous time, when agents have common information.  Section \ref{S:EqEst} identifies the signal density processes which verify the relevant Jacod (c.f. \cite{Jacod1985}) conditions, and allow us to use standard results on filtration enlargements.  Section \ref{S:pce_construct} establishes equilibria. Section \ref{S:One_jump} provides a numerical example when there are two signals.  Section \ref{S:CJ} gives the convergence argument as the signals become more frequent.  Appendix \ref{AS:strat} contains more information on the acceptable trading strategies, and proofs are in Appendices \ref{AS:ExpCaraEq} -- \ref{AS:CJ}.

\section{The Model}\label{S:model}

\subsection*{The public factor process and traded assets}

There is a probability space $\probtriple$ supporting a $d$-dimensional Brownian motion $B$, and $\filt^B$ is the $\prob$-augmented natural filtration for $B$. The fundamental process $X$ a $d$-dimensional OU process with dynamics
\begin{equation*}
dX_t = \kappa(\theta-X_t)dt + \sigma dB_t;\qquad \Sigma \dfn \sigma\sigma',
\end{equation*}
where $\kappa \in\reals^{d\times d}$, $\theta\in \reals^d$, and $\Sigma \in \mathbb{S}^d_{++}$ the space of strictly positive definite $d\times d$ symmetric matrices. This implies (\cite[Chapter 4]{MR2053476}) for $s < t$ and given $F^B_s$, $X_t$ is normally distributed with mean vector $\mu(X_s,t-s)$, and covariance/precision matrices $\Sigma(t-s), P(t-s)$ where
\begin{equation}\label{E:X_trans_density}
\mu(x,\tau) \dfn \theta +  e^{-\tau\kappa}(x-\theta);\quad \Sigma(\tau) \dfn \int_0^\tau e^{-(\tau-u)\kappa}\Sigma e^{-(\tau-u)\kappa'}du;\quad P(\tau)\dfn \Sigma(\tau)^{-1}.
\end{equation}
The investment time horizon is $T=1$. There are $d$ risky assets $S$ with terminal payoff $S_1 = X_1$ and outstanding supply $\Pi\in\reals^d$, as well as a riskless asset in $0$ net supply, with price process exogenously fixed at $1$. The economy is competitive, so all participating agents take the price process $S$ as given.

\subsection*{Agents}

There are two agent types: uninformed and insider.   The uninformed agent, labeled $0$, receives no private signal.  He trades over the entire horizon $[0,1]$, and his information is obtained from publicly available quantities such as the factor process $X$, the price process $S$, and combined trading polices of all other participants.

Conversely, there are $N$ insiders. For $n=1,...,N$, the $n^{th}$ insider receives a private signal $G_n$ at a time $t_n \in [0,1)$, and does not participate in the market until she receives her signal.  The signal times are $0 = t_0 < t_1 < ... < t_N < 1$, so that the first private signal is received at time $0$.  The signals $\cbra{G_n}$ are noisy versions of the asset terminal payoff $S_1 = X_1$,
\begin{equation}\label{E:Gn_def}
G_n = X_1 + Y_n,\quad Y_n \sim N(0,C_n^{-1})\qquad n=1,...,N,
\end{equation}
so that the precision matrix $C_n$ measures the quality of the signal.   We assume the insiders use separate channels for obtaining their private information, so the $\cbra{Y_n}$ are independent of each other, as well as the terminal public information $\F^B_1$.

\begin{rem}
One may allow the $n^{th}$ insider to trade as an uninformed agent prior to $t_n$, and the equilibrium information structure does not change.  However, other equilibrium quantities such as prices and strategies will change, and the notation, already quite cumbersome, becomes even more complicated. Additionally, not trading prior to $t_n$ reflects that insiders take time to refine their signal before making investment decisions.  This being said, a discussion on what happens with $2$ insiders if the second is allowed to trade over $[0,t_2]$ is presented in Section \ref{SS:two}.
\end{rem}

Agents have  CARA (exponential) preferences, derive utility from terminal wealth, and do not have random endowments.  For $k=0,...,N$, agent $k$ has absolute risk aversion $\gamma_k$ and, to justify the price taking assumption, weight in the economy of $\omega_k$. Because it plays a key role in the analysis below, define the weighted risk tolerance
\begin{equation}\label{E:alpha_def}
\begin{split}
\alpha_k \dfn \frac{\omega_k}{\gamma_k},
\end{split}
\end{equation}
and, for $n=1,...,N$, the representative risk aversion among agents $k=0,...,n$
\begin{equation}\label{E:olgamma_def}
\ol{\gamma}_n \dfn \left(\sum_{k=0}^n \alpha_k\right)^{-1}.
\end{equation}

\subsection*{Noise Trading} Following the standard convention (see \cite{hellwig1980aggregation, kyle1985continuous, back1992insider}), to prevent full signal revelation we assume over $(t_n,t_{n+1}]$ ($[t_1,t_2]$ if $n=1$) there are noise traders with price inelastic demand. This demand is held constant over the period, and takes the form $\sum_{k=1}^n Z_k$.  Here, $Z_n \sim N(0,D_n^{-1})$, so that $D_n$ measures the precision, or volume (see \cite{KOVALENKOV2014211,NEZAFAT2023105664}), of noise trading. We assume the $\cbra{Z_n}$ are independent of each other, the insider dispersions $\cbra{Y_n}$, as well as the terminal public information $\F^B_1$.  Effectively, the noise traders enforce a random supply change of $Z_n$ at $t_n$.

\section{Equilibrium Definition and the Main Result}\label{S:main_result}

Below, we formally define and establish equilibrium for the case of $N$ insiders.  However, to provide intuition, we first analyze when there are two insiders, as this is the main departure from existing results. A numerical example highlighting the two insider case is given in Section \ref{S:One_jump}.

\subsection{Two insiders}\label{SS:two} At time $t_1 = 0$ the first insider receives the signal $G_1$ and the noise trader demand is $Z_1$.  Thus, if any information is to be communicated to the market at this time, it must be of the form $H_1 = H_1(G_1,Z_1)$ because there is no other uncertainty. Next, due to our assumption of exponential preferences and Gaussian dispersions\footnote{Though not necessarily a Gaussian factor process - see \cite{MR4192554}.}, it turns out the insider's optimal trading policy is \emph{static} in the private signal (there is also a dynamic component to the strategy, but this only depends on public information) and hence, as the noise trader demand $Z_n$ is also static, no additional information is released to the market. This implies that over $[0,t_2)$ one expects the public, or ``market'', information to be
\begin{equation*}
\F^m_t = \F^B_t \vee \sigma(H_1);\quad 0\leq t < t_2.
\end{equation*}
Additionally, by analyzing the clearing condition $\Pi = \omega_0 \wh{\pi}^0_t + \omega_1 \wh{\pi}^1_t + Z_1$, we see that if $\wh{\pi}^1_t$ is the sum of a static position in $G_1$ and a dynamic position which only uses only public information, the public signal $H_1$ must be a linear combination of $G_1$ and $Z_1$ (see \eqref{E:H_def} below) and hence by scaling we can view $H_1$ as a lower quality version of the insider's signal $G_1$.  Also, the publicly observable
\begin{equation*}
\Pi - \omega_0 \wh{\pi}^0_0 = \omega_1 \wh{\pi}^1_0 + Z_1
\end{equation*}
reveals $H_1$, and this provides a mechanism for how $H_1$ is passed to the market. The equilibrium asset price $S$ is adapted to $\filt^m$, hence continuous as only $\filt^B$ is changing with time, with value $S_{(t_2)-}$ right before the second signal time $t_2$.

Now, consider what happens at $t_2$. The second insider arrives with private signal $G_2$ and the noise trader demand changes by $Z_2$.  Thus, it is natural to assume a signal of the form $H_2 = H_2(G_2,Z_2)$ is passed to the market.  However, as $G_2 = X_1 + Y_2$ and $(Z_2,Y_2)$ are independent of $\F^B_{t_2}\vee \sigma(H_1)$, it must be that the public filtration experiences a jump at $t_2$.  This will in turn cause a jump in equilibrium quantities such as the price process.  Furthermore, as the CARA/Gaussian setting implies the second insider's position is also static in $G_2$, no additional information is communicated to the market after $t_2$, and hence over $[t_2,1]$ one expects the public information to be
\begin{equation*}
\F^m_t = \F^B_t \vee \sigma(H_1,H_2);\quad t_2\leq t \leq 1.
\end{equation*}
The clearing condition ensures $H_2$ is linear in $(G_2,Z_2)$ so that $H_2$ is a worse version of $G_2$, and $H_2$ is revealed to the market through the combined insider and noise trader demand $\Pi - \omega_0\wh{\pi}_{(t_2)+}$ immediately after $t_2$. Over $[t_2,1]$ the asset will be diffusive, taking the value $S_1 = X_1$ at the end.

The uninformed agent and first insider are aware the second insider will arrive at $t_2$, and that the noise trader demand will change (though not aware of the exact change amounts). Thus, they anticipate a jump at $t_2$, and optimally invest over the jump.  This yields essentially three equilibrium problems: one where agents $0,1$ trade continuously in $S$ over $[0,t_2)$; one where agents $0,1$ trade over the jump at $t_2$; and one where agents $0,1,2$ trade continuously in $S$ over $[t_2,1]$.  Market clearing must hold in all three problems; in particular at the jump $t_2$, so the clearing condition must be met even at a set of times with null Lebesgue measure.

Lastly, as alluded to above, consider when agent $2$ is allowed to trade prior to $t_2$, and write her effective endowment at $t_2$ from optimally trading over $[t_2,1]$ as (see \eqref{E:checkE_tN_def}) $\wt{\E}^2_{t_2}$. Note that in addition to $\F^B_{t_2}$,  after the jump at $t_2$ agent $2$ sees $(H_1,H_2,G_2)$ and agent $0$ sees $(H_1,H_2)$, while immediately before the jump agents $0,2$ both see $H_1$.  Remarkably, if agent $2$ first ``conditions away'' $G_2$ in her endowment $\wt{\E}^2_{t_2}$ by computing
\begin{equation*}
\gamma_2\ol{\E}^2_{t_2} \dfn - \log \condexpvs{e^{-\gamma_2\wt{\E}^2_{t_2}}}{\F^B_{t_2}\vee\sigma(H_1,H_2)},
\end{equation*}
then up to a constant, $\ol{\E}^2_{t_2}$ coincides with $\wt{\E}^0_{t_2}$, agent $0$'s effective endowment.  Therefore, over the jump at $t_2$ and over $[0,t_2)$, for purposes of establishing equilibrium quantities, agents $0$ and $2$ can be combined into a single agent with weight $\omega_0+\omega_1$ and risk aversion $\gamma_{02}$ where $1/\gamma_{02} = \omega_0/(\omega_0+\omega_2)\times 1/\gamma_0 + \omega_2/(\omega_0+\omega_2) \times 1/\gamma_2$. However, this change still affects the time $(t_2)-$ price, and hence all quantities over $[0,t_2)$. In the model, we do not allow agent $2$ to trade, preferring to think she waits while cleaning her signal.

\subsection*{Equilibrium Definition} With the above as intuition, we now consider $N$ insiders, using the notion of a ``partial communication equilibrium'' (PCE) as introduced in \cite{MR4192554}. Informally, a PCE is a pair $(\filt^m,S)$, where the market filtration $\filt^m$ used by the uninformed agent satisfies the usual conditions, and strictly enlarges $\filt^B$; $S$ is a RCLL $\filt^m$ semimartingale with $S_1 = X_1$ almost surely; and $(\filt^m,S)$ are such that the market clears with everyone optimally trading in $S$ trading over their respective periods, and using their respective information sets.  Additionally, to rule full revelation of the insider's signal, we  want $\filt^m \subset \filt^m\vee \sigma(G_k)$ over $[t_k,1]$ for $k=1,...,N$.

To identify $\filt^m$, we are motivated by the two-insider case, as well as previous results in the literature (see for example \cite{grossman1976efficiency, grossman1980impossibility, hellwig1980aggregation, admati1985noisy, breon2015existence, MR4192554}), and assume there are market signals $H_1,H_2,...,H_N$ transferred to the public at the respective signal times $t_1,...t_N$.   The $\cbra{H_n}$ are precisely defined in \eqref{E:H_def}. Then, for fixed $n$ and $1\leq k\leq n$, we set
\begin{equation}\label{E:filtnk_def}
    \filt^{n,0} \dfn \filt^B \vee \sigma(\ol{H}_n);\quad \filt^{n,k}\dfn \filt^B \vee \sigma(\ol{H}_n,G_k);\quad  \ol{H}_n \dfn (H_1,....,H_n).
\end{equation}
We  conjecture for $t\in [t_n,t_{n+1})$ the market and $k^{th}$ insider information sets are respectively
\begin{equation}\label{E:filtm_ik_def}
    \F^m_t = \F^{n,0}_t, \qquad \F^{I_k}_t = \F^{n,k}_t.
\end{equation}
To ensure the $\cbra{\filt^{n,k}}$ satisfy the usual conditions (see \cite[Proposition 3.3]{MR1775229} and \cite[Lemma 4.2]{MR3758346}), and to preserve the semi-martingale property going from both $\filt^B$ to $\filt^{m}$ and from $\filt^{m}$ to $\filt^{I_k}$, we assume the respective Jacod (c.f. \cite{Jacod1985}) conditions
\begin{equation*}
    \begin{split}
        \condprobs{\ol{H}_n \in d\ol{h}_n}{\F^B_t} &\sim \prob\bra{\ol{H}_n\in d\ol{h}_n} \textrm{ and }\condprobs{G_k \in dg_k}{\F^{n,0}_t} \sim \prob\bra{G_k\in dg_k} \quad \textrm{a.s., } t\in [0,1].
    \end{split}
\end{equation*}
Because $\filt^m$ ``jumps''  at each time $t_2,...,t_N$ from $\filt^B\vee\sigma(\ol{H}_n)$ to $\filt^B \vee \sigma(\ol{H}_{n+1})$, we expect the price process $S$ to be  diffusive over $[t_n,t_{n+1})$, with jumps at $t_2,...,t_N$. Accordingly we define the measure $\nu \dfn \Leb_{[0,1]} + \sum_{n=2}^N \delta_{t_n}$, and expect $d[S,S] << \nu$ almost surely (we only use $\nu$ to state the clearing condition).

To define the optimal investment problems, write $\A^0$ as the uninformed agent's set of acceptable trading strategies.   While a precise definition is given in Section \ref{S:pce_construct} and Appendix \ref{AS:strat} below, $\A^0$ contains those $\filt^m$ predictable, $S$ integrable policies such that the gains process
\begin{equation}\label{E:wealth_process_def}
    \We^{\pi}_{0,\cdot} \dfn \int_0^\cdot \pi_u'dS_u,
\end{equation}
satisfies an appropriate budget constraint.  Next, for $k=1,...,N$ write $\A^{k}$ as the $k^{th}$ insider's set of acceptable trading strategies.  These are $\filt^{I_k}$ predictable, $S$ integrable polices also such that $\We^{\pi}$ satisfies an appropriate budget constraint.  As the initial wealth factors out, the optimal investment problems are
\begin{equation}\label{E:vf_def}
    \begin{split}
        &\inf_{\pi\in \A^0} \condexpvs{e^{-\gamma_0 \We^{\pi}_{0,1}}}{\F^m_0} \textrm{ (Uninformed) };\qquad \inf_{\pi\in \A^k} \condexpvs{e^{-\gamma_k \We^{\pi}_{t_k,1}}}{\F^{I_k}_{t_k}} \ (k^{th}\textrm{ Insider) }.
    \end{split}
\end{equation}
We can now define the PCE.
\begin{defn}\label{D:pce}
Let $\filt^m$ be defined as in \eqref{E:filtnk_def}, \eqref{E:filtm_ik_def}, and let $S$ be a RCLL $\filt^m$ semi-martingale with $S_1 = X_1$ almost surely. Then the pair $(\filt^m,S)$ is a PCE if $\sigma(\ol{H}_n)\subset \sigma(\ol{H}_n,G_k)$ for $k=1,...,N$, and if there are optimizers $\cbra{\wh{\pi}^k}$ to the optimal investment problems in \eqref{E:vf_def} such that for $n=1,...,N$,
\begin{equation*}
    \Pi = \omega_0 \wh{\pi}^0 + \sum_{k=1}^n \left(\omega_k\wh{\pi}^k + Z_k\right),
\end{equation*}
$\prob\times \nu$ almost surely over $(t_n,t_{n+1}]$.
\end{defn}

\begin{rem} The $\prob\times\nu$ almost sure statement simply means the clearing condition must hold, for the participating agents, Lebesgue almost surely in between jump times, and at each jump time.
\end{rem}

The PCE definition does not explicitly state \emph{how} the signals $\cbra{H_n}$ are transferred to market. Guided by the two-signal discussion, as in \cite{MR4192554,detemple2022dynamic} we now define a dynamic noisy rational expectations equilibrium (DNREE) (see also \cite{radner1979rational})\footnote{Comparing to the REE of \cite{radner1979rational}, ``dynamic'' is due to continuous trading, and ``noisy'' is due to noise traders.}. To state it, first define
\begin{equation}\label{E:resid_demand}
\wh{\pi}^{R}_t \dfn \sum_{k=1}^{n-1} \left(\omega_k \wh{\pi}_{t}^k + Z_k\right) \textrm{  if  } t\in (t_{n-1},t_n];\qquad \wh{\pi}^R_0 = \omega_1 \wh{\pi}^1_0 + Z_1,
\end{equation}
so that $\wh{\pi}^R$ is the combined demand of everyone but the uninformed agent.

\begin{defn}\label{D:DNREE}
The pair $(S,\filt^m)$ is a DNREE if $(S,\filt^m)$ is a PCE, and if $\filt^m$ coincides with $\filt^{Obs}$, the $\prob$-augmentation of the right-continuous enlargement of the filtration generated by $B,S, \wh{\pi}^R$.
\end{defn}

A DNREE thus prevails if market observable quantities transmit the signals. Note also that by right-continuity of $\filt^m$  we already know $\filt^{Obs} \subseteq \filt^m$. To construct the PCE, recall \eqref{E:Gn_def}, \eqref{E:alpha_def}, and define the candidate market signals
\begin{equation}\label{E:H_def}
    H_n \dfn G_n + \frac{1}{\alpha_{n}}C^{-1}_n Z_n.
    \end{equation}
$H_n$ takes the same form as $G_n$ but is of lower quality as
\begin{equation}\label{E:H_def2}
H_n = X_1  + N(0,E^{-1}_n);\qquad E^{-1}_n \dfn C^{-1}_n + \frac{1}{\alpha_{n}^2}C_n^{-1}D_n ^{-1}C_n^{-1}.
\end{equation}
With the candidate signals defined (recall also \eqref{E:olgamma_def}), we now state
\begin{thm}\label{T:main_result}
There exists a PCE $(\filt^m,S)$. The market signals $\cbra{H_n}$ are from \eqref{E:H_def}.  The filtration $\filt^m$ is constructed as in \eqref{E:filtnk_def}, \eqref{E:filtm_ik_def}. Furthermore,
\begin{enumerate}[(1)]
\item (Prices) Over $[t_n,t_{n+1})$ ($n<N$) or $[t_N,1]$ ($n=N$) the price process takes the form
\begin{equation*}
    S_t = M^{n,X}_t X_t + \sum_{m=1}^n M^{n,m,H}_t H_m + V^{n}_t,
\end{equation*}
where $M^{n,X}, \cbra{M^{n,m,H}}_{m=1}^n$ are continuous, deterministic matrix valued functions of time, and $V^n$ is a continuous deterministic time varying vector. $M_t^{n,X}$ is of full-rank for Lebesgue almost every  $t\in [t_n,t_{n+1})$.

\item (Trading Strategies) For $t\in (t_n,t_{n+1}]$ ($n>1$) or $t\in [0,t_1]$ ($n=1$) the optimal trading strategy for agent $k=0,...,n$ is
\begin{equation}\label{E:strats}
\begin{split}
\gamma_k \wh{\pi}^k_t &= 1_{k>0}\left(C_kG_k- E_k H_k - (C_k-E_k)S_{(t)-}\right)\\
&\qquad  + \ol{\gamma}_n\left(\Pi + \sum_{j=1}^n \alpha_j(C_j-E_j)\left(S_{(t)-}-H_j\right)\right).
\end{split}
\end{equation}

\item (Signal Transmission)  $(S,\filt^m)$ is a DNREE.

\end{enumerate}

\end{thm}

\subsection*{Discussion}

The DNREE property is easiest to see using \eqref{E:resid_demand}. Indeed, at $t_1 = 0$
\begin{equation*}
\wh{\pi}^R_{t_1} = -\alpha_1\alpha_0\ol{\gamma}_1\left(C_1-E_1\right)(S_{t_1}-H_1) + \alpha_1\ol{\gamma}_1\Pi.
\end{equation*}
Thus, if one sees $S_{t_1}$ and $\wh{\pi}^R_{t_1}$ then one can recover $H_1$. Similarly, taking $t\downarrow t_2$
\begin{equation*}
\wh{\pi}^R_{(t_2)+} = -\ol{\gamma}_2\alpha_0\sum_{k=1}^2\alpha_k\left(C_k-E_k\right)(S_{t_2}-H_k) + \ol{\gamma}_2(\alpha_1+\alpha_2)\Pi.
\end{equation*}
As $H_1$ has already been revealed, $H_2$ is revealed at $(t_2)+$  by viewing $S_{t_2}, \wh{\pi}^R_{(t_2)+}$. Continuing in this manner yields the DNREE property, and indicates an important fact. Namely, that even if the public signal $H_k$ is not transferred to the market at the exact same time as the insider signal (i.e. at $t_k$), it is transferred, through the change in noise trader demand and the trading actions of the newly arrived $k^{th}$ insider, at an infinitesimally small time later.

Next, re-write the trading strategy of insider $k=1,...,n$ over $(t_n,t_{n+1}]$ as
\begin{equation*}
\begin{split}
\gamma_k\wh{\pi}^k_t &= \ol{\gamma}_n\Pi + C_k(G_k-S_{t-}) + \left(E_k + \ol{\gamma}_n\alpha_k(C_k-E_k)\right)(S_{t-}-H_k)\\
&\qquad  + \sum_{j=1,j\neq k}^n\alpha_j(C_j-E_j)(S_{t-}-H_j).
\end{split}
\end{equation*}
As $0<\ol{\gamma}_n\alpha_k < 1$ the matrix in front of $S_{t-}-H_k$ is in between (in the sense of positive definite matrices) $E_k$ and $C_k$.  Therefore, there are four components to the optimal policy.  The first ($\ol{\gamma}_n \Pi$) is the optimal trading strategy in the absence of any private information.  The second compares $S_{t-}$ to the private signal $G_k$ (a signal about $S_1=X_1$) and the insider trades positively with respect to the signal. In other words, if the signal $G_k$ exceeds $S_{t-}$ the insider is inclined to think prices will rise, and increases her position in the asset. Else, she thinks the price will fall and takes a negative position.  The third and fourth terms offset the ``optimism'' of the second term, by comparing the price to the public signals, and trading in the negatively.

We can also isolate the effects of private information and noise, by noting from \eqref{E:H_def} that if the $k^{th}$ insider sees $G_k$ and $H_k$, then she sees the noise term $Z_k$, and
\begin{equation*}
\gamma_k\wh{\pi}^k_t = \gamma_0\wh{\pi}^0_t + (C_k-E_k)(G_k-S_{t-}) -\frac{1}{\alpha_k}E_kC_k^{-1}Z_k.
\end{equation*}
Here, the offsetting effects of the private signal and noise are clear. The informed agent acts like the uninformed agent except regarding her private signal and noise.  She trades positively  with respect to her own signal, but negatively with respect to noise.  And, in the limit of vanishing noise trader variance ($E_k\to C_k$, $Z_k \to 0$) the insider's strategy coincides with the uninformed's. Qualitatively this is clear: if $G_k$ is passed to the market, the $k^{th}$ insider is no more well-informed than the uninformed agent.

To conclude we follow up on the ``three equilibrium problem'' comment in Section \ref{SS:two}. Namely, due of the jump in the price process, we may view $[0,1]$ as being comprised of $2N-1$ ``periods'', with $N$ intervals of continuous trading, and $N-1$ jumps, each of which can be associated with a static one period model.  Accordingly, there are two types of equilibria to establish: one in continuous time and one over a single trading interval.  Using CARA preferences we may work backwards, with future optimal trading yielding an associated random endowment.   As such, the proof of Theorem \ref{T:main_result}, carried out in Section \ref{S:pce_construct} below, follows an inductive argument. This argument relies on two ``core'' equilibrium results (one dynamic, one static) in the CARA-Gaussian setting with common information, which we present first.

\section{Exponential-Gaussian Equilibrium with Common Information}\label{S:ExpCaraEq}

Here, we provide equilibrium results in a CARA-Gaussian setting with common information. These results are known in the literature: see \cite{grossman1980impossibility} for the static case, and the dynamic case can be obtained using standard results on complete market equilibria as found in \cite{MR1640352}, but we provide proofs in Appendix \ref{AS:ExpCaraEq}. We will then use these results to establish Theorem \ref{T:main_result}.

For the sake of generality, assume there are $J+1$ agents, and the outstanding supply is $\Psi$. For $j=0,1,...,J$, agent $j$ has CARA preferences with risk aversion $\gamma_j$, weight in the economy $\omega_j$, and weighted risk tolerance $\alpha_j = \omega_j / \gamma_j$. The representative risk aversion is $\ol{\gamma} = (\sum_{j=1}^J \alpha_j)^{-1}$.

\subsection{Continuous Time}\label{SS:cont}  Here, each agent uses the filtration $\filt^B$ and the time interval is $[a,b]$.  The terminal risky asset price is $S_b = \pvct + \pmtx X_b$ for a vector $\pvct$ and matrix $\pmtx$. Agent $j$ has risk adjusted random endowment (notice the presence of both $S_b$ and $X_b$)
\begin{equation}\label{E:cont_rand_endow}
    \gamma_j \E^j_b = 1_{j>0}\left(\frac{1}{2}S_b' M_j S_b - S_b' V_j\right) + \frac{1}{2}X_b'M X_b - X_b'V + \lambda_j,
\end{equation}
where $M_j, M\in\mathbb{S}^d$ and $V_j, V \in\reals^d$ and $\lambda_j\in\reals$.\footnote{Below, when we use Proposition \ref{P:abstract_cont_time}, $\lambda_j$ may depend upon the signals $\cbra{h_n}$. This is why we include it here, even though it plays no role in establishing equilibrium.} We let $Z$ be a to-be-determined strictly positive martingale such that $\expvs{\left(1\vee |X_b|^2\right)Z_b} < \infty$ and define the price process by
\begin{equation}\label{E:cont_price_def}
    S_t \dfn \frac{\condexpvs{Z_{b}S_b}{\F^B_t}}{\condexpvs{Z_{b}}{\F^B_t}};\qquad t\in [a,b].
\end{equation}
The set of acceptable strategies $\A_{a,b}$, common for all agents, consists of those $\filt^B$ predictable, $S$ integrable strategies such that $\condexpv{\qprob}{}{\We^{\pi}_{a,b}}{\F^B_a}\leq 0$ for the measure $\qprob$ associated to $Z$. Agent $j$ has optimal investment problem
\begin{equation*}
    \inf_{\pi\in\A_{a,b}} \condexpvs{e^{-\gamma_j \left(\We^{\pi}_{a,b} + \E^j_b\right)}}{\F^B_a}.
\end{equation*}
We say $Z$ determines a ``complete market equilibrium'' if there exist optimal $\cbra{\wh{\pi}^j}$ such that $\prob\times\Leb_{[a,b]}$ almost surely $\Psi = \sum_{j=0}^J \omega_j \wh{\pi}^j$, and if the $(S,\filt^B)$ market is complete over $[a,b]$.  The following proposition identifies $Z$ enforcing a complete market equilibrium. To state it, define the matrix and vector
\begin{equation}\label{E:cont_olmv}
    \ol{M} \dfn \ol{\gamma}\pmtx'\left(\sum_{j=1}^J \alpha_j M_j\right)\pmtx;\qquad \ol{V} \dfn \ol{\gamma}\pmtx'\left(\sum_{j=1}^J \alpha_j V_j - \Psi - \left(\sum_{j=1}^J \alpha_j M_j\right)\pvct\right),
\end{equation}
and assume
\begin{ass}\label{A:cont_time_verify}
For $P$ in \eqref{E:X_trans_density} and $M$ in \eqref{E:cont_rand_endow}, $P(b-a) + M +  \ol{M} \in \sdpos$.
\end{ass}

\begin{prop}\label{P:abstract_cont_time}
There is a complete market equilibrium with martingale measure
\begin{equation}\label{E:abstract_cont_time_z}
    \frac{d\qprob}{d\prob}\big|_{\F^B_b} = Z_{b};\qquad Z_{b} \dfn \frac{e^{ -\frac{1}{2}X_b'\left(M+\ol{M}\right) X_b + X_b'\left(V+\ol{V}\right)}}{\expvs{e^{- \frac{1}{2}X_b'\left(M+\ol{M}\right) X_b + X_b'\left(V+\ol{V}\right)}}}.
\end{equation}
The price process is
\begin{equation}\label{E:abstract_cont_time_px}
    S_{t} = \pvct + \pmtx\left(P(b-t) + M + \ol{M}\right)^{-1}\left(P(b-t)\mu(X_t,b-t) + V + \ol{V}\right);\qquad t\in [a,b].
\end{equation}
For $j=0,...,J$ the optimal strategy for agent $j$ is
\begin{equation*}
\gamma_j \wh{\pi}^j_t = 1_{j>0}\left(V_j - M_j S_t\right) + \ol{\gamma}\left(\Psi + \sum_{k=1}^J \alpha_k\left(M_k S_t - V_k\right)\right),
\end{equation*}
and the time $a$ certainty equivalent is
\begin{equation*}
    \begin{split}
        \gamma_j \check{\E}^j_a &\dfn -\log\left(\condexpvs{e^{-\gamma_j \left(\wh{\We}^j_{a,b} + \E^j_b\right)}}{\F^B_a}\right) = \lambda_j + \check{\lambda}_j + 1_{j>0}\left(\frac{1}{2}S_a'M_j S_a - S_a'V_j\right)\\
         &\qquad + \check{\E}_a\left(\mu(X_a,b-a),V,\ol{V}\right).
    \end{split}
\end{equation*}
The constant $\check{\lambda}_j$ is explicitly identified in \eqref{E:cont_lambda_def} below, and does not depend upon the vectors $V,\cbra{V_j}_{j=1}^J,\Psi$.  The function $\check{\E}_a(y,v,\ol{v})$ is linear-quadratic in $(y,v,\ol{v})$ with coefficients explicitly identified in \eqref{E:cont_lq_coeff} below. Additionally the Hessian with respect to $y$ is strictly positive definite.
\end{prop}

\begin{rem}\label{R:cont_time}

As $\mu(x,b-a)$ is affine in $x$, $\check{\E}_a$ is linear-quadratic in $(x,V,\ol{V},\textrm{Const})$ and strictly convex in $x$. Also,  the time $a$ certainty equivalent is very similar in structure to the time $b$ random endowment of \eqref{E:cont_rand_endow}, especially with respect to the insider positions in the risky asset. Also, when we use this  result below, the $\{V_j\}, V$ will be affine in the relevant signal realizations $\cbra{g_j},\cbra{h_j}$.   Therefore, prices are jointly linear $x,\cbra{h_j}$, and certainty equivalents are jointly linear quadratic in $x,\cbra{g_j}, \cbra{h_j}$.

\end{rem}

\subsection{Single Period}\label{SS:sp}  We now provide an analogous single period result.  There is a random variable $H\sim N(0,(P^H)^{-1})$. The risky asset terminal price is $S = \pvct + \pmtx H$ for a vector $\pvct$ and matrix $\pmtx$. Agent $j$ has risk adjusted random endowment
\begin{equation*}
    \gamma_j \E^j = 1_{j>0}\left(\frac{1}{2}S' M_j S - S' V_j\right) + \frac{1}{2}H'M H - H'V + \lambda_j,
\end{equation*}
where $M_j, M\in\mathbb{S}^d$, $V_j, V \in\reals^d$ and $\lambda_j\in\reals$. We assume
\begin{ass}\label{A:lq_sp}
$1_{j>0}\pmtx'M_j\pmtx + M + P^H \in\sdpos$ for $j=0,....,J$.
\end{ass}
For a to-be-determined deterministic price $p$, agent $j$ has optimal investment problem
\begin{equation*}
    \inf_{\pi\in\reals^d} \expvs{e^{-\gamma_j \pi'(Y-p) -\gamma_j\E^j}},
\end{equation*}
and we say a single period equilibrium holds for $p$ if there exist optimal policies $\cbra{\wh{\pi}^j}$ such that $\Psi = \sum_{j=0}^J \omega_j \wh{\pi}^j$. To identify $p$, as in the continuous time case define
\begin{equation*}
    \ol{M} \dfn \ol{\gamma}\pmtx'\left(\sum_{j=1}^J \alpha_j M_j\right);\qquad \ol{V} \dfn \ol{\gamma}\pmtx'\left(\sum_{j=1}^J \alpha_j V_j - \Psi - \left(\sum_{j=1}^J \alpha_j M_j\right)\pvct\right).
\end{equation*}

\begin{prop}\label{P:abstract_sp}
There is a single period equilibrium with price
\begin{equation*}
    p = \pvct + \pmtx\left(P^H + M + \ol{M}\right)^{-1}\left(V + \ol{V}\right).
\end{equation*}
For $j=0,...,J$ the optimal strategy for agent $j$ is
\begin{equation*}
\gamma_j \wh{\pi}^j = 1_{j>0}\left(V_j - M_j p\right) + \ol{\gamma}\left(\Psi + \sum_{k=1}^J \alpha_k\left(M_k p - V_k\right)\right),
\end{equation*}
and the certainty equivalent is
\begin{equation*}
    \begin{split}
        \gamma_j \E^{j,-} &\dfn -\log\left(\expvs{e^{-\gamma_j (\wh{\pi}^j)'(Y-p) -\gamma_j \E^j}}\right) = \lambda_j + \check{\lambda_j} +  1_{j>0}\left(\frac{1}{2}p'M_j p - p'V_j\right)+ \E^{-}(V,\ol{V}).
    \end{split}
\end{equation*}
The constant $\check{\lambda}_j$ is explicitly identified in \eqref{E:sp_lambda_def} below, and does not depend upon the vectors $V,\cbra{V_j}_{j=1}^J,\Psi$. The function $\E^{-}(v,\ol{v})$ is linear-quadratic in $(v,\ol{v})$ with coefficients explicitly identified in \eqref{E:sp_lq_coeff} below.
\end{prop}

\begin{rem}\label{R:lin_quad_1}
Note the similarity between the prices, optimal strategies, and endowments in Propositions \ref{P:abstract_cont_time} and \ref{P:abstract_sp}. Also, as in continuous time, provided the initial certainty equivalent is linear-quadratic in the factor process and the signals, the post-equilibrium certainty equivalent is also linear quadratic in the factor process and (remaining) signals.
\end{rem}

\section{Conditional Densities and Connections with Signal Realizations}\label{S:EqEst}

Recall the market signals in \eqref{E:H_def} and the filtrations  in \eqref{E:filtnk_def}. To use the common information equilibrium results in Section \ref{S:ExpCaraEq} we must connect conditional expectations with respect to $\F^{n,k}_{\cdot}$ to those with respect to $\F^B_{\cdot}$.  The purpose of this section is to introduce notation and provide the necessary formulas for making the connection.  This section is technical, but necessary for what follows.  The proof of Lemma \ref{L:Lambda_ident} is given in Appendix \ref{AS:EqEst}.

Fix $t\in [0,1]$, recall \eqref{E:X_trans_density}, and for $M\in\symmat^d$ such that $P(1-t)+M\in\sdpos$, $V\in\reals^d$ and $x\in\reals^d$ define (here, $\tau = 1-t$)
\begin{equation}\label{E:Lambda_def}
    \begin{split}
        \Lambda(t,x;M,V) \dfn& \log\left(\condexpvs{e^{-\frac{1}{2}X_{1}'M X_{1} + X_{1}'V}}{X_t = x}\right),\\
        =& -\frac{1}{2}\log\left(\left| 1_d + \Sigma(\tau)M\right|\right) - \frac{1}{2}\mu(x,\tau)'P(\tau)\mu(x,\tau)\\
        &\qquad\qquad + \frac{1}{2}\left(P(\tau)\mu(x,\tau) + V\right)'\left(P(\tau)+M\right)^{-1} \left(P(\tau)\mu(x,\tau) + V\right).
    \end{split}
\end{equation}
Next, for $n\leq N$ define
\begin{equation}\label{E:N_matrices_def}
\ol{h}_n \dfn (h_1,...,h_n);\qquad M^n \dfn \sum_{m=1}^n E_n;\qquad V^n(\ol{h}_n) \dfn \sum_{m=1}^n E_n h_n.
\end{equation}
as well as the functions
\begin{equation}\label{E:ell_def_0}
\ell_n(t,x,\ol{h}_n) \dfn \Lambda\left(t,x; M^n, V^n(\ol{h}_n)\right);\qquad p^{n,0}(t,x,\ol{h}_n) \dfn \frac{e^{\ell_n(t,x,\ol{h}_n)}}{e^{\ell_n(0,X_0,\ol{h}_n)}}.
\end{equation}
Similarly, for $1\leq k\leq n$ define
\begin{equation}\label{E:ell_def_k}
\begin{split}
\ol{h}_n^{-k} &\dfn (h_1,...,h_{k-1},h_{k+1},...,h_n),\\
\ell_{n,k}(t,x,\ol{h}^{-k}_n,g_k) &\dfn \Lambda\left(t,x;C_k-E_k + M^n,  C_k g_k - E_k h_k  + V(\ol{h}_n)\right),\\
p^{n,k}(t,x,\ol{h}^{-k}_n,g_k) &\dfn \frac{e^{\ell_{n,k}\left(t,x,\ol{h}^{-k}_n,g_k\right)}}{e^{\ell_{n,k}\left(0,X_0,\ol{h}_n^{-k},g_k\right)}}.
\end{split}
\end{equation}
With this notation, we verify the Jacod  conditions in the following lemma.

\begin{lem}\label{L:Lambda_ident} For $0\leq t \leq 1$ and $n\leq N$
\begin{equation}\label{E:Lambda_ident}
\begin{split}
\frac{\condprobs{\ol{H}_n \in d\ol{h}_n}{\F^B_t}}{\prob\bra{\ol{H}_n\in d\ol{h}_n}} &= p^{n,0}(t,X_t,\ol{h}_n).
\end{split}
\end{equation}
For $1\leq k\leq n$
\begin{equation}\label{E:Lambda_ident_2}
\begin{split}
&\frac{\condprobs{G_k \in dg_k}{\F^B_t\vee\sigma(\ol{H}_n)}}{\condprobs{G_k \in dg_k}{\sigma(\ol{H}_n)}}  = \frac{p^{n,k}(t,X_t,\ol{H}^{-k}_n,g_k)}{p^{n,0}(t,X_t,\ol{H}_n)}.
\end{split}
\end{equation}
\end{lem}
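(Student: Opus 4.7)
The plan is to establish both identities by applying the Markov property of $X$ twice---first reducing conditioning on $\F^B_t$ to conditioning on $X_t$, then conditioning further on $X_1$---thereby reducing the problem to an explicit Gaussian integral over $X_1$. For \eqref{E:Lambda_ident}, given $X_1 = x_1$, the relation \eqref{E:H_def2} together with the independence assumptions \eqref{E:dispersions} imply that the $\{H_i\}_{i=1}^n$ are independent $N(x_1, E_i^{-1})$ vectors. Writing out the product Gaussian density and grouping the $x_1$-quadratic, cross, and pure-$\ol{h}_n$ terms gives
\begin{equation*}
    \condprobs{\ol{H}_n\in d\ol{h}_n}{\F^B_t} = C(\ol{h}_n)\,\condexpvs{e^{-\frac{1}{2}X_1'MX_1 + X_1'V}}{X_t}\,d\ol{h}_n,
\end{equation*}
where $M = \sum_m E_m$, $V = \sum_m E_m h_m$, and $C(\ol{h}_n)$ collects the Gaussian normalizers together with the $\ol{h}_n$-only exponential $e^{-\frac{1}{2}\sum_m h_m' E_m h_m}$. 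By definition \eqref{E:Lambda_def}, the inner expectation is $e^{\Lambda(t,X_t;M,V)} = e^{\ell_n(t,X_t,\ol{h}_n)}$. Applying the same argument at $t = 0$ (where $X_0$ is deterministic, so $\F^B_0$ is trivial and the left-hand side becomes the unconditional density) yields $\prob\bra{\ol{H}_n\in d\ol{h}_n} = C(\ol{h}_n)\,e^{\ell_n(0,X_0,\ol{h}_n)}\,d\ol{h}_n$, and the $C(\ol{h}_n)$ factor cancels in the ratio to produce $p^{n,0}(t,X_t,\ol{h}_n)$.

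For \eqref{E:Lambda_ident_2}, I would first invoke Bayes' rule to write
\begin{equation*}
    \frac{\condprobs{G_k\in dg_k}{\F^B_t\vee\sigma(\ol{H}_n)}}{\condprobs{G_k\in dg_k}{\sigma(\ol{H}_n)}} = \frac{\condprobs{G_k\in dg_k,\,\ol{H}_n\in d\ol{h}_n}{\F^B_t}}{\prob\bra{G_k\in dg_k,\,\ol{H}_n\in d\ol{h}_n}} \cdot \frac{\prob\bra{\ol{H}_n\in d\ol{h}_n}}{\condprobs{\ol{H}_n\in d\ol{h}_n}{\F^B_t}}.
\end{equation*}
The second factor is $1/p^{n,0}(t,X_t,\ol{H}_n)$ by Part 1, so it suffices to show the first factor equals $p^{n,k}(t,X_t,\ol{H}^{-k}_n,g_k)$. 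Again by the Markov property, I condition on $X_1 = x_1$: there $G_k\sim N(x_1,C_k^{-1})$ and $H_i\sim N(x_1,E_i^{-1})$ for $i\neq k$ are conditionally independent, while the crucial observation is that $H_k - G_k = \frac{1}{\alpha_k}C_k^{-1}Z_k$ is \emph{independent of $X_1$} (and of $\F^B_t$). Consequently, the factor corresponding to the conditional density of $H_k$ pulls entirely out of the $x_1$-integral, and the remaining $x_1$-quadratic and cross terms are governed by $M = C_k + \sum_{m\neq k}E_m$ and $V = C_k g_k + \sum_{m\neq k} E_m h_m$, so by \eqref{E:Lambda_def} the integral evaluates to $e^{\ell_{n,k}(t,X_t,\ol{h}^{-k}_n,g_k)}$. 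The analogous computation at $t = 0$ gives the same structure with $\ell_{n,k}(0,X_0,\ol{h}^{-k}_n,g_k)$ in place of $\ell_{n,k}(t,X_t,\ol{h}^{-k}_n,g_k)$, and the constants again cancel in the ratio, yielding $p^{n,k}(t,X_t,\ol{H}^{-k}_n,g_k)$.

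The primary bookkeeping challenge is tracking which multiplicative factors depend on $(\ol{h}_n,g_k)$ versus on $(t,X_t)$ and verifying that they cancel in each ratio. The argument stays clean because the product of Gaussian normalizing constants and of the pure $(\ol{h}_n,g_k)$-exponentials is identical in the numerator and denominator of each ratio (the same product densities appear at $t$ and at $0$), so explicit evaluation of these normalizers is never required, and the only time-dependent quantity is the Gaussian integral in $x_1$, which is precisely $e^{\Lambda}$.
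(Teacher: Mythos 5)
Your proof is correct and reaches the same Gaussian integral that the paper's proof does: condition on $X_1$ via the Markov property, group the $X_1$-quadratic and $X_1$-linear coefficients into $\sum_m E_m$, $\sum_m E_m h_m$ (resp.\ $C_k + \sum_{m\neq k} E_m$, $C_k g_k + \sum_{m\neq k} E_m h_m$), and note that the $(\ol{h}_n, g_k)$-only normalizing factors cancel when you take the $t$-vs.-$0$ ratio. The only organizational difference from the paper lies in your Bayes'-rule factorization of the ratio in \eqref{E:Lambda_ident_2}, which imports the $p^{n,0}$ factor from Part~1; the paper instead carries out the change of variables $h_m = x + y_m + z_m$ for $m\neq k$ and $g_k = x + y_k$, integrates out $\{y_m\}_{m\neq k}$ inline, and reads off the denominator $p_{E_k}(h_k)\,e^{\Lambda(t,X_t;\sum E_m,\sum E_m h_m)}$ directly from that computation---but the crucial step in both routes is the same observation that $H_k - G_k$ is independent of $X_1$, so its density $p_{\wh{D}_k}(h_k - g_k)$ pulls out of the $x_1$-integral and cancels in the ratio.
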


Lemma \ref{L:Lambda_ident} implies the following result, crucial for connecting conditional expectations with respect to $\F^{n,k}_{\cdot}$ with those with respect to $\F^B_{\cdot}$. Namely, for $n\leq N, 1\leq k\leq n$, $s < t < 1$ and all generic functions $\chi$,  on the set $\cbra{\ol{H}_n = \ol{h}_n, G_k = g_k}$\footnote{For two random variables $(X,Y)$, if  $\condprobs{Y\in dy}{X} = p^y(X)\times \prob\bra{Y\in dy}$ then $\condexpvs{\chi(X)}{Y} = m(Y)$ where $m(y) = \expvs{p^y(X)\chi(X)}$. Thus, $\condexpvs{\chi(X)}{Y} = m(y)$ on $\cbra{Y=y}$, and the identity is valid even if $\prob\bra{Y=y} = 0$ for all $y$.}
\begin{equation}\label{E:big_condexp_ident}
    \begin{split}
        \condexpvs{\chi(X_t, \ol{H}_n)}{\F^{n,0}_s} &= \condexpvs{\frac{p^{n,0}(t,X_t,\ol{h}_n)}{p^{n,0}(s,X_s,\ol{h}_n)} \chi(X_t,\ol{h}_n)}{\F^B_s},\\
        \condexpvs{\chi(X_t, \ol{H}_n, G_k)}{\F^{n,k}_s} &= \condexpvs{\frac{p^{n,k}(t,X_t,\ol{h}^{-k}_n,g_k)}{p^{n,k}(s,X_s,\ol{h}^{-k}_n,g_k)} \chi(X_t,\ol{h}_n,g_k)}{\F^B_s}.
    \end{split}
\end{equation}
These identities allow us to pass to the signal-realization level, establish equilibrium using Propositions \ref{P:abstract_cont_time}, \ref{P:abstract_sp}, and then lift results back up to the signal level. In the next section we carry out this plan. While the last period requires a slightly different argument, we can establish equilibrium over the jumps and all the remaining periods using a recursive argument.  As such, we start off treating the last period, and then identify the backwards recursive argument.


\section{Construction of the PCE}\label{S:pce_construct}

\subsection*{The last period $[t_{N},1]$} Here, all agents participate, and $k\leq N$ is fixed. The proof of Proposition \ref{P:eq_last_period} is in Appendix \ref{AS:last_period}. We start motivating the signal-realization equilibrium problem, recalling \eqref{E:N_matrices_def}. The uninformed agent uses $\filt^{N,0}$, and for generic functions $\chi$, \eqref{E:ell_def_0} and \eqref{E:big_condexp_ident} imply on $\cbra{\ol{H}_N = \ol{h}_N}$
\begin{equation}\label{E:U_ident_1}
    \begin{split}
        \condexpvs{\chi(X_1,\ol{H}_N)}{\F^{N,0}_{t_N}} &=e^{-\ell_N(t_N,X_{t_N},\ol{h}_N)}\condexpvs{\chi(X_1,\ol{h}_N)e^{-\frac{1}{2}X_1' M^{N} X_1 + X_1'V^N(\ol{h}_N)}}{\F^{B}_{t_N}}.
    \end{split}
\end{equation}
The $k^{th}$ insider uses $\filt^{N,k}$ and \eqref{E:ell_def_k}, \eqref{E:big_condexp_ident} imply on $\cbra{\ol{H}_N = \ol{h}_N, G_k = g_k}$
\begin{equation}\label{E:Ik_ident_1}
    \begin{split}
        &\condexpvs{\chi(X_1,\ol{H}_N,G_k)}{\F^{N,k}_{t_N}} = e^{-\ell_{N,k}(t_N,X_{t_N},\ol{h}^{-k}_N,g_k)}\\
        &\qquad \times \condexpvs{\chi(X_1,\ol{h}_N,g_k)e^{-\frac{1}{2}X_1' M^{N} X_1 + X_1'V^N(\ol{h}_N) -\frac{1}{2} X_1'(C_k-E_k)X_1 + X_1'\left(C_k g_k -E_k h_k\right)}}{\F^{B}_{t_N}}.
    \end{split}
\end{equation}

\subsubsection*{Signal Realization Level Equilibrium} \eqref{E:U_ident_1} and \eqref{E:Ik_ident_1} suggest  that for fixed $\ol{h}_N, \cbra{g_k}$ (hence fixed noise trader demands $\cbra{z_k}$) we should consider the ``signal-realization equilibrium'' problem over $[t_N,1]$ where all agents have common information flow $\filt^B$ and agent $k$ has random endowment
\begin{equation*}
\gamma_k \E_k =  1_{k>0}\left(\frac{1}{2} X_1'(C_k-E_k)X_1 - X_1'\left(C_k g_k -E_k h_k\right)\right) + \frac{1}{2}X_1' M^{N} X_1 - X_1'V^N(\ol{h}_N).
\end{equation*}
This is exactly the setting of Section \ref{SS:cont} for appropriate $\pvct,\pmtx,\cbra{V_k},\cbra{M_k}$ and accounting for noise trading, outstanding supply $\Psi = \Pi - \sum_{m=1}^N z_m$.  As
\begin{equation*}
P(1-t_N) + M + \ol{M} = P(1-t_N) + \sum_{m=1}^N \left(E_m + \ol{\gamma}_N \alpha_m (C_m-E_m)\right),
\end{equation*}
is positive definite, Proposition \ref{P:abstract_cont_time} yields a complete market signal realization equilibrium over $[t_N,1]$.  The price process is $S^{N,\ol{H}_N}_t = S^N(t,X_t,\ol{h}_N)$ where
\begin{equation}\label{E:last_period_price_function}
    S^N(t,x,\ol{h}_N) \dfn \left(P(1-t) + \M_N^{N}\right)^{-1}\left(P(1-t)\mu(x,1-t) + \V_N^{N}(\ol{h}_N) - \ol{\gamma}_N \Pi\right),
\end{equation}
with
\begin{equation*}
    \begin{split}
        \M_n^N &= \sum_{m=1}^n \left(E_m + \ol{\gamma}_N \alpha_m(C_m-E_m)\right);\qquad \V_n^N(\ol{h}_n) =\sum_{m=1}^n \left(E_m + \ol{\gamma}_N \alpha_m(C_m-E_m)\right)h_m .
    \end{split}
\end{equation*}
The martingale measure $\qprob^{N,\ol{h}_N}$ has density process  $Z^{N,\ol{h}_N}_\cdot = Z^N(\cdot,X_\cdot,\ol{h}_N)$ over $[t_N,1]$ where
\begin{equation}\label{E:Nmkt_density}
    Z^N(t,x,\ol{h}_N) = \frac{\condexpvs{e^{-\frac{1}{2}X_1'\M_N^N X_1 + X_1'\left(\V_N^N(\ol{h}_N)-\ol{\gamma}_N\Pi\right)}}{\F^B_t}}{\expvs{e^{-\frac{1}{2}X_1'\M_N^N X_1 + X_1'\left(\ol{V}_N^N(\ol{h}_N)-\ol{\gamma}_N\Pi\right)}}}.
\end{equation}
Lastly, the trading strategies are $\wh{\pi}^k_t = \wh{\pi}^{N,k}\left(S^{N,\ol{h}_N}_t, \left(g_k, \ol{h}_N\right)\right)$, $k=0,...,N$ where
\begin{equation}\label{E:last_cont_strat}
\begin{split}
\wh{\pi}^{N,k}\left(s, \left(g_k,\ol{h}_N\right)\right)&=\frac{1_{k>0}}{\gamma_k}\left(C_k g_k - E_k h_k- (C_k-E_k)s\right)\\
&\qquad\qquad  + \frac{\ol{\gamma}_N}{\gamma_k}\left(\Pi + \sum_{m=1}^N \alpha_m(C_m-E_m)\left(s - h_m\right)\right),
\end{split}
\end{equation}
which is exactly as in Theorem \ref{T:main_result} as $S$ is continuous over $[t_N,1]$.

\subsubsection*{Signal Level Equilibrium} Using the signal realization level results, we now establish equilibrium for the public filtration $\filt^{N,0}$; the price process $t\to S^{N,\ol{H}_N}_t = S^N(t,X_t,\ol{H}_N)$ (which takes the form in Theorem \ref{T:main_result}); and using \eqref{E:last_cont_strat}, where the candidate optimal strategies are
\begin{equation}\label{E:last_cont_strat_sig}
\wh{\pi}^k_{\cdot} = \wh{\pi}^k\left(S^{N,\ol{H}_N}_{\cdot}, \left(1_{k>0}G_k, \ol{H}_N\right)\right).
\end{equation}
Note these strategies clear the market for all $\cbra{g_k}$ and $\ol{h}_N$, hence for $\cbra{G_k},\ol{H}_N$. First, as in the setting of Proposition \ref{P:abstract_cont_time}, for fixed $\ol{h}^N$ define the admissible strategies
\begin{equation}\label{E:lp_admit}
\begin{split}
    \A^{\ol{h}_N}_{t_N,1} &\dfn \bigg\{ \pi\in\mcp(\filt^B) \such \pi \textrm{ is } S^{N,\ol{h}_N} \textrm{ integrable over }[t_N,1],\\
     &\qquad \int_{t_N}^1 \pi_t'dS^{N,\ol{h}_N}_t \in L^1(\qprob^{N,\ol{h}_N}) \textrm{ with } \condexpv{\qprob^{N,\ol{h}_N}}{}{\int_{t_N}^1 \pi_t'dS^{N,\ol{h}_N}_t}{\F^B_{t_N}} \leq 0\bigg\}.
\end{split}
\end{equation}
Above, for a given filtration $\filt$, we have written $\mcp(\filt)$ for the $\filt$ predictable sigma field and we write $\pi\in\mcp(\filt)$ for ``$\pi$ is $\mcp(\filt)$ measurable''.

Next, with $\B(\reals^p)$ denoting the Borel sigma field over $\reals^p$ for an integer $p$, we define the admissible strategies for the uninformed agent (agent $0$) and $k^{th}$ insider (agent $k$) to be
\begin{equation*}
    \begin{split}
        \A^{N,0}_{t_N,1} &\dfn \cbra{\pi \in \mcp(\filt^B)\otimes \B(\reals^{d\times N}) \such \pi^{\ol{h}_N} \in \A^{\ol{h}_N}_{t_N,1} \textrm{ for } \textrm{ Leb a.e. } \ol{h}_N\in\reals^{d\times N} },\\
        \A^{N,k}_{t_N,1} &\dfn \cbra{\pi \in \mcp(\filt^B)\otimes \B(\reals^{d\times (1+N)}) \such \pi^{g_k,\ol{h}_N} \in \A^{\ol{h}_N}_{t_N,1} \textrm{ for } \textrm{ Leb a.e. } (g_k,\ol{h}_N)\in\reals^{d\times (1+N)} },
    \end{split}
\end{equation*}
where we have written a typical $\mcp(\filt^B)\otimes \B(\reals^p)$ measurable map $(t,\omega,y) \to \pi(t,\omega,y)$ as $\pi^{y}$.  In words, a policy is admissible at the signal level if it (is appropriately measurable and ) is admissible at the signal realization level for (almost surely) all possible realizations.

From \eqref{E:last_cont_strat} and \eqref{E:last_cont_strat_sig} it is clear that $\wh{\pi}^k \in \A^{N,k}_{t_N,1}$ for $k=0,...,N$.  Appendix \ref{AS:strat} contains additional information on $\A^{N,k}_{t_N,1}$, and in particular connects these sets to the ``standard'' set of admissible strategies, comprising of $\filt^{N,k}$ predictable, $S^{N,\ol{H}_N}$ integrable policies satisfying a budget constraint with respect to martingale measure $\qprob^{N,k}$ (which is also identified in Appendix \ref{AS:strat}) related to $\qprob^{N,\ol{h}_n}$.  As this connection is both technical, and modulo notation, identical to that presented in \cite{detemple2022dynamic}, we postpone it until Appendix \ref{AS:strat}.

With these preparatory results,  over $[t_N,1]$ the optimal investment problems in \eqref{E:vf_def} specify to
\begin{equation}\label{E:vf_def_lp}
    \begin{split}
        &\inf_{\pi\in \A^{N,0}_{t_N,1}} \condexpvs{e^{-\gamma_0 \We^{\pi}_{t_N,1}}}{\F^m_{t_N}};\qquad \inf_{\pi\in \A^{N,k}_{t_N,1}} \condexpvs{e^{-\gamma_k \We^{\pi}_{t_{N},1}}}{\F^{I_k}_{t_N}},
    \end{split}
\end{equation}
and we claim

\begin{prop}\label{P:eq_last_period}
$(S^N,\filt^{N,0})$ is a PCE, and the $(S^N,\filt^{N,0})$, $\cbra{(S^N,\filt^{N,k}}_{k=1}^N$ markets are complete over $[t_N,1]$.
\end{prop}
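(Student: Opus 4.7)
The plan is to lift the signal-realization equilibrium from Proposition \ref{P:abstract_cont_time} (applied with the parameters identified just before \eqref{E:last_period_price_function}) up to the signal level using the martingale-preserving measures $\tprob^N$ and $\tprob^{N,k}$. First I would confirm that $\qprob^{N,0}$ from \eqref{E:U_mm} is an equivalent local martingale measure for $S^N$ in $\filt^{N,0}$ on $[t_N,1]$, and analogously $\qprob^{N,k}$ from \eqref{E:I_mm} in $\filt^{N,k}$. This follows by combining (i) the Jacod conditions in Lemma \ref{L:Lambda_ident}, which imply that $\tprob^N$ (respectively $\tprob^{N,k}$) sends $\filt^B$-martingales to $\filt^{N,0}$-martingales (respectively $\filt^{N,0}$-martingales to $\filt^{N,k}$-martingales), together with the density processes $1/p^{N,0}(\cdot,X_\cdot,\ol{H}_N)$ and $p^{N,0}/p^{N,k}$; and (ii) the fact that $S^N(\cdot,X_\cdot,\ol{h}_N)$ is a $(\qprob^{N,\ol{h}_N},\filt^B)$-martingale by its very construction in \eqref{E:last_period_price_function}--\eqref{E:Nmkt_density}.

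Second, I would establish market completeness. At each fixed $\ol{h}_N$, the $(S^N,\filt^B)$ market is complete on $[t_N,1]$ by Proposition \ref{P:abstract_cont_time}. Since $\filt^{N,0}$ is the Jacod-admissible initial enlargement of $\filt^B$ by $\ol{H}_N$, conditioning on $\ol{H}_N = \ol{h}_N$ reduces any $\filt^{N,0}_1$ claim to a parameterized family of $\filt^B_1$ claims, each replicable by the explicit linear-quadratic hedge at the signal-realization level. A measurable selection in $\ol{h}_N$, combined with Lemma \ref{L:acc}, yields a replicator in $\A^{N,0}_{t_N,1}$. The identical argument, replacing $\filt^B$ by $\filt^{N,0}$ and enlarging by $G_k$, gives completeness of $(S^N,\filt^{N,k})$.

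Third, I would extract candidate optimizers for \eqref{E:vf_def_lp}. Proposition \ref{P:abstract_cont_time} provides explicit signal-realization optimizers $\wh{\pi}^{0,\ol{h}_N}$ and $\wh{\pi}^{k,\ol{h}_N,g_k}$ \emph{after} the translations that absorb the $X_1$-linear components of the endowments into the static private positions $\tfrac{1}{\gamma_0}\sum_m E_m h_m$ and $\tfrac{1}{\gamma_k}(C_k g_k + \sum_{m\neq k} E_m h_m)$, respectively. These depend measurably on the signal arguments, and Lemma \ref{L:acc} then places the random-signal substitutions $\wh{\pi}^0(\ol{H}_N)$ and $\wh{\pi}^k(\ol{H}_N^{-k},G_k)$ in the respective signal-level acceptable sets. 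To confirm signal-level optimality I would combine \eqref{E:big_condexp_ident} with \eqref{E:U_ident_1}--\eqref{E:Ik_ident_1}: on $\cbra{\ol{H}_N = \ol{h}_N}$ (and $\cbra{G_k=g_k}$) the signal-level objective factors as a strategy-independent multiple (depending only on $X_{t_N}$ and the signal realizations) of the signal-realization objective after translation. Since any competitor $\pi \in \A^{N,0}_{t_N,1}$ restricts to an element of $\wt{\A}_{t_N,1}$ for a.e.\ $\ol{h}_N$ by Lemma \ref{L:acc}, the signal-realization optimality propagates pointwise and hence a.s.

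Finally, clearing at the signal level follows by plugging the random signals into the signal-realization clearing condition $\Psi = \sum_{j}\omega_j \wh{\pi}^{j,\cdot}$ with the adjusted supply $\Psi$ displayed in Section \ref{S:last_period}: the static private positions absorbed in the translations and the noise term $\sum_k Z_k$ reassemble on the demand side to recover the clearing of Definition \ref{D:pce} on $[t_N,1]$, $\prob\times\Leb_{[t_N,1]}$ a.s. The main obstacle I expect is the careful interplay between the strategy-translation step and Lemma \ref{L:acc}: specifically, checking that the pointwise-in-$\ol{h}_N$ optimizer actually lies in $\A^{N,0}_{t_N,1}$ (rather than only in a collection of signal-realization admissible sets), and that the conditional $L^1$ condition encoded by $\Delta_{\We^{\pi}_{t_N,1}}$ is preserved under the change of measure from $\prob$ to $\qprob^{N,0}$ (respectively $\qprob^{N,k}$) and under substitution of the random signals.
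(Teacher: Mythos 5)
Your proposal is correct and follows essentially the same argument as the paper: lift the signal-realization equilibrium from Proposition \ref{P:abstract_cont_time} to the signal level via the martingale-preserving measures and Lemma \ref{L:acc}, propagate optimality pointwise through the conditional-expectation identities \eqref{E:U_ident_1}--\eqref{E:Ik_ident_1}, and inherit clearing and completeness from the realization level. The only difference is organizational: you verify the ELMM property and completeness as separate preliminary steps, while the paper deduces them briefly at the end (completeness directly from the non-degenerate volatility in \eqref{E:last_period_price_function}) and notes the preservation of the $L^1$ acceptability condition you flag is exactly what Lemma \ref{L:acc} already settles.
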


Lastly, we compute the effective endowment at $t_N$ implied by optimal trading over $[t_N,1]$ for agents $k\leq N$, with an eye towards Proposition \ref{P:abstract_sp}. To this end, write $\wh{\We}^k$ as the optimal gains process for agent $k$ and define
\begin{equation*}
\begin{split}
\wt{\E}^{k}_{t_N} &\dfn  -\log\left(\condexpvs{e^{-\gamma_{k} \wh{\We}^{k}_{t_N,T}}}{\F^{N,k}_{t_N}}\right).
\end{split}
\end{equation*}
For the uninformed agent we have, using Proposition \ref{P:abstract_cont_time}, \eqref{E:U_ident_1} and well as  \eqref{E:intuit_meas} and \eqref{E:intuit_relation} below
\begin{equation}\label{E:checkE_tN_def_U}
    \begin{split}
    \wt{\E}^0_{t_N} &= \ell_N(t_N,X_{t_N},\ol{H}_N) + \check{\lambda}^N_0+ \check{\E}^0_N(X_{t_N},\ol{H}_N),
    \end{split}
\end{equation}
where $\check{\E}^0_N(x,\ol{h}_N)$  (see $\E_a$ from Proposition \ref{P:abstract_cont_time}) is linear-quadratic in $(x,\ol{h}_N)$, and takes the form in Proposition \ref{P:abstract_sp}\footnote{In Proposition \ref{P:abstract_sp}, $H=H_N$. Also, note that  $\check{\E}^0_N(X_{t_N},\ol{H}_N)$ has a component, embedded into $\lambda_j$, that is linear-quadratic in $(X_{t_N},\ol{H}_{N-1})$.}. Similarly, for the $k^{th}$ insider
\begin{equation}\label{E:checkE_tN_def}
\begin{split}
\wt{\E}^k_{t_N} &= \ell_{N,k}(t_N,X_{t_N},\ol{H}_N^{-k},G_k) +\check{\lambda}^N_k +\frac{1}{2}(S^{N,\ol{H}_N}_{t_N})'(C_k-E_k)S^{N,\ol{H}_N}_{t_N}\\
&\qquad\qquad  - (S^N(t_N,X_{t_N},\ol{H}_N))'(C_k G_k - E_k H_k) + \check{\E}^0_{t_N}(X_{t_N},\ol{H}_N),
\end{split}
\end{equation}
which is also jointly linear quadratic in $(x,\ol{h}_N)$. Lastly, Assumption \ref{A:lq_sp} holds, as will be shown for all $n=2,...,N$ and $k\leq n-1$, in Lemma \ref{L:sp_ass_verify}.

\begin{rem}\label{R:why_ou}
Equilibrium over $[t_n,1]$ could be established following the arguments of \cite{MR4192554} (suitably altering the noise trader convention there-in), and one may allow $X$ to be a rather general time-homogenous diffusion.  However, to establish equilibrium over the remaining $N-1$ intervals of continuous trading (not to mention the jumps), we exploit the OU structure, as it yields effective endowments that are explicitly replicable (i.e. without appealing to martingale representation). For general processes $X$ this is not the case, and one must assume, then verify, market completeness. Unfortunately, it is very difficult to do this, because known results (c.f. \cite{anderson2008equilibrium, MR3131287, MR3590708}) require non-degeneracy and regularity conditions on the terminal payoff (here $S_{(t_n)-}, n=2,...,N$) and effective random endowments. While in those papers, these values were explicit functions of the factor process, in our setting they are implicit equilibrium quantities.  For this reason, we consider the OU process and leave extensions to general processes for future research.
\end{rem}


\subsection*{Equilibrium over a jump }

We now establish equilibrium over a jump at $t_n, n=2,...,N$. The price immediately after the jump is (the previously obtained) $S^{n,\ol{H}_n}_{t_n} = S^n(t_n,X_{t_n},\ol{H}_n)$, and we assume $S^{n}$ is affine in $X_{t_n}$ and $\ol{H}_n$. Next, we set $p^n=p^n(X_{t_n},\ol{H}_{n-1})$ as a candidate price  immediately before the jump (the left limit of the equilibrium price as $t\uparrow t_n$).  We assume the participating agents $k=0,...,n-1$, have endowments from trading over $[t_n,T]$ consistent with \eqref{E:checkE_tN_def_U}, \eqref{E:checkE_tN_def} in that
\begin{equation}\label{E:checkE_tn_def}
\begin{split}
\wt{\E}^{0}_{t_n} &= \ell_n(t_n,X_{t_n},\ol{H}_n) + \check{\lambda}^n_0 + \check{\E}^0_n(X_{t_n},\ol{H}_n),\\
\wt{\E}^k_{t_n} &= \ell_{n,k}(t_n,X_{t_n},\ol{H}_n^{-k},G_k) +\check{\lambda}^n_k +\frac{1}{2}(S^{n,\ol{H}_n}_{t_n})'(C_k-E_k)S^{n,\ol{H}_n}_{t_n}\\
&\qquad\qquad  - (S^n(t_n,X_{t_n},\ol{H}_n))'(C_k G_k - E_k H_k) + \check{\E}^0_{n}(X_{t_n},\ol{H}_n).
\end{split}
\end{equation}
Furthermore, we assume $\check{\E}^0_{t_n}$ is linear-quadratic in $(x,\ol{h}_n)$ and in preparation for use of Proposition \ref{P:abstract_sp} we write
\begin{equation}\label{E:jump_lambda_form}
\check{\E}^0_n(x,\ol{h}_n) = \lambda_{n}(x,\ol{h}_{n-1}) + \frac{1}{2}h_n'\check{M}_n h_n - h_n'\check{V}_n(x,\ol{h}_{n-1}),
\end{equation}
where $\check{M}_n$ satisfies Assumption \ref{A:lq_sp}. The uninformed agent prior to the jump has information set $\F^{n-1,0}_{t_n}$, and solves
\begin{equation*}
\inf_{\pi\in\F^{n-1,0}_{t_{n}}} \condexpvs{e^{-\gamma_0 \pi'(S^{n,\ol{H}_n}_{t_n}-p^n) - \wt{\E}^0_{t_n}}}{\F^{n-1,0}_{t_{n}}}.
\end{equation*}
The only random variable not known given $\F^{n-1,0}_{t_n}$ is $H_n$, and our goal is to cast this minimization problem, on the set $\cbra{(X_{t_n},\ol{H}_{n-1}) = (x,\ol{h})}$, as one where we take regular expectations with respect to certain function of $x,\ol{h}$ and a random variable $H\sim N(0,E_n^{-1})$. This will enable us to invoke Proposition \ref{P:abstract_sp}.  To do this, we note (using Lemma \ref{L:Lambda_ident}) that $H_n$ has $\F^{n-1,0}_{t_n}$ conditional probability density function (pdf)
\begin{equation*}
\frac{e^{\ell_n(t_n,X_{t_n}, (\ol{H}_{n-1},h_n))}}{e^{\ell_{n-1}(t_n,X_{t_n},\ol{H}_{n-1})}}\times p_{E_n}(h_n).
\end{equation*}
where $p_C$ is the $N(0,C^{-1})$ pdf. Thus, fixing the set $\cbra{(X_{t_n},\ol{H}_{n-1}) = (x,\ol{h})}$, suppressing $(t_n,x,\ol{h})$ from all function arguments, and appealing to \eqref{E:checkE_tn_def}
\begin{equation*}
    \begin{split}
     \condexpvs{e^{-\gamma_0 \pi'(S^{n,\ol{H}_n}_{t_n}-p^n) - \wt{\E}^0_{t_n}}}{\F^{n-1,0}_{t_{n}}} &= \int_{h_n} e^{-\gamma_0 \pi'(S^n(h_n)-p^n) -\check{\lambda}^n_0 -\check{\E}^0_n(h_n) - \ell_{n-1}}p_{E_n}(h_n)dh_n,\\
    &= e^{- \ell_{n-1}-\check{\lambda}^n_0}\expvs{e^{-\gamma_0 \pi'(S^n(H)-p^n)-\check{\E}^0_n(H) }};\quad H\sim N(0,E_n^{-1}).
    \end{split}
\end{equation*}
This leaves the minimization problem
\begin{equation*}
\begin{split}
&\inf_{\pi\in\reals^d} \expvs{e^{-\gamma_0 \pi'(S^n(H)-p^n)-\check{\E}^0_n(H) }}.
\end{split}
\end{equation*}
We now turn to the $k^{th}$ insider, $k=1,...,n-1$ who solves
\begin{equation*}
    \inf_{\pi\in\F^{n-1,k}_{t_n}} \condexpvs{e^{-\gamma_k\pi'(S^n_{t_n}-p^n) - \wt{\E}^{k}_{t_n}}}{\F^{n-1,k}_{t_n}}.
\end{equation*}
Similarly to the uninformed, one can show that given $\F^{n-1,k}_{t_n}$, $H_n$ has the pdf
\begin{equation*}
\frac{e^{\ell_{n,k}(t_n,X_{t_n}, (\ol{H}_{n-1}^{-k},h_n),G_k)}}{e^{\ell_{n-1,k}(t_n,X_{t_n},\ol{H}_{n-1}^{-k},G_k)}} \times p_{E_n}(h_n).
\end{equation*}
Therefore, fixing $\cbra{(X_{t_n},\ol{H}_{n-1},G_k) = (x,\ol{h},g_k)}$\footnote{Note that $k\leq n-1$ implies $\ol{h}_n^{-k} = (\ol{h}_{n-1}^{-k},h_n)$. Also, we will write $h_k$ for the realization of $H_k$.}, suppressing $(t_n,x,\ol{h})$, and appealing to \eqref{E:checkE_tn_def}, we have
\begin{equation*}
    \begin{split}
        &\condexpvs{e^{-\gamma_k \pi'(S^n_{t_n}-p^n) - \wt{\E}^{k}_{t_n}}}{\F^{n-1,k}_{t_n}}\\
        &\quad = \int_{h_n} e^{-\gamma_k \pi'(S^n(h_n)-p^n)-\check{\lambda}^n_k -\frac{1}{2}S^n(h_n)'(C_k-E_k)S^n(h_n) + (C_kg_k-E_kh_k)'S^n(h_n) - \check{\E}^{0}_n(h_n) -\ell_{n-1,k}(g_k)} p_{E_n}(h_n)dh_n,\\
        &\quad = e^{-\ell_{n-1,k}(g_k)-\check{\lambda}^n_k} \expvs{e^{-\gamma_k \pi'(S^n(H)-p^n) -\frac{1}{2}S^n(H)'(C_k-E_k)S^n(H) + (C_kg_k-E_kh_k)'S^n(H) - \check{\E}^{0}_n(H)}},
    \end{split}
\end{equation*}
where again $H\sim N(0,E_n^{-1})$. This leaves the minimization problem
\begin{equation*}
\begin{split}
    \inf_{\pi\in\reals^d} \expvs{e^{-\gamma_k \pi'(S^n(H)-p^n) -\frac{1}{2}S^n(H)'(C_k-E_k)S^n(H) + (C_kg_k-E_kh_k)'S^n(H) - \check{\E}^{0}_n(H)}}.
\end{split}
\end{equation*}
The outstanding supply, accounting for the noise terms is $\Psi = \Pi - \sum_{k=1}^{n-1}z_k$. Given the form for $\check{\E}_n^0$ in \eqref{E:jump_lambda_form} and Lemma \ref{L:sp_ass_verify}, the hypotheses of Proposition \ref{P:abstract_sp} are met, and we obtain
\begin{prop}\label{P:jump}
    Equilibrium holds over the jump for $\wh{p}^n(X_{t_n},\ol{H}_{n-1})$ where $\wh{p}^n(x,\ol{h})$ is from Proposition \ref{P:abstract_sp} for the appropriate model parameters.
\end{prop}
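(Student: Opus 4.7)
The plan is to invoke Proposition \ref{P:abstract_single_period} directly on the fiber-wise problems already written out in the paragraphs preceding the statement. The reductions shown there freeze the conditioning variables $(X_{t_n},\ol{H}_{n-1})=(x,\ol{h}_{n-1})$ (and $G_k=g_k$ for insider $k$) and cast each participating agent's problem as a finite-dimensional unconditional minimization against the law of $H_n\sim N(0,E_n^{-1})$ under $\prob$. The key point is that the $\ell_n$ factor appearing in the effective endowment $\wt{\E}^0_{t_n}$ in \eqref{E:checkE_tn_def} exactly cancels the $e^{\ell_n}/e^{\ell_{n-1}}$ term in the $\F^{n-1,0}_{t_n}$-conditional density of $H_n$, and similarly $\ell_{n,k}$ cancels in the insider's problem, so both agents end up integrating against the common prior $N(0,E_n^{-1})$ for $H_n$.

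First I would read off the abstract parameters: from the linearity hypothesis on $S^n$ in $(x,\ol{h}_n)$, the map $h_n\mapsto S^n(t_n,x,(\ol{h}_{n-1},h_n))$ is affine in $h_n$, yielding matrix $M$ and vector $V$ (with the frozen $(x,\ol{h}_{n-1})$ absorbed into $V$) in the sense of Proposition \ref{P:abstract_single_period}. From the linear-quadratic hypothesis on $\check{\E}^j_n$ in $(x,\ol{h}_n)$, the restriction $h_n\mapsto\check{\E}^j_n(t_n,x,(\ol{h}_{n-1},h_n))$ is of the required form $\tfrac{1}{2}h_n'\ol{M}^j h_n + h_n'\ol{V}^j + \ol{\lambda}^j$ with agent-specific coefficients depending parametrically on $(x,\ol{h}_{n-1})$. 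Assumption \ref{A:sp_ass} with $\Sigma_H^{-1}=E_n$ is precisely the hypothesis imposed in the paragraph introducing \eqref{E:checkE_tn_def}, so it is in force for $j=0,1,\dots,n-1$.

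Next I would account for the outstanding supply $\Psi=\Pi-\sum_{k=1}^{n-1}\alpha_k C_k h_k$ (which folds in the noise demand $\sum_{k=1}^{n-1}z_k$ pointwise in the signal realizations) and apply Proposition \ref{P:abstract_single_period} to obtain, for each frozen $(x,\ol{h}_{n-1})$, an equilibrium price $\wh{p}^n(t_n,x,\ol{h}_{n-1})$ via formula \eqref{E:abstract_single_period_px}, together with optimizers $\wh{\pi}^j(x,\ol{h}_{n-1},g_k)$. Finally, lift to the signal level by defining the pre-jump equilibrium price as $\wh{p}^n(t_n,X_{t_n},\ol{H}_{n-1})$ and each agent's optimal position by composition with $(X_{t_n},\ol{H}_{n-1},G_k)$; for insider $k$ I would undo the translation $\pi\to(1/\gamma_k)C_k g_k+\pi$ used to absorb the $g_k'C_k S^n$ term in \eqref{E:I_jump_translated}, recovering the announced form $\wh{\pi}^k_t=(1/\gamma_k)(C_k G_k+\wh{\psi}^k_t)$. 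Market clearing at the signal-realization level holds pointwise in $(x,\ol{h}_{n-1},g_k)$, so it transfers immediately to $\prob$-a.s.\ clearing at the signal level.

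The main obstacle is essentially bookkeeping rather than conceptual: one must carefully check that the two distinct conditional laws of $H_n$ (given $\F^{n-1,0}_{t_n}$ for the uninformed and given $\F^{n-1,k}_{t_n}$ for insider $k$) both reduce, after cancellation of the $\ell_n$ and $\ell_{n,k}$ terms from the endowments, to the \emph{same} base Gaussian $N(0,E_n^{-1})$, so that the abstract proposition (which requires a single common Gaussian $H$) genuinely applies to all participating agents simultaneously. This cancellation is exactly the computation displayed just before the statement. Once granted, the conclusion is a direct instantiation of Proposition \ref{P:abstract_single_period}, and no further analysis is required at this step.
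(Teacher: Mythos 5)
Your proposal is correct and follows essentially the same route as the paper: freeze $(X_{t_n},\ol{H}_{n-1})$, use the cancellation of the $\ell_n$ (resp.\ $\ell_{n,k}$) factor in the endowment against the conditional density of $H_n$ to reduce both agent types to unconditional expectations against the same $N(0,E_n^{-1})$, remove the insider's private term via the translation $\pi\to(1/\gamma_k)C_k g_k+\pi$, adjust supply for noise, verify Assumption~\ref{A:sp_ass} with $\Sigma_H^{-1}=E_n$, and invoke Proposition~\ref{P:abstract_single_period} fiber-wise. The paper proves the proposition exactly through the preceding in-text reduction, so there is nothing to add.
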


Similarly to \eqref{E:last_cont_strat}, \eqref{E:last_cont_strat_sig}, the optimal trading strategies over the jump, as computed in Proposition \ref{P:abstract_sp} take the form
\begin{equation*}
\begin{split}
\gamma_k \wh{\pi}^k &= 1_{k>0}\left(C_k G_k - E_k H_k - (C_k-E_k)\wh{p}^n\left(X_{t_n},\ol{H}_{n-1}\right)\right)\\
&\qquad + \ol{\gamma}_{n-1}\left(\Pi + \sum_{m=1}^{n-1} \alpha_m(C_m-E_m)\left(\wh{p}^n\left(X_{t_n},\ol{H}_{n-1}\right)-H_m\right)\right).
\end{split}
\end{equation*}
As $\wh{p}^n(X_{t_n},\ol{H}_{n-1}) = S_{(t_n)-}$ we obtain the formula in Theorem \ref{T:main_result}.  Also this formula coincides with the continuous trading, over $[t_{n-1},t_n]$, optimal strategy at $t_n$.


We finish computing the effective endowments immediately prior to the jump.  First, define
\begin{equation*}
\wt{\E}^{k,-}_{t_n} \dfn -\log\left(\condexpvs{e^{-\gamma_k\left(\wh{\pi}^k\right)'(S^n_{t_n}-\wh{p}^{n}) - \wt{\E}^k_{t_n}}}{\F^{n-1,k}_{t_n}}\right);\qquad k = 0,...,n-1.
\end{equation*}
Next, for the uninformed agent, define $\check{\E}^{0,-}_{n}$ through
\begin{equation*}
\begin{split}
\wt{\E}^{0,-}_{t_n} & = \ell_{n-1}(t_n,X_{t_n},\ol{H}_{n-1}) + \check{\lambda}^{n,-}_0 + \check{\E}^{0,-}_{n}(X_{t_n},\ol{H}_{n-1}),
\end{split}
\end{equation*}
so that $\check{\E}^{0,-}_n$ is the quantity from Proposition \ref{P:abstract_sp}, and hence $\check{\E}^{0,-}_n$ is jointly linear-quadratic in $(x,\ol{h}_{n-1})$. For the $k^{th}$ insider, $k=1,...,n-1$ according Proposition \ref{P:abstract_sp} we have
\begin{equation}\label{E:Ik_prejump_n_endow}
\begin{split}
\wt{\E}^{k,-}_{t_n} & = \ell_{n-1,k}(t_n,X_{t_n},\ol{H}^{-k}_{n-1},G_k)  +\check{\lambda}^{n,-}_k +\frac{1}{2}\wh{p}^{n}(X_{t_n},\ol{H}_{n-1})'(C_k-E_k)\wh{p}^{n}(X_{t_n},\ol{H}_{n-1})\\
&\qquad  - (C_k G_k - E_k H_k)'\wh{p}^{n}(X_{t_n},\ol{H}_{n-1})+ \check{\E}^{0,-}_{n}(X_{t_n},\ol{H}_{n-1}).
\end{split}
\end{equation}
Lastly, Assumption \ref{A:cont_time_verify} holds, as shown in Lemma \ref{L:sp_ass_verify}.


\subsection*{The period $[t_n,t_{n+1}]$ trading in $S^c$ for $n=1,...,N-1$}

We lastly consider when the participating agents $0,...,n$ trade in the continuous part of $S$ over $[t_{n},t_{n+1}]$ for $n=1,....,N-1$.  To ease the notational burden we set $a_n = t_n$,  $b_n = t_{n+1}$ and assume $k\leq n$.  The price at $(b_{n})-$ is the (previously established) $\wh{p}^{n+1}(X_{b_{n}},\ol{H}_{n})$ and is jointly linear in $(x,\ol{h}_n)$. The agents have time $(b_n)-$ endowments
\begin{equation*}
\begin{split}
\wt{\E}^{0,-}_{b_n} &= \ell_{n}(b_{n},X_{b_{n}},\ol{H}_{n}) +\check{\lambda}^{n,-}_0 +\check{\E}^{0,-}_{n+1}(X_{b_{n}},\ol{H}_{n}),\\
\wt{\E}^{k,-}_{b_n} &= \ell_{n,k}(b_{n},X_{b_{n}},\ol{H}^{-k}_{n},G_k) +\check{\lambda}^{n,-}_k + \frac{1}{2}\wh{p}^{n+1}(X_{b_{n}},\ol{H}_{n})'(C_k-E_k)\wh{p}^{n+1}(X_{b_{n}},\ol{H}_{n})\\
&\qquad   - (C_k G_k - E_k H_k)'\wh{p}^{n+1}(X_{b_{n}},\ol{H}_{n}) + \check{\E}^{0,-}_{n+1}(X_{b_{n}},\ol{H}_{n}),
\end{split}
\end{equation*}
which are linear-quadratic in $(x,\ol{h}_n)$ and such that Assumption \ref{A:cont_time_verify} holds for $\check{\E}^{0,-}_{n+1}$ and $\cbra{C_k-E_k}$. The analysis below is very similar to that over $[t_N,1]$ and hence we do not provide all the details, especially with regards to the acceptable trading sets, and lifting results from the signal realization level to the signal level.

We start with the uninformed agent. Using \eqref{E:ell_def_0} and \eqref{E:big_condexp_ident} we conclude that on $\cbra{\ol{H}_n = \ol{h}_n}$
\begin{equation*}
    \condexpvs{\chi(X_{b_{n}},\ol{H}_n)e^{-\wt{\E}^{0,-}_{b_{n}}}}{\F^{n,0}_{a_n}} = e^{-\ell_n(a_n,X_{a_n},\ol{h}_n)-\check{\lambda}^{n,-}_0}\condexpvs{\chi(X_{b_{n}},\ol{h}_n)e^{-\check{\E}^{0,-}_{n+1}(X_{b_{n}},\ol{h}_n)}}{\F^B_{a_n}}.
\end{equation*}
Therefore, the uninformed agent has signal realization optimal investment problem
\begin{equation*}
    \inf_{\pi \in \wt{\A}_{a_n,b_{n}}} \condexpvs{e^{-\gamma_0 \We^{\pi}_{a_n,b_{n}} -\check{\E}^{0,-}_{n+1}(X_{b_{n}},\ol{h}_n)}}{\F^B_{a_n}},
\end{equation*}
where the acceptable class $\wt{\A}_{a_n,b_{n}}$ is determined by the martingale measure $\qprob^{n,\ol{h}_n}$ and associated density process $Z^{n,\ol{h}_n}$ (both of which Proposition \ref{P:abstract_cont_time} will identify). For the $k^{th}$ insider, using \eqref{E:ell_def_k} and \eqref{E:big_condexp_ident} we conclude that on $\cbra{\ol{H}_n =\ol{h}_n,G_k = g_k}$
\begin{equation*}
    \begin{split}
    &\condexpvs{\chi(X_{b_{n}},\ol{H}_N,G_k)e^{-\wt{\E}^{0,k}_{b_{n}}}}{\F^{n,k}_{a_n}} = e^{-\ell_{n,k}(a_n,X_{a_n},\ol{h}^{-k}_n,g_k)-\check{\lambda}^{n,-}_k}\\
    &\qquad \times \mathbb{E}\bigg[\chi(X_{b_{n}},\ol{h}_n,g_k)e^{-\frac{1}{2} (\wh{p}^{n+1}(X_{b_{n}},\ol{h}_{n}))'(C_k-E_k)\wh{p}^{n+1}(X_{b_{n}},\ol{h}_{n})}\\
     &\qquad\qquad \times e^{(C_k g_k - E_k h_k)'\wh{p}^{n+1}(X_{b_{n}},\ol{h}_{n})-\check{\E}^{0,-}_{n+1}(X_{b_{n}},\ol{h}_n)}\ \bigg|\ \F^B_{a_n}\bigg].
    \end{split}
\end{equation*}
This yields signal realization optimal investment problem
\begin{equation*}
    \begin{split}
    \inf_{\pi \in \wt{\A}_{a_n,b_{n}}}& \bigg[e^{-\gamma_k \We^{\pi}_{a_n,b_{n}} - \frac{1}{2} (\wh{p}^{n+1}(X_{b_{n}},\ol{h}_{n}))'(C_k-E_k)\wh{p}^{n+1}(X_{b_{n}},\ol{h}_{n})}\\
     &\qquad\qquad \times e^{(C_k g_k - E_k h_k)'\wh{p}^{n+1}(X_{b_{n}},\ol{h}_{n})-\check{\E}^{0,-}_{n+1}(b_{n},X_{b_{n}},\ol{h}_n)}\ \bigg| \ \F^B_{a_n}\bigg].
    \end{split}
\end{equation*}
The adjusted outstanding supply, accounting for the noise term is $\Psi = \Pi - \sum_{k=1}^n z_k$. As the terminal price $\wh{p}^{n+1}$ is affine in $X_{t+1}$ and the modified endowments $\cbra{\check{\E}^{k,-}_{n+1}}_{k=0}^{n}$ are linear-quadratic in $(X_{t+1},\ol{H}_n)$, and Assumption \ref{A:cont_time_verify} holds, we may invoke Proposition \ref{P:abstract_cont_time} to obtain the signal realization equilibrium, and associated price process $t\to S^{n,\ol{h}_n}(t,X_t,\ol{h}_n)$ and trading strategies $\cbra{\wh{\pi}^k}$.  We then obtain the signal level equilibrium using the exact same arguments as over $[t_N,1]$, and write $S^{n,c}_t = S^{n,\ol{H}_n}(t,X_t,\ol{H}_n)$.  This yields

\begin{prop}\label{P:eq_cont}
$(S^{n,c},\filt^{n,0})$ form a PCE over $[t_n,t_{n+1}]$. The optimal trading strategies take the form in Theorem \ref{T:main_result} for the appropriate model parameters.
\end{prop}

To conclude, observe that the effective endowments of agents $k=0,...,n-1$ (who will trade prior to $t_n$) at $t_{n}$
\begin{equation*}
\begin{split}
\wt{\E}^{k}_{a_{n}} &\dfn -\log\left(\condexpvs{e^{-\gamma_k \wh{\We}^{k}_{a_n,b_{n}} - \wt{\E}^{k,-}_{b_{n}}}}{\F^{n,k}_{a_n}}\right) = -\log\left(\condexpvs{e^{-\gamma_k \wh{\We}^{k}_{a_n,1}}}{\F^{n,k}_{a_n}}\right),
\end{split}
\end{equation*}
take the form
\begin{equation*}
\begin{split}
\wt{\E}^{0}_{t_{n}} &= \ell_{n}(t_{n},X_{t_{n}},\ol{H}_{n})+\check{\lambda}^n_0 + \check{\E}^0_n(X_{t_{n}},\ol{H}_{n}),\\
\wt{\E}^{k}_{t_{n}} &= \ell_{{n},k}(t_{n},X_{t_{n}},\ol{H}^{-k}_{n},G_k) + \check{\lambda}^n_k+ \frac{1}{2}\left(S^n(t_n,X_{t_n},\ol{H}_n)\right)'(C_k-E_k)\left(S^n(t_n,X_{t_n},\ol{H}_n)\right)\\
&\qquad\ - \left(G_kC_k - E_k H_k\right)' S^n(t_n,X_{t_{n}},\ol{H}_{n}) + \check{\E}^{0}_n(X_{t_{n}},\ol{H}_{n}).
\end{split}
\end{equation*}
As before, $\check{\E}^{0}_{n}$ is linear-quadratic in $(X_{t_n},\ol{H}_n)$, and from Lemma \ref{L:sp_ass_verify} we know Assumption \ref{A:lq_sp} holds.

\subsection*{On Assumptions \ref{A:cont_time_verify} and \ref{A:lq_sp}}

To finish the proof of Theorem \ref{T:main_result}, we verify Assumption \ref{A:cont_time_verify} for $n=1,...,N$ and Assumption \ref{A:lq_sp} for $n=2,...,N$ and $k\leq n$. This lemma is proved in Appendix \ref{AS:last_period}.

\begin{lem}\label{L:sp_ass_verify}
Assumptions \ref{A:cont_time_verify} and \ref{A:lq_sp} holds for $n=1,...,N$ and $k\leq n-1$.
\end{lem}


\section{An Example with Two Signal Times}\label{S:One_jump}

Consider when there are two insiders who obtain private signals at $t_1 = 0$ and $t_2 \in (0,1)$.  According to Theorem \ref{T:main_result}, in equilibrium there is a jump at $t_2$ in the asset price (and  associated volatility, market price of risk), information flow, and optimal trading strategies. To visualize this, we perform a numerical analysis. Agent weights and risk aversions, the SDE parameters, and the supply are
\begin{equation}\label{E:parameter}
\begin{split}
\omega_j &= 1/3, \gamma_j = 3\quad j\in {0,1,2};\qquad \kappa = 1,\theta = 0,\sigma = 1,X_0 = 0;\qquad \Pi = 1.
\end{split}
\end{equation}
As for the second signal time and precisions, for the plots in Figure \ref{F:Trading_SP} we use
\begin{equation}\label{E:parameter_1}
C_1 = 1, C_2 = 2, D_1 = D_2 = 1, t_2 = 1/2,
\end{equation}
so that the second insider obtains a more precise signal.

\begin{figure}
\centering
\includegraphics[height=3cm,width=4.5cm]{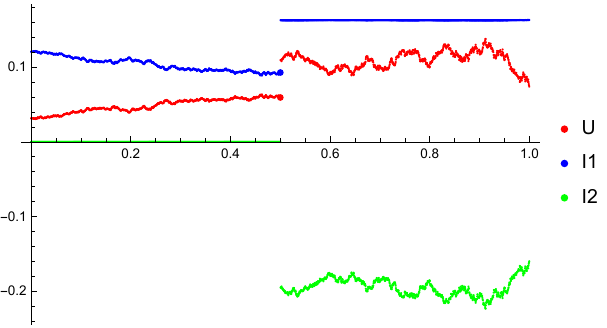}\ \includegraphics[height=3cm,width=4.5cm]{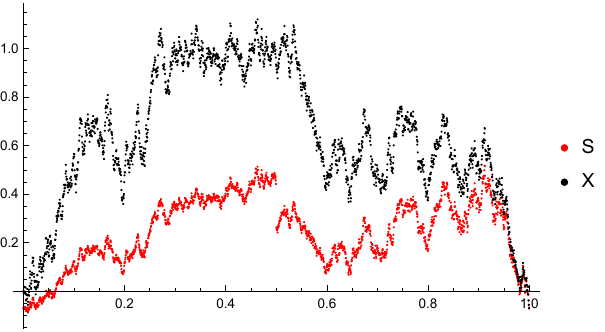}\ \includegraphics[height=3cm,width=4.5cm]{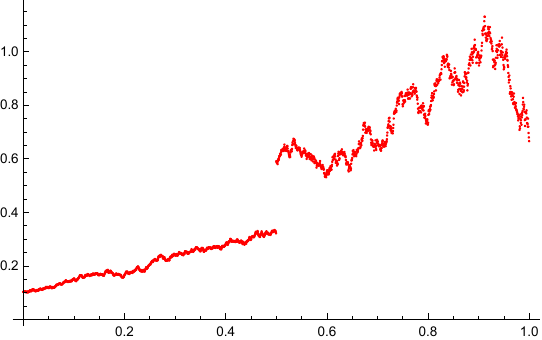}
\caption{From left to right, sample paths for the optimal trading strategies; the factor and price; and the market price of risk. Parameters are in \eqref{E:parameter} and \eqref{E:parameter_1}.}
\label{F:Trading_SP}
\end{figure}

\subsection*{Prices, strategies and risk premia} In Figure \ref{F:Trading_SP}, notice the jump in strategies for agents $0,1$ at $t_2$, but also how flat the strategy for agent $1$ is from $1/2$ to $1$.  This follows from the formulas for the strategies in Theorem \ref{T:main_result}. Indeed, in the univariate case when all agents have the same weight and risk aversion (hence common weighted risk tolerance $\alpha$), and when the noise trader precision $D_n=D$ is the same for all times, the coefficient in front of $S_t$ for $I_1$ takes the values
\begin{equation*}
\begin{split}
t\in [t_1,t_2]:\hspace{5pt} -\frac{\alpha C_1}{1+\alpha^2 D C_1},\qquad t\in (t_2,1]: \hspace{5pt} -\frac{\alpha C_1}{1+\alpha^2 D C_1}\left(1 -\frac{C_2-C_1}{C_1(1+\alpha^2 DC_2)}\right).
\end{split}
\end{equation*}
Thus, if $C_2 - C_1\approx C_1$, the position size will decrease dramatically, at least for small $\alpha$, as is the case using \eqref{E:parameter_1}.

\subsection*{Welfare} We next consider agent welfare, first remarking that a proper welfare analysis deserves a separate treatment. This is because the central question ``is it better to obtain a worse signal now or a better signal later?'' requires one to carefully model the relationship between the signal time, signal precision, and signal cost.

To motivate the issue, fix $t_2<1$ and note that by simply observing the public factor $X_{t_2}$, one obtains a signal about $X_1$ with precision $P(1-t_2) \approx (1-t_2)^{-1}$.  As such, if $t_2$ varies but the precision $C_2$ is fixed, then as $t_2 \to 1$ the signal $G_2$ is essentially useless. Similarly, motivated by the assumptions in \cite{back1992insider}, the noise trader precision $D_2$ should also be on the order $(1-t_2)^{-1}$.  Lastly, though un-modeled (because it does not affect equilibrium prices and trading strategies) there is presumably a cost to obtaining the signal, and if the second insider is receiving an ever more precise signal, this cost may become prohibitively large.  This cost will affect welfare, and it is not immediately clear how large it should be.

Taking account the above warnings, we compare ex-ante welfare from trading over $[t_2,1]$ across the agents. The idea is the second insider wants to enter at the time when she will outperform the first insider over the trading window $[t_2,1]$ to the greatest extent. This corresponds to when she is paid based upon  performance relative to her peers when she is active, and as such she ignores the time interval $[0,t_2]$\footnote{Both insiders outperform the uninformed agent over $[t_2,1]$ as they have finer information sets and larger acceptable trading strategy sets.}.

Following \cite{L-1985}, we define welfare over $[t_2,1]$ through the ex-ante certainty equivalent
\begin{equation*}
  \gamma_k \textrm{CE}^k_{0-} \dfn -\log \left( \expvs{ e^{- \gamma_k \wh{\We}^k_{t_2,1}}}\right);\qquad k= 0,1,2,
\end{equation*}
where the ``$0-$'' emphasizes the expectation is unconditional.

In Figure \ref{F:Welfare} we plot insider welfare above that of the uninformed agent (i.e. $\textrm{CE}^k_{0-}- \textrm{CE}^0_{0-}$ for $k=1,2$) as a function of the second signal time $t_2$. The parameters are given in \eqref{E:parameter}, and we also set $C_1 = 2, D_1 = 1$. The idea here is that, unless $C_2$ varies, it is worse than $C_1$. The left plot holds $C_2 = D_2=1$ constant; the middle plot holds $D_2=1$ constant while $C_2 = (1-t_2)^{-1}$; the right plot has $C_2 =D_2 = (1-t_2)^{-1}$.

While the second insider's entry time has only a small affect on the first insider's performance, the situation is much different for the second insider. Indeed, in the left plot of Figure \ref{F:Welfare}, both $C_2,D_2$ are fixed, and hence the first insider (with a better signal) outperforms.  In the middle plot, $D_2$ is fixed, but the second insider's signal is improving. Here, the second insider is motivated to wait because her increased precision improves welfare, and because the second market signal $H_2$ is not becoming precise due to non-vanishing noise. In fact, a rather perverse phenomena arises in that the insider might never enter, because she always thinks she can do better if she waits and ultimately runs out of time. Lastly, in the right plot, we allow the noise trader demand to become precise as well. Here, as the market signal precision $E_2$ is on the same order as $D_2$, the insider is motivated to enter prior to the end of the period, as her relative performance declines after a certain time.  Clearly, this simple example motivates a thorough welfare analysis, as absent other criteria one can obtain almost any conclusion one wants.

\begin{figure}
\centering
\includegraphics[height=3cm,width=4.5cm]{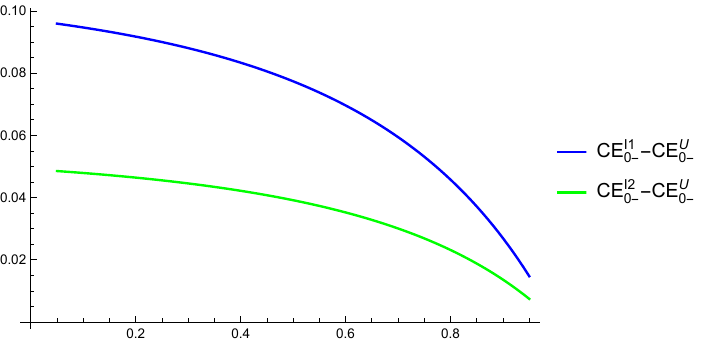}\ \includegraphics[height=3cm,width=4.5cm]{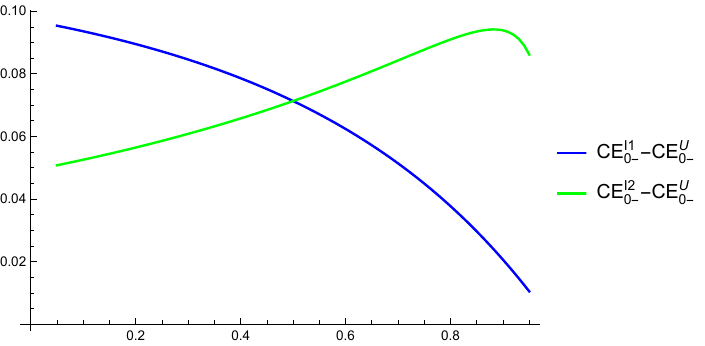}\ \includegraphics[height=3cm,width=4.5cm]{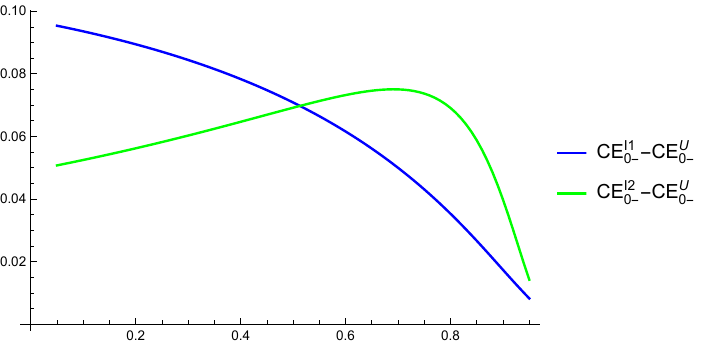}
\caption{From left to right, insider welfare above the uninformed over $[t_2,1]$ as a function of the insider's arrival time $t_2$. Parameters are in \eqref{E:parameter} along with $C_1=2,D_1=1$. The left plot has $C_2=D_2 = 1$; the middle plot has $C_2 = (1-t_2)^{-1}, D_2 = 1$; the right plot has $C_2 = D_2 = (1-t_2)^{-1}$.}
\label{F:Welfare}
\end{figure}


\section{The Large Number of Insiders Limit}\label{S:CJ}

We now consider when there are a large number of agents with private information entering the market throughout time.  Our goal is to prove that, under appropriate scaling, the limiting public filtration is $\filt^m = \filt^{B,J}$ where $J_\cdot = X_1 + Y_\cdot$ with $Y$ an independent Gaussian process.  This in turn may simplify computation of equilibrium quantities (already for $N=2$ signals, welfare computations are quite complex). Additionally, this would show the market filtration coincides with that considered in \cite{MR2213260}. However, while therein the filtration was exogenously assumed to be that of an insider who observes a private signal flow, here we show this type of filtration \emph{endogenously} arises as the \emph{public} filtration when there are a large number of insiders entering the market throughout time. Proofs of all results in this section are given in Appendix \ref{AS:CJ}.

Throughout, we make the following assumptions on the $N^{th}$ market, where the Brownian motions $B^Y$ and $B^Z$ below are independent of each other, as well as the Brownian motion $B$ driving $X$.  Additionally, for all the functions below, we require continuity at $0$ because the first insider's signal is always at time $0$. As there is no signal at $1$ we do not necessarily require continuity at $1$. In fact, as discussed at the end of this section, delicate issues arise at $t=1$.  Lastly, we define a typical dyadic rational in $[0,1)$ as
\begin{equation}\label{E:dyadic}
r^N_n \dfn \frac{n-1}{2^N};\qquad N=1,2,\dots;\quad n=1,\dots, 2^N.
\end{equation}

\begin{ass}\label{A:converge}\text{}
\begin{enumerate}[(1)]
\item (Signal times) There are $2^N$ insiders, with private signal times $t^N_n = \tau(r^N_n), n=1,...,2^N$. $\tau \in C[0,1]\cap C^1[0,1)$ is strictly increasing with $\tau(0) = 0$ and $\tau(1) = 1$.
\item (Risk aversion) The uninformed agent has risk aversion $\gamma_0$ independent of $N$.  The $n^{th}$ insider has risk aversion $\gamma^N_n = \gamma(t^N_n)$. $\gamma \in C[0,1)$ is strictly positive.
\item (Insider signal precision) The $n^{th}$ insider has precision matrix  $C^N_n = p(t^N_n) C_I$ where $C_I\in\sdpos$ is fixed. $p\in C[0,1)$ is a strictly positive  scalar function. To enforce this we assume the insider signal dispersions $Y^N_n$ take the form
    \begin{equation*}
    Y^N_n = p\left(t^N_n\right)^{-1/2} C_I^{-1/2}\left(B^{Y}_n - B^{Y}_{n-1}\right).
    \end{equation*}
\item (Noise trading) The $n^{th}$ incremental noise trader demand has precision $D^N_n = (1/(t^N_{n+1}-t^N_{n}))D_Z$ where $D_Z\in\sdpos$. To enforce this, we assume the noise trader demands $Z^N_n$ take the form
\begin{equation*}
    Z^{N}_n = D_Z^{-1/2}\int_{r^N_n}^{r^N_{n+1}} \sqrt{\dot{\tau}(u)}dB^Z_u,
\end{equation*}

\item (Agent weights) The weight of the uninformed agent $\omega_0$ is independent of $N$. Insiders $n=2,...,2^N$ have weights $\omega^N_{n} = 2^{-N}\omega(t^N_n)$, and the first insider has weight
\begin{equation*}
    \omega^N_1 = 1 - \omega_0 - \frac{1}{2^N}\sum_{n=2}^{2^N} \omega(t^N_n).
\end{equation*}
$\omega \in C[0,1)$ is strictly positive and such that $\int_0^1 \omega(\tau(u))du < 1-\omega_0$. Thus, for $N$ large enough
\begin{equation}\label{E:omega_n1_large}
    \omega^N_1 > 0;\qquad \lim_{N} \omega^N_1 = \omega_1 = 1 -\omega_0 - \int_0^1 \omega(\tau(u))du.
\end{equation}
\end{enumerate}
\end{ass}

\begin{rem} Let us discuss Assumption \ref{A:converge}, which is in force throughout this section. The function $\tau$ in $(1)$ allows for flexibility in modeling when insiders enter the market, as it may not be realistic to assume insiders arrive uniformly in time. We require additional smoothness on $\tau$ as well as $\dot{\tau}(0) > 0$ to rule out the first insider being ``separated'' from all the other insiders in the limit. For example, one can take $\tau$  to be concave, to reflect that more insiders enter the market sooner.

In $(3)$, it is not required, but realistically one should assume $p$ is increasing to $\infty$ as $t\to 1$. This reflects the idea that the signal becomes more precise the later it is received.

As for the weights in $(5)$,  we want to keep a fraction of the agents uninformed, even in the limit, as there should remain market participants who do not have direct access to private signals.  As for the first insider's weight, $t_0 = 0$ is when the flow of information starts, and (as will be shown below) \eqref{E:omega_n1_large} is needed to ensure that in the limit, there is an immediate impact on equilibrium quantities.  If, for example, we equally weighted all the insiders and equally spaced all signals, then as $N\to \infty$, the market signal precision matrix $E^N_1$ at time $0$ would go to $0$, leading to divergence of $H^N_1$.  However, by keeping a mass of insiders at $0$ we ensure a limiting signal at $0$.

Lastly, as shown in \eqref{E:ll_def} below, the functions $p,\omega,\gamma$ affect limiting quantities only through the risk tolerance weighted precision function $\lambda = p\omega/\gamma$, so  effectively there are only two free functions: $\lambda$ and $\tau$.
\end{rem}

\subsection{The Effective Signal Process}   In the $N^{th}$ market, assume $t \in [t^N_n,t^N_{n+1})$. The public information is $\F^{N,0}_t = \F^B_t \vee \sigma(\ol{H}^N_n)$ and using Lemma \ref{L:Lambda_ident} one can show
\begin{equation*}
\begin{split}
\frac{\condprobs{X_1\in dx}{\F^{N,0}_t}}{\condprobs{X_1\in dx}{\F^B_{t}}} &= \frac{e^{\Lambda\left(1,X_{1},\sum\limits_{m=1}^{n} E^N_m,\sum\limits_{m=1}^{n} E^N_m h^N_m \right)}}{e^{\Lambda\left(t,X_{t},\sum\limits_{m=1}^{n} E^N_m,\sum\limits_{m=1}^{n} E^N_m h^N_m \right)}}.
\end{split}
\end{equation*}
This motivates us to define the effective market precision and signal (compare with \eqref{E:N_matrices_def})
\begin{equation}\label{E:agg_Sig}
F^N_n \dfn \sum_{m=1}^n E^N_m;\qquad J^N_n \dfn \left(F^N_n\right)^{-1}\sum_{m=1}^n E^N_m H^N_m.
\end{equation}
As $\sigma(\ol{H}^N_n) = \sigma(H^N_1,...,H^N_n) = \sigma(J^N_1,...,J^N_n) = \sigma(\ol{J}^N_n)$, receiving the vector of signals $\ol{H}^N_{n}$ provides the same value, in the sense of estimating the law of $X_1$ in  as receiving the single signal $J^N_{n}$ and we can use $\ol{J}^N$ to generate the public filtration as well.  It is also advantageous to use $F^N,J^N$ because of Proposition \ref{P:agg_Sig_conv} below, which shows that $J^N$ converges to a signal process.  To state the proposition, for $t\in [0,1)$ (and setting $t^N_{2^N+1} = 1$) define the matrix and vector valued functions
\begin{equation}\label{E:FNt_JNt_def}
    \begin{split}
     F^N_t &\dfn F^{N}_{n}, \quad J^N_t \dfn J^{N}_{n}\qquad \textrm{for}\quad  t^N_n \leq t < t^N_{n+1}, n=1,...,2^N.
     \end{split}
\end{equation}
Next, define the risk-tolerance weighted precision function
\begin{equation}\label{E:ll_def}
\lambda(t) \dfn \frac{p(t)\omega(t)}{\gamma(t)};\qquad 0\leq t <  1.
\end{equation}
With this notation we have the convergence result

\begin{prop}\label{P:agg_Sig_conv} For $t \in [0,1)$ fixed
\begin{equation}\label{E:precision_signal_process}
\begin{split}
\lim_{N\to\infty} F^N_{t} &= F_t \dfn p(0)C_I + \left(\int_0^{\tau^{-1}(t)}\frac{\lambda(\tau(u))^2}{\dot{\tau}(u)}du\right)C_I D_{Z} C_I,\\
\lim_{N\to \infty} J^N_{t} &= J_t \dfn X_1 + F_{t}^{-1}\left(p(0)^{1/2} C_I^{1/2} B^Y_1 + \int_0^{\tau^{-1}(t)} \frac{\lambda(\tau(u))}{\sqrt{\dot{\tau}(u)}}C_I D_Z^{\frac{1}{2}}dB^{Z}_u\right).
\end{split}
\end{equation}
The second limit is in $L^2$.
\end{prop}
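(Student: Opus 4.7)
The plan is to expand $E^N_n$ asymptotically under the scaling of Assumption \ref{A:converge}, recognize the resulting sums as Riemann sums for the claimed integrals (stochastic or deterministic), and then assemble the pieces for $J^N_t$. From \eqref{E:H_def2} with $C^N_n = p(t^N_n)C_I$ and $D^N_n = D_Z/(t^N_n-t^N_{n-1})$, I would first write
\[
(E^N_n)^{-1} = \frac{1}{p(t^N_n)}C_I^{-1} + \frac{t^N_n - t^N_{n-1}}{(\alpha^N_n)^2\, p(t^N_n)^2}C_I^{-1}D_Z^{-1}C_I^{-1}.
\]
For $n=1$, $(\alpha^N_1)^2 \to (\omega_1/\gamma(0))^2 > 0$ by \eqref{E:omega_n1_large} while $t^N_2-t^N_1\to 0$, so the second term vanishes and $E^N_1 \to p(0)C_I$. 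For $n\geq 2$, $(\alpha^N_n)^{-2} = 2^{2N}\gamma(t^N_n)^2/\omega(t^N_n)^2$ and $t^N_n - t^N_{n-1} = 2^{-N}\dot{\tau}(r^N_n)(1+o(1))$ uniformly in $n$, so a matrix Neumann expansion of $(A+B)^{-1}$ with $B^{-1}$ of order $2^{-N}$ dominating $A=O(1)$ yields
\[
E^N_n = 2^{-N}\frac{\lambda(t^N_n)^2}{\dot{\tau}(r^N_n)}\,C_I D_Z C_I + O(2^{-2N}),\qquad n\geq 2,
\]
uniformly in $n\leq M_N(t)\dfn \max\{n: t^N_n\leq t\}$. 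Summing $F^N_t = E^N_1 + \sum_{n=2}^{M_N(t)}E^N_n$, the $n\geq 2$ piece is a left-endpoint Riemann sum in $r^N_n$ for $\int_0^{\tau^{-1}(t)}\lambda(\tau(u))^2/\dot{\tau}(u)\,du$, and continuity of $\lambda\circ\tau$ and $\dot{\tau}$ on the compact set $[0,\tau^{-1}(t)]\subset[0,1)$ delivers $F^N_t \to F_t$.

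For the second limit, I would substitute $H^N_m = X_1 + Y^N_m + (1/\alpha^N_m)(C^N_m)^{-1}Z^N_m$ into \eqref{E:agg_Sig} to obtain
\[
J^N_t = X_1 + (F^N_t)^{-1}\bigl(\Y_N + \Z_N\bigr),\ \Y_N \dfn \sum_{m=1}^{M_N(t)} E^N_m Y^N_m,\ \Z_N \dfn \sum_{m=1}^{M_N(t)}\frac{E^N_m(C^N_m)^{-1}Z^N_m}{\alpha^N_m}.
\]
For $\Y_N$, the $m=1$ contribution equals $E^N_1 p(0)^{-1/2}C_I^{-1/2}B^Y_1 \to p(0)^{1/2}C_I^{1/2}B^Y_1$ in $L^2$. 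The $m\geq 2$ summands are mutually independent with covariance $p(t^N_m)^{-1}E^N_m C_I^{-1}E^N_m = O(2^{-2N})$, so the truncated sum has total variance $O(M_N(t)\cdot 2^{-2N}) = O(2^{-N}) \to 0$ and vanishes in $L^2$. Hence $\Y_N \to p(0)^{1/2}C_I^{1/2}B^Y_1$ in $L^2$.

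For $\Z_N$, the $m=1$ summand has $L^2$ norm $O(\sqrt{t^N_2})\to 0$ since $(C^N_1)^{-1}\mathrm{Var}(Z^N_1)$ is of order $t^N_2$. For $m\geq 2$, inserting the scalings and using the identity $\gamma\lambda^2/(\omega p) = \lambda$ gives $(1/\alpha^N_m)E^N_m(C^N_m)^{-1} = (\lambda(t^N_m)/\dot{\tau}(r^N_m))C_I D_Z + o(1)$, so with $Z^N_m = D_Z^{-1/2}\int_{r^N_m}^{r^N_{m+1}}\sqrt{\dot{\tau}(u)}dB^Z_u$,
\[
\sum_{m=2}^{M_N(t)}\frac{E^N_m(C^N_m)^{-1}Z^N_m}{\alpha^N_m} = \sum_{m=2}^{M_N(t)}\frac{\lambda(t^N_m)}{\dot{\tau}(r^N_m)}C_I D_Z^{1/2}\!\int_{r^N_m}^{r^N_{m+1}}\!\sqrt{\dot{\tau}(u)}\,dB^Z_u + o_{L^2}(1).
\]
The right-hand side is a left-endpoint simple-process approximation of the It\^o integral $\int_0^{\tau^{-1}(t)}(\lambda(\tau(u))/\sqrt{\dot{\tau}(u)})C_I D_Z^{1/2}dB^Z_u$, and by It\^o isometry plus continuity of the integrand on the compact $[0,\tau^{-1}(t)]$ it converges to this integral in $L^2$. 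Combining with the deterministic $(F^N_t)^{-1}\to F_t^{-1}$ and Slutsky's lemma yields the claimed $L^2$ limit for $J^N_t$.

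The main obstacle will be making the $O(2^{-2N})$ remainders in the Neumann expansion of $E^N_n$ uniform in $n\leq M_N(t)$, because the large prefactor $1/\alpha^N_m \sim 2^N$ in $\Z_N$ can amplify errors when summed across the $O(2^N)$ indices. Uniform boundedness of $\lambda\circ\tau$, $1/\dot{\tau}$, and $p\circ\tau$ on the compact subinterval $[0,\tau^{-1}(t)]\subset [0,1)$ closes this gap; once that uniformity is secured, the It\^o-isometry and Riemann-sum steps become routine. This is also why the proposition must restrict to $t<1$: the continuity and positivity assumptions on $p,\omega,\gamma,\dot{\tau}$ only hold on $[0,1)$, and an unbounded integrand near $t=1$ would preclude the $L^2$ passage to the stochastic integral.
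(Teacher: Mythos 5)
Your proposal is correct and follows essentially the same route as the paper's proof: both expand $E^N_n$ into a leading $O(2^{-N})$ term of the form $2^{-N}\lambda^2/\dot{\tau}\,C_ID_ZC_I$ (after separating off the $n=1$ term, which dominates because $\omega^N_1$ stays bounded away from zero), recognize the $F^N_t$ sum as a Riemann sum on the compact $[0,\tau^{-1}(t)]\subset[0,1)$, and split the stochastic part into a vanishing $Y$-contribution for $n\ge 2$, a surviving $p(0)^{1/2}C_I^{1/2}B^Y_1$ from $n=1$, a vanishing $Z$-contribution from $n=1$, and a simple-process approximation of the It\^o integral from the $Z$-contributions for $n\ge 2$. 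The only cosmetic difference is the order of decomposition (you split globally into $\mathcal{Y}_N$ and $\mathcal{Z}_N$ before peeling off $m=1$, whereas the paper handles $n=1$ wholesale and then separates the $Y$ and $Z$ pieces within the $n\ge 2$ sum), and your invocation of ``Slutsky'' for the final step is a slight misnomer since what's actually used is just that a deterministic matrix factor converging to $F_t^{-1}$ preserves $L^2$ convergence; neither affects correctness.
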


Given this, we expect that in the large insider limit, on $[0,1)$ the market filtration is the right-continuous augmentation $B$ and $J$'s natural filtration.

\subsection{Convergence Analysis over $[0,1)$}

We now take the limit $N\to\infty$ and show that on $[0,1)$ the limiting market filtration converges. To ease the notational burden, define
\begin{equation*}
    \filtg^{N,n} \dfn \filt^B \vee \sigma(\ol{H}^N_n) = \filt^B \vee \sigma(\ol{J}^N_n);\qquad n=1,...,2^N;\qquad \filtg_0 \dfn \filt^{B,J},
\end{equation*}
so that $\filtg^{N,n}$ is the market filtration in the $N^{th}$ market over $[t^N_n,t^N_{n+1})$. Now, recall that our filtrations are $\prob$-augmentations and that \cite[Proposition 3.3]{MR1775229} implies $\filtg^{N,n}$ is right-continuous.  However, $\filtg_0$, as the augmented natural filtration of $B,J$, is not necessarily right-continuous.  Therefore, we define $\filtg$ as the right continuous enlargement over $[0,1)$.

Using \eqref{E:Lambda_def} and \eqref{E:ell_def_0}, for $t\in [0,1)$ we have
\begin{equation*}
\begin{split}
\nabla_x \ell^N_n(t,X_t,\ol{h}^N_n) &=e^{-(1-t)\kappa'}P(1-t)\left(P(1-t)+F^N_n\right)^{-1}F^N_n\left(j^N_n - \mu(X_t,1-t)\right),
\end{split}
\end{equation*}
where $j^N_n = (F^N_n)^{-1}\sum_{m=1}^n E^N_m h^N_m$. Therefore, define
\begin{equation}\label{E:theta_Nn_def}
\begin{split}
\theta^{N,n}_t(j^N_n) &\dfn \sigma'e^{-(1-t)\kappa'}P(1-t)\left(P(1-t)+F^N_n\right)^{-1}F^N_n\left(j^N_n - \mu(X_t,1-t)\right),
\end{split}
\end{equation}
so that (abusing notation) $\theta^{N,n}_t(j^N_n) = \sigma'\nabla_x \ell^N_n(t,X_n,\ol{h}^N_n)$. With this notation we have
\begin{lem}\label{L:N_BM_lem}
The process
\begin{equation*}
B^{N,n}_\cdot \dfn B_\cdot - \int_0^\cdot \theta^{N,n}_u(J^N_n) du,
\end{equation*}
is a $\left(\prob,\filtg^{N,n}\right)$ Brownian motion on $[0,1]$.
\end{lem}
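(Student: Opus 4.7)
The plan is to deduce the lemma from Jacod's theorem on initial enlargements of filtration. The filtration $\filtg^{N,n}$ is the (right-continuous, $\prob$-augmented) initial enlargement of $\filt^B$ by $\sigma(\ol{H}^N_n) = \sigma(\ol{J}^N_n)$, and Lemma \ref{L:Lambda_ident}, specialized to the $N^{th}$ market, verifies the Jacod condition
\begin{equation*}
\frac{\condprobs{\ol{H}^N_n \in d\ol{h}^N_n}{\F^B_t}}{\prob\bra{\ol{H}^N_n \in d\ol{h}^N_n}} = p^{n,0}(t, X_t, \ol{h}^N_n) = \frac{e^{\ell^N_n(t, X_t, \ol{h}^N_n)}}{e^{\ell^N_n(0, X_0, \ol{h}^N_n)}},
\end{equation*}
for all $0 \leq t \leq 1$. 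By the standard result (see \cite{Jacod1985, MR884713}), $B$ remains a $(\prob, \filtg^{N,n})$ semimartingale on $[0,1]$ with canonical decomposition $B_\cdot = B^{N,n}_\cdot + A^{N,n}_\cdot$, where $B^{N,n}$ is a continuous $(\prob, \filtg^{N,n})$ local martingale and the finite-variation part $A^{N,n}$ has density
\begin{equation*}
dA^{N,n}_u = \left.\frac{d\langle B, p^{n,0}(\cdot, X_\cdot, \ol{h}^N_n)\rangle_u}{p^{n,0}(u, X_u, \ol{h}^N_n)}\right|_{\ol{h}^N_n = \ol{H}^N_n}.
\end{equation*}

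Next, I would identify the information drift explicitly. For fixed $\ol{h}^N_n$, \ito's formula applied to the $C^{1,2}$ function $(t,x)\mapsto p^{n,0}(t,x,\ol{h}^N_n)$ shows that the martingale part of $dp^{n,0}(t,X_t,\ol{h}^N_n)$ equals $p^{n,0}(t,X_t,\ol{h}^N_n)\,\nabla_x \ell^N_n(t,X_t,\ol{h}^N_n)'\sigma\, dB_t$, because $p^{n,0}$ is exponential in $\ell^N_n$. Consequently,
\begin{equation*}
\frac{d\langle B, p^{n,0}(\cdot, X_\cdot, \ol{h}^N_n)\rangle_u}{p^{n,0}(u, X_u, \ol{h}^N_n)} = \sigma'\nabla_x \ell^N_n(u, X_u, \ol{h}^N_n)\, du.
\end{equation*}
The gradient formula displayed immediately before the statement of the lemma (obtained by direct differentiation of \eqref{E:Lambda_def} together with the identity $-P\mu + P(P+M)^{-1}(P\mu + V) = P(P+M)^{-1}(V - M\mu)$ applied with $M = F^N_n$, $V = F^N_n j^N_n$, so that $V - M\mu = F^N_n(j^N_n - \mu)$) then yields $\sigma'\nabla_x \ell^N_n(u,X_u,\ol{h}^N_n) = \theta^{N,n}_u(j^N_n)$. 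Substituting $\ol{h}^N_n = \ol{H}^N_n$, for which $j^N_n = J^N_n$ by \eqref{E:agg_Sig}, gives $dA^{N,n}_u = \theta^{N,n}_u(J^N_n)\, du$.

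The conclusion then follows by \'evy's characterization: $B^{N,n}_\cdot = B_\cdot - \int_0^\cdot \theta^{N,n}_u(J^N_n)\, du$ is a continuous $(\prob, \filtg^{N,n})$ local martingale with $\langle B^{N,n}\rangle_t = \langle B\rangle_t = t\cdot 1_d$, since subtracting a finite variation process does not affect the quadratic variation. The main obstacle is essentially nonexistent given Lemma \ref{L:Lambda_ident}; the only delicate point is behavior at the terminal time, but the explicit form $\theta^{N,n}_t(j) = \sigma' e^{-(1-t)\kappa'}P(1-t)(P(1-t) + F^N_n)^{-1}F^N_n(j - \mu(X_t,1-t))$ shows that $P(1-t)(P(1-t)+F^N_n)^{-1}F^N_n \to F^N_n$ and $\mu(X_t,1-t)\to X_t$ as $t \uparrow 1$, so the drift remains bounded and integrable, ensuring the decomposition and hence the Brownian motion property extend cleanly to the closed interval $[0,1]$.
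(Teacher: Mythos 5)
Your proof is correct, but it takes a genuinely different route from the paper's. You invoke the classical Jacod decomposition theorem for initial enlargements: given the Jacod condition from Lemma \ref{L:Lambda_ident}, you read off the information drift as the bracket density $d\langle B, p^{n,0}(\cdot,X_\cdot,\ol{h})\rangle_u / p^{n,0}(u,X_u,\ol{h})$ evaluated at $\ol{h}=\ol{H}^N_n$, compute that this equals $\sigma'\nabla_x\ell^N_n = \theta^{N,n}$, and finish by L\'evy. The paper instead avoids citing Jacod's decomposition theorem directly: it observes that $e^{\ell^N_n(\cdot,X_\cdot,\ol{h}^N_n)}$, normalized, is the stochastic exponential $\E(\int_0^\cdot (\theta^{N,n}_u(j^N_n))'dB_u)$, uses this to note that $B$ is a Brownian motion under the (equivalent) martingale-preserving measure $\prob^{\ol{H}^N_n}$ in the enlarged filtration (which supplies the semimartingale property and quadratic variation of $B^{N,n}$), and then verifies the local-martingale property of $B^{N,n}$ under $\prob$ \emph{by hand}: conditioning on $\F^B_t$ to bring the enlarged expectation back to a $\prob$-expectation weighted by the density $e^{\ell^N_n(t,X_t,\ol{h}^N_n)}$, and then recognizing the inner integrand as a $\tprob^{j^N_n}$-martingale. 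Your route is shorter and more standard, at the cost of citing an abstract theorem whose hypotheses (regularity of the density process, well-posedness of the drift integral) would need to be checked; the paper's computation verifies the local-martingale property from first principles, exposing the role of the martingale-preserving measure, which is then reused throughout the surrounding arguments.

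One small precision: your remark that $\mu(X_t,1-t)\to X_t$ as $t\uparrow 1$ should read $\mu(X_t,1-t)\to X_1$, and the claim that $\theta^{N,n}$ ``remains bounded'' is in fact a claim that its limit as $t\uparrow 1$ exists a.s.\ and is finite, not that it is uniformly bounded. Neither affects the conclusion, since $\theta^{N,n}_\cdot(J^N_n)$ is pathwise continuous on $[0,1]$, hence a.s. Lebesgue-integrable, which is all that is needed for the decomposition to extend to $t=1$.
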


To state the next lemma, on $[0,1]$ define the filtration $\filtg^N$ by
\begin{equation*}
    \G^N_1 \dfn G^{N,2^N}_1, \quad \G^N_t \dfn \G^{N,n}_t \qquad \textrm{for} \quad t^N_n \leq t < t^N_{n+1},\  n=1,...,2^N,
\end{equation*}
and the process $\theta^N$ by
\begin{equation}\label{E:theta_N_def}
    \begin{split}
        \theta^N_0 \dfn \theta^{N,1}_0(J^N_1), \quad \theta^N_t \dfn \theta^{N,n}_t(J^N_n)\qquad \textrm{for}\quad  t^N_n < t \leq t^N_{n+1},\  n=1,...,2^N.
    \end{split}
\end{equation}
The next lemma shows that $\theta^N$ is ``information drift'' in the language of \cite{MR1632213,MR2213260}.

\begin{lem}\label{L:N_BM_lem2}
The process
\begin{equation*}
B^N_\cdot \dfn B_\cdot - \int_0^\cdot \theta^N_u du,
\end{equation*}
is a $\left(\prob,\filtg^N\right)$ Brownian motion on $[0,1]$.
\end{lem}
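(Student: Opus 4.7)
The strategy is to verify the hypotheses of L\'evy's characterization of Brownian motion: that $B^N$ is a continuous $(\prob,\filtg^N)$-local martingale with quadratic variation $[B^N,B^N]_t = t 1_d$. The key structural observation is that $\filtg^N$ is a piecewise concatenation of the nested initial-enlargement filtrations $\filtg^{N,1}\subset\filtg^{N,2}\subset\cdots\subset \filtg^{N,2^N}$, nested because $\filtg^{N,n+1}$ adjoins the fresh signal $H^N_{n+1}$ to $\filtg^{N,n}$, and agreeing with $\filtg^N$ on each slab: $\G^N_t=\G^{N,n}_t$ for $t\in [t^N_n,t^N_{n+1})$. Correspondingly, by construction $\theta^N$ coincides with the integrand $\theta^{N,n}(J^N_n)$ defining $B^{N,n}$ on each such slab, so the increments of $B^N$ over any subinterval of $[t^N_n,t^N_{n+1}]$ equal those of the $(\prob,\filtg^{N,n})$-Brownian motion $B^{N,n}$ of Lemma \ref{L:N_BM_lem}.

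\textbf{Routine steps.} Continuity of $B^N$ is immediate since $\int_0^\cdot \theta^N_u\,du$ is absolutely continuous; this same observation yields $[B^N,B^N]_t=[B,B]_t=t 1_d$. For adaptedness to $\filtg^N$, inspection of \eqref{E:theta_Nn_def} shows that for $u\in [t^N_k,t^N_{k+1})$, $\theta^N_u=\theta^{N,k}_u(J^N_k)$ is $\G^{N,k}_u$-measurable, and the nesting $\filtg^{N,k}\subset\filtg^{N,n}$ for $k\le n$ then implies $\int_0^t\theta^N_u\,du$ is $\G^N_t$-measurable whenever $t\in [t^N_n,t^N_{n+1})$.

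\textbf{Martingale property.} For $s<t$ with $s\in [t^N_m,t^N_{m+1})$ and $t\in [t^N_n,t^N_{n+1})$, I would telescope
\[
B^N_t-B^N_s \;=\; \pare{B^N_{t^N_{m+1}}-B^N_s} \;+\; \sum_{k=m+1}^{n-1}\pare{B^N_{t^N_{k+1}}-B^N_{t^N_k}} \;+\; \pare{B^N_t-B^N_{t^N_n}},
\]
and condition successively via the tower property along the chain $\G^N_s\subset\G^N_{t^N_{m+1}}\subset\cdots\subset\G^N_{t^N_n}$. On each closed subinterval $[t^N_k,t^N_{k+1}]$ the corresponding increment of $B^N$ equals that of $B^{N,k}$ (by the slab identification above, extended to the right endpoint by continuity), and since $\G^N_{t^N_k}=\G^{N,k}_{t^N_k}$, Lemma \ref{L:N_BM_lem} kills each conditional expectation in turn. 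L\'evy's theorem then yields the claim.

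\textbf{Main obstacle.} The one subtlety, which I regard as the main conceptual point, is that $\filtg^N$ is not a single filtration but a concatenation of genuine enlargements: at each signal time $t^N_{k+1}$ the fresh signal $H^N_{k+1}$ enters and $B^{N,k}$ ceases to be a martingale in the strictly larger $\filtg^{N,k+1}$ (it acquires an additional information drift $\theta^{N,k+1}(J^N_{k+1})-\theta^{N,k}(J^N_k)$). The telescoping circumvents this pitfall by invoking the martingale property of $B^{N,k}$ only on $[t^N_k,t^N_{k+1}]$, where $\filtg^N$ coincides with $\filtg^{N,k}$, and then ``resetting'' at $t^N_{k+1}$ to $B^{N,k+1}$ in $\filtg^{N,k+1}$; because the filtrations are nested and the tower property chains across the grid points cleanly, no incompatibility arises.
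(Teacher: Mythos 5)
Your proposal is correct and takes essentially the same approach as the paper's: the paper proves the martingale property for the adjacent-interval case $t^N_{n-1}<s<t^N_n<t<t^N_{n+1}$ by conditioning first on $\G^N_{t^N_n}$ and applying Lemma \ref{L:N_BM_lem} twice, noting "the other cases are similar," which is exactly what your general telescoping over the grid points makes explicit. The only difference is that you also spell out the (immediate) Lévy-characterization checks on continuity, quadratic variation, and adaptedness, which the paper leaves implicit.
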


We now prove convergence of $\filtg^{N}$ to $\filtg_0$ over $[0,1)$, and also obtain the $(\prob,\filtg)$ semi-martingale decomposition for $B$. As a first step,
and analogously to \eqref{E:precision_signal_process}, define
\begin{equation}\label{E:theta_def}
\theta_t \dfn \sigma'e^{-(1-t)\kappa'}P(1-t)\left(P(1-t)+F_t\right)^{-1}F_t\left(J_t - \mu(X_t,1-t)\right);\qquad t \in [0,1).
\end{equation}
Using Proposition \ref{P:agg_Sig_conv}, we obtain

\begin{lem}\label{L:theta_converge_p}
For each fixed $t \in [0,1)$  $\theta^N_t$ converges to $\theta_t$ in probability.
\end{lem}

Lemma \ref{L:theta_converge_p} suggests that $B_\cdot - \int_0^\cdot \theta_u du$ might be a $(\prob,\filtg)$ Brownian motion.  To prove this, we need to prove $\theta^N \to \theta$ in a stronger sense than in probability for each fixed $t$.  To do so, we claim

\begin{prop}\label{P:l2bdd}
For each $t\in [0,1)$ fixed $\sup_{N} \expvs{\int_0^t \left|\theta^{N}_u\right|^2 du} < \infty$.
\end{prop}

Next, we recall the definition of $L^p$ convergence for sigma-algebras and filtrations from \cite{MR1113218, MR1739298, coquet2000some, MR1837294} as well as the more recent \cite{MR3358108,neufcourtthesis}.

\begin{defn}\label{D:filt_converge}
A sequence of $\sigma$-algebras $\cbra{\G^N}$ converges to a $\sigma$-algebra $\G$ in $L^p$ if either one of the equivalent conditions holds.
\begin{enumerate}[(i)]
\item For all $B\in\G$ we have $\lim_{N\to\infty} \condexpvs{1_B}{\G^N} = 1_B$ in $L^p$.
\item For all $Y\in L^p(\G)$ we have $\lim_{N\to\infty} \condexpvs{Y}{\G^N} = Y$ in $L^p$.
\end{enumerate}
A sequence of filtrations $\filtg^N$ converges to a filtration $\filtg$ in $L^p$ over $[0,1)$ if $\G^N_t \to \G_t$ in $L^p$ for all $t \in [0,1)$.
\end{defn}

Our first filtration convergence result is

\begin{prop}\label{P:filt_converge} $\filtg^N$ converges to $\filtg_0$ in $L^p$ over $[0,1)$ for any $p\geq 1$.
\end{prop}

With this result we obtain the main result of the section

\begin{thm}\label{T:limit_bm}
$B_\cdot - \int_0^\cdot \theta_u du$ is both a $(\prob,\filtg_0)$ and $(\prob,\filtg)$ Brownian motion over $[0,1)$.
\end{thm}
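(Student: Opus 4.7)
The plan is to prove $M_\cdot := B_\cdot - \int_0^\cdot \theta_u du$ is a $(\prob, \filtg_0)$-martingale on $[0,1)$, apply Lévy's characterization, then transfer to $\filtg$ using continuity of $M$.  First I would verify joint measurability and $\filtg_0$-adaptedness of $\theta$: from \eqref{E:theta_def}, $\theta_t$ is a continuous deterministic function of $X_t$ and $J_t$, both $\G_{0,t}$-measurable.  Fatou applied to Proposition \ref{P:l2bdd} gives $\esp\bra{\int_0^t |\theta_u|^2 du} < \infty$, so $\int_0^\cdot \theta_u du$ is continuous, $\filtg_0$-adapted and of finite variation almost surely, and $M$ is continuous with $M_t \in L^2$.

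For the martingale property, I would fix $s < t$ in $[0,1)$ and $A \in \G_{0,s}$, set $\xi^N \dfn \condexpvs{1_A}{\G^N_s}$, and start from the $\filtg^N$-Brownian identity of Lemma \ref{L:N_BM_lem2},
\begin{equation*}
\esp\bra{\pare{B_t - B_s - \int_s^t \theta^N_u du}\xi^N} = 0.
\end{equation*}
Proposition \ref{P:filt_converge} gives $\xi^N \to 1_A$ in $L^2$.  Lifting the pointwise-in-$u$ convergence of Lemma \ref{L:theta_converge_p} to convergence in $(du \otimes d\prob)$-probability on $[0,t]\times\Omega$ via Fubini and bounded convergence, and then combining with the uniform $L^2$-bound of Proposition \ref{P:l2bdd} through Vitali's theorem, yields $\theta^N \to \theta$ in $L^1(du \otimes d\prob)$; hence $\int_s^t \theta^N_u du \to \int_s^t \theta_u du$ in $L^1(\prob)$.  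A routine split — using $|\xi^N| \leq 1$ against the drift convergence, and $L^2$-convergence of $\xi^N$ against $B_t - B_s \in L^2$ — passes the identity to the limit, giving $\esp\bra{(M_t - M_s) 1_A} = 0$.  Since $M$ is continuous and shares the quadratic variation $tI_d$ with $B$ (the finite-variation drift contributes nothing), Lévy's characterization upgrades $M$ to a $(\prob,\filtg_0)$-Brownian motion on $[0,1)$.

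Finally, to transfer to $\filtg$, note that $M$ is $\filtg_0$-adapted, hence $\filtg$-adapted; for $A \in \G_s = \bigcap_{s'>s, s'<1}(\G_0)_{s'}$, applying the $\filtg_0$-martingale identity at any $s' \in (s,t)$ and letting $s' \downarrow s$ — justified by continuity of $M$ and Doob's $L^2$-inequality bounding $\sup_{s' \in [s,t]}|M_{s'}|$ — yields $\esp\bra{(M_t-M_s)1_A} = 0$.  A second invocation of Lévy's characterization concludes.  I expect the main obstacle to be the $L^1(du \otimes d\prob)$-convergence of $\theta^N$ to $\theta$; while the ingredients (Propositions \ref{P:l2bdd}, \ref{P:filt_converge} and Lemma \ref{L:theta_converge_p}) are in hand, their combination via Fubini plus Vitali is the technical heart of the passage.
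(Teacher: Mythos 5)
Your proposal is correct and follows essentially the same route as the paper: use Lemma \ref{L:N_BM_lem2} to write the $\filtg^N$-martingale identity, pass to the limit by combining filtration convergence (Proposition \ref{P:filt_converge}) against the $L^2$-bounded prelimit integrand with the $L^1(du\otimes d\prob)$-convergence of $\theta^N\to\theta$ (the paper's ``uniform integrability'' step is precisely your Fubini-plus-Vitali argument, and your Cauchy--Schwarz split is interchangeable with the paper's H\"older split with $1<q<2$), then invoke L\'evy's characterization. The only cosmetic divergence is the extension to $\filtg$: you give a direct $s'\downarrow s$ argument via continuity of $M$ and Doob's inequality, whereas the paper cites \cite[VI.Theorem 2]{MR2273672}; both are valid.
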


\subsection{What Happens at $t=1$?}  Come back to \eqref{E:FNt_JNt_def}.  As $v \to \int_0^v \lambda(\tau(u))^2/\dot{\tau}(u)du$ is increasing, it is clear that
\begin{equation*}
    F^N_1 \dfn \lim_{t\to 1} F^N_t = p(0)C_I + \left(\int_0^1 \frac{\lambda(\tau(u))^2}{\dot{\tau}(u)}du\right)C_I D_Z C_I,
\end{equation*}
is well defined even if
\begin{equation}\label{E:int_cond}
    \int_0^1 \frac{\lambda(\tau(u))^2}{\dot{\tau}(u)}du = \infty.
\end{equation}
Similarly, by the strong law of large numbers for Brownian motion, $\lim_{t\to 1} J^N_t$ is well defined. If \eqref{E:int_cond} holds then $J_1 = \lim_{t\to 1} J_t = X_1$, else
\begin{equation*}
    J_1 = \lim_{t\to 1} J_t = X_1 + F_{1}^{-1}\left(p(0)^{1/2} C_I^{1/2} B^Y_1 + \int_0^{1} \frac{\lambda(\tau(u))}{\sqrt{\dot{\tau}(u)}}C_I D_Z^{\frac{1}{2}}dB^{Z}_u\right).
\end{equation*}
Therefore, we can extend $J$ to $1$ and hence define $\filtg_0, \filtg$ for $t=1$ as well\footnote{$J_1$ does not provide any useful information as the market sees $X_1$ anyway, and the only other possible term in $J_1$ is a noise term independent of the asset terminal payoff.}. Additionally, the process  $\wt{B}_\cdot \dfn B_\cdot - \int_0^{\cdot}$ is well defined provided $\int_0^1 |\theta_t|dt < \infty$ a.s., and this will hold provided $\expvs{\int_0^1 |\theta_t|^2 dt}< \infty$.  As
\begin{equation*}
    J_1 - \mu(X_t,1-t) \sim N\left(0, \Sigma(1-t) + F_t^{-1}\right),
\end{equation*}
we see from \eqref{E:theta_def} that
\begin{equation*}
    \expvs{\int_0^1 |\theta_t|^2 dt}< \infty \quad \Longleftrightarrow \quad \int_0^1 \mathrm{Tr}\left(\left(\Sigma(1-t) + F_t^{-1}\right)^{-1}\right)dt < \infty.
\end{equation*}
To see when this holds, define
\begin{equation}\label{E:l2bddq}
Q \dfn \int_{0}^1 \frac{ \dot{\tau}(t)\int_0^t \frac{\lambda(\tau(u))^2}{\dot{\tau}(u)} du}{1 + (1-\tau(t))\int_0^t \frac{\lambda(\tau(u))^2}{\dot{\tau}(u)} du}dt.
\end{equation}
Then, we have the following.

\begin{prop}\label{P:teq1}
$\expvs{\int_0^1 |\theta_t|^2 dt}< \infty $ if and only if $Q<\infty$. If $Q<\infty$ then $\wt{B}_1$ is well defined, and $\wt{B}$ is both a $(\prob,\filtg_0)$ and $(\prob,\filtg)$ Brownian motion on $[0,1]$.
\end{prop}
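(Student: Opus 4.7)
The plan is to compute $\mathbb{E}[|\theta_t|^2]$ in closed form, reduce the integrability question to a scalar estimate, match that estimate to $Q$ via a change of variables, and finally extend the Brownian motion property of $\wt B$ from $[0,1)$ to $[0,1]$.

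Conditional on $\F^B_t$, independence of $B^Y,B^Z$ from $B$ and the definition of $J_t$ in \eqref{E:precision_signal_process} give $J_t - \mu(X_t,1-t)\sim N(0,\Sigma(1-t)+F_t^{-1})$, since $X_1-\mu(X_t,1-t)\sim N(0,\Sigma(1-t))$ while $J_t-X_1$ is an independent centered Gaussian with covariance $F_t^{-1}$. Denoting by $A_t$ the matrix coefficient of $J_t-\mu(X_t,1-t)$ in \eqref{E:theta_def}, the Woodbury identity $P(1-t)(P(1-t)+F_t)^{-1}F_t = (\Sigma(1-t)+F_t^{-1})^{-1}$ collapses $A_t(\Sigma(1-t)+F_t^{-1})A_t'$ to $\sigma' e^{-(1-t)\kappa'}(\Sigma(1-t)+F_t^{-1})^{-1}e^{-(1-t)\kappa}\sigma$, so
\begin{equation*}
\mathbb{E}[|\theta_t|^2] \;=\; \mathrm{Tr}\bigl(e^{-(1-t)\kappa}\Sigma e^{-(1-t)\kappa'}(\Sigma(1-t)+F_t^{-1})^{-1}\bigr).
\end{equation*}

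Since $t\mapsto e^{-(1-t)\kappa}\Sigma e^{-(1-t)\kappa'}$ is continuous on $[0,1]$ with values in $\sdpos$, it sits between two positive multiples of $1_d$ uniformly in $t$, so $\mathbb{E}[|\theta_t|^2]$ is comparable to $\mathrm{Tr}((\Sigma(1-t)+F_t^{-1})^{-1})$. Setting $I(s)=\int_0^s \lambda(\tau(u))^2/\dot\tau(u)\,du$, the bounds $c_1(1-t)1_d\preceq \Sigma(1-t)\preceq c_2(1-t)1_d$ from \eqref{E:X_trans_density} together with $c_3(1+I(\tau^{-1}(t)))1_d \preceq F_t \preceq c_4(1+I(\tau^{-1}(t)))1_d$ from \eqref{E:precision_signal_process} (using positive definiteness of $C_I$ and $C_I D_Z C_I$) combine, after inversion and summation, to yield
\begin{equation*}
\mathrm{Tr}\bigl((\Sigma(1-t)+F_t^{-1})^{-1}\bigr) \;\asymp\; \bigl((1-t)+(1+I(\tau^{-1}(t)))^{-1}\bigr)^{-1}.
\end{equation*}
Setting $t=\tau(s)$ and multiplying numerator and denominator by $1+I(s)$, the integral over $[0,1]$ splits into a piece bounded by $\int_0^1\dot\tau(s)\,ds=1$ and $\int_0^1 \dot\tau(s)I(s)/((1-\tau(s))(1+I(s))+1)\,ds$, whose denominator is sandwiched between $1+(1-\tau(s))I(s)$ and $2(1+(1-\tau(s))I(s))$, placing this second piece in $[Q/2,Q]$. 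This establishes $\mathbb{E}\bigl[\int_0^1|\theta_t|^2\,dt\bigr]<\infty \iff Q<\infty$.

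If $Q<\infty$ then Cauchy--Schwarz gives $\int_0^1 |\theta_u|\,du<\infty$ almost surely, so $\wt B = B_\cdot - \int_0^\cdot \theta_u\,du$ is well-defined and a.s.\ continuous on $[0,1]$. Theorem \ref{T:limit_bm} makes $\wt B$ a $(\prob,\filtg_0)$- and $(\prob,\filtg)$-Brownian motion on $[0,1)$. For each $s<1$, the increments $\wt B_t - \wt B_s$ are $N(0,t-s)$ and independent of $\G_s$; taking the a.s.\ limit as $t\uparrow 1$, pathwise continuity of $\wt B$ at $1$ and preservation of both Gaussian laws and $\G_s$-independence under a.s.\ limits yield $\wt B_1-\wt B_s\sim N(0,1-s)$ and independent of $\G_s$, extending the Brownian motion property to $[0,1]$ for both $\filtg_0$ and $\filtg$ once $J_1$ is appended as discussed immediately before the statement. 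The main obstacle is the Loewner-ordering step producing the scalar trace equivalence without a simultaneous-diagonalization hypothesis on $\Sigma$, $C_I$ and $D_Z$; once that estimate is secured, the Woodbury collapse, the change of variables and the extension of Brownian motion through the continuous endpoint $t=1$ are routine.
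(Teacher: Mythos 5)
Your proof is correct and takes a genuinely different route from the paper at both of its steps. For the trace estimate, the paper keeps the matrix structure intact: it writes $F_{\tau(t)} = A + g(t)B$ (with $A = p(0)C_I$, $B = C_I D_Z C_I$), factors the scalar $\frac{g(t)\dot\tau(t)}{1+(1-\tau(t))g(t)}$ out of $\mathrm{Tr}\bigl((F_{\tau(t)}^{-1}+\Sigma(1-\tau(t)))^{-1}\bigr)\dot\tau(t)$, and then argues that the residual trace is uniformly bounded above and below on a neighborhood of $t=1$, directly yielding the $Q$-equivalence. You instead pass to Loewner bounds $c_1(1-t)1_d\preceq\Sigma(1-t)\preceq c_2(1-t)1_d$ and $c_3(1+I)1_d\preceq F_t\preceq c_4(1+I)1_d$ to collapse everything to a scalar integral and then perform the change of variables and the two-piece split by hand; both yield the same conclusion and are comparable in length. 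Your closing worry about needing simultaneous diagonalizability of $\Sigma$, $C_I$ and $D_Z$ is unfounded: Loewner ordering is stable under addition, reversed by inversion on $\sdpos$, and monotone under trace, so the chain $a1_d\preceq A\preceq b1_d$, $c1_d\preceq B\preceq d1_d$ $\Rightarrow$ $\frac{\dim}{b+d}\leq\mathrm{Tr}\bigl((A+B)^{-1}\bigr)\leq\frac{\dim}{a+c}$ goes through with no commutativity hypothesis. For the extension of the Brownian motion property to $t=1$, the paper uses a martingale estimate: for $A_s$ measurable, it bounds $\bigl|\expvs{1_{A_s}(\wt B_1-\wt B_s)}\bigr|\leq C\expvs{(B_1-B_u)^2+\int_u^1|\theta_v|^2\,dv}$ via the martingale property over $[0,u]$ and sends $u\uparrow 1$, then invokes L\'evy's characterization; you instead pass stationary Gaussian increments and $\G_s$-independence through the a.s.\ limit at the endpoint. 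Both are valid; the paper's route stays within the martingale framework already set up in Theorem~\ref{T:limit_bm}, while yours is a mild self-contained alternative.
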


\begin{rem}
The above suggests that by assuming $Q<\infty$ one could directly prove the martingale property for $\wt{B}$ on $[0,1]$ using the same arguments that we did over $[0,t]$ for $t<1$.  However, while $Q=\infty$ implies $\sup_N \expvs{\int_0^1 |\theta^N_t|^2 dt} = \infty$, it is not so clear $Q<\infty$ implies $\sup_N \expvs{\int_0^1 |\theta^N_t|^2 dt} < \infty$.  One must appeal to an integral convergence theorem to deduce this result and, aside from making draconian assumptions such as $\lambda, \dot{\tau}$ being bounded on $[0,1]$ (which would rule out $J_1 = X_1$), it is not obvious how to proceed.  However, this is not necessary. We can obtain the limiting Brownian motion and market filtration on $[0,1)$ under mild conditions, and then extend to $t=1$ at the post-limit level.
\end{rem}

We conclude by considering the equally spaced case when $\tau(u) = u$. If we want $J_1 = X_1$ and $\wt{B}$ to be a Brownian motion on $[0,1]$ then  we need \eqref{E:int_cond} and $Q<\infty$ for $Q$ from \eqref{E:l2bddq}. Direct calculations show that if $\lambda(t) = (1-t)^{-\lambda}$ then this holds if and only if $1/2 \leq \lambda < 1$.

\appendix

\section{On the Admissible Trading Strategies}\label{AS:strat}

Recall the admissible classes in \eqref{E:lp_admit}. Our goal is to invoke \cite[Propositions 3.4, 3.5]{detemple2022dynamic}, which shows the classes $\A^{N,0}_{t_N,1}, \A^{N,k}_{t_N,1} $ coincide, up to indistinguishability of the related wealth processes, with the ``typical'' admissible classes
\begin{equation*}
\begin{split}
    \wt{A}^{N,0}_{t_N,1} &\dfn \bigg\{ \pi\in\mcp(\filt^{N,0}) \such \pi \textrm{ is } S^{N,\ol{H}_N} \textrm{ integrable over }[t_N,1], \condexpv{\qprob^{N,0}}{}{\int_{t_N}^1 \pi_t'dS^{N,\ol{H}_N}_t}{\F^{N,0}_{t_N}} \leq 0\bigg\},\\
    \wt{A}^{N,k}_{t_N,1} &\dfn \bigg\{ \pi\in\mcp(\filt^{N,k}) \such \pi \textrm{ is } S^{N,\ol{H}_N} \textrm{ integrable over }[t_N,1], \condexpv{\qprob^{N,k}}{}{\int_{t_N}^1 \pi_t'dS^{N,\ol{H}_N}_t}{\F^{N,k}_{t_N}} \leq 0\bigg\}.
\end{split}
\end{equation*}
However, to even define these latter classes of processes, one must show $S^{N,\ol{H}_N}$ is a $(\filt^{N,0},\prob)$ semi-martingale; one must define the martingale measures $\qprob^{N,0},\qprob^{N,k}$ on $\F^{N,0}_{1},\F^{N,k}_{1}$ respectively; and one must verify the conditional expectations
\begin{equation*}
\condexpv{\qprob^{N,0}}{}{\int_{t_N}^1 \pi_t'dS^{N,\ol{H}_N}_t}{\F^{N,0}_{t_N}},\quad \condexpv{\qprob^{N,k}}{}{\int_{t_N}^1 \pi_t'dS^{N,\ol{H}_N}_t}{\F^{N,k}_{t_N}},
\end{equation*}
are well defined.  As this is rigorously done in \cite[Section 3, Appendices A,B]{detemple2022dynamic}, for the sake of brevity we outline the argument below, referring the reader to these sections for the proofs of the statements made.

For the martingale measures, Lemma \ref{L:Lambda_ident} verifies the Jacod  condition $\condprobs{\ol{H}_N\in d\ol{h}_N}{\F^B_t} \sim \prob\bra{\ol{H}_N\in d\ol{h}_N}$ almost surely for $t\in [0,1]$, and easy calculations (see \cite{MR1775229, MR3758346}) show $1/p^{N,0}(\cdot,X_{\cdot},\ol{H}_N)$ is a $(\prob,\filt^{N,0})$ martingale with constant expectation $1$. Thus,  we may define the $\filt^B$ to $\filt^{N,0}$ martingale preserving measure (c.f. \cite{MR1418248, MR1632213}) by
\begin{equation*}
\tprob^{N}\bra{A} \dfn \expvs{1_A \frac{1}{p^{N,0}(1,X_{1},\ol{H}_N)}};\qquad A\in \F.
\end{equation*}
Similarly, for the $k^{th}$ insider, Lemma \ref{L:Lambda_ident} verifies $\condprobs{\ol{G}_k\in d\ol{g}_k}{\F^{N,0}_t} \sim \prob\bra{\ol{G}_k\in d\ol{g}_k}$, $a.s.\ 0\leq t\leq 1$, and we define the $\filt^{N,0}$ to $\filt^{N,k}$ martingale preserving measure
\begin{equation*}
\tprob^{N,k}\bra{A} \dfn \expvs{1_A \left(\frac{p^{N,k}(1,X_1,\ol{H}^{-k}_N,G_k)}{p^{N,0}(1,X_1,\ol{H}_N)}\right)^{-1}};\qquad A\in \F.
\end{equation*}
With this notation, and recalling $Z^N$ from \eqref{E:Nmkt_density}, we define a candidate $(S^{N,\ol{H}_N},\filt^{N,0})$ martingale measure $\qprob^{N,0}$ by
\begin{equation}\label{E:U_mm}
    \frac{d\qprob^{N,0}}{d\prob}  = \frac{d\tprob^N}{d\prob} \times Z^N(1,X_1,\ol{H}_N) = \frac{Z^N(1,X_1,\ol{H}_N)}{p^{N,0}(1,X_1,\ol{H}_N)},
\end{equation}
and for the $k^{th}$ insider, we define a candidate $(S^{N,\ol{H}_N},\filt^{N,k})$ martingale measure $\qprob^{N,k}$ by
\begin{equation}\label{E:I_mm}
    \frac{d\qprob^{N,k}}{d\prob}  = \frac{d\qprob^{N,0}}{d\prob} \times \frac{p^{N,0}(1,X_1,\ol{H}_N)}{p^{N,k}(1,X_1,\ol{H}_N^{-k},G_k)} = \frac{Z^N(1,X_1,\ol{H}_N)}{p^{N,k}(1,X_1,\ol{H}_N^{-k},G_k)}.
\end{equation}

These are candidate martingale measures, because as shown in \cite[Lemma A.8]{detemple2022dynamic}, for all $0\leq s < t\leq 1$, and all $\filt^B_t \otimes \B(\reals^{d\times N})$ measurable random variables $\chi^{\ol{H}_N}_t$ such that $\chi^{\ol{h}_N}_t \in L^1(\qprob^{N,\ol{h}_N})$ for Lebesgue almost all $\ol{h}_N$ (respectively generic $\filt^B_t \otimes \B(\reals^{d\times (1+N)})$ measurable random variables $\chi^{G_k,\ol{H}_N}_t$ such that $\chi^{g_k,\ol{h}_N}_t \in L^1(\qprob^{N,\ol{h}_N})$ for Lebesgue almost all $(g_k,\ol{h}_N)$ ) we have
\begin{equation*}
\begin{split}
    \condexpv{\qprob^{N,0}}{}{\chi^{\ol{H}_N}_t}{\F^{N,0}_s} &= \left(\condexpv{\qprob^{N,\ol{h}_N}}{}{\chi^{\ol{h}_N}}{\F^B_s}\right)\big|_{\ol{h}_N = \ol{H}_N},\\
    \condexpv{\qprob^{N,k}}{}{\chi^{G_k,\ol{H}_N}_t}{\F^{N,k}_s} &= \left(\condexpv{\qprob^{N,\ol{h}_N}}{}{\chi^{g_k,\ol{h}_N}}{\F^B_s}\right)\big|_{(g_k,\ol{h}_N) = (G_k,\ol{H}_N)}.
\end{split}
\end{equation*}
Note that part of this statement is that the conditional expectations on the left side above are well defined in the generalized sense of \cite[Section 2.7]{MR3467826}.  Therefore over $[t_N,1]$, $S^{N,\ol{H}_N}$ satisfies the martingale property for both $(\qprob^{N,0},\filt^{N,0})$ and $(\qprob^{N,k},\filt^{N,k})$.   Lastly, as shown in \cite[Propositions 3.4, 3.5]{detemple2022dynamic} we have the intuitive measurability result
\begin{equation}\label{E:intuit_meas}
\begin{split}
\pi\in\A^{N,0}_{t_N,1} &\quad \Rightarrow\quad \int_{t_N}^{1} (\pi^{\ol{H}_N}_u)'dS^{N,\ol{H}_N}_u \quad \textrm{  is  } \F^B_1 \otimes\B(\reals^{d\times N}) \textrm{  measurable},\\
\pi\in\A^{N,k}_{t_N,1} &\quad \Rightarrow\quad \int_{t_N}^{1} (\pi^{G_k,\ol{H}_N}_u)'dS^{N,\ol{H}_N}_u \quad \textrm{  is  } \F^B_1 \otimes\B(\reals^{d\times(1+N)}) \textrm{  measurable},\\
\end{split}
\end{equation}
and intuitive relationship between wealth processes that
\begin{equation}\label{E:intuit_relation}
\begin{split}
\pi\in\A^{N,0}_{t_N,1} &\quad \Rightarrow\quad \int_{t_N}^{\cdot} (\pi^{\ol{H}_N}_u)'dS^{N,\ol{H}_N}_u = \left(\int_{t_N}^\cdot (\pi^{\ol{h}_N}_u)'dS^{N,\ol{h}_N}_u\right)\big|_{\ol{h}_N = \ol{H}_N},\\
\pi\in\A^{N,k}_{t_N,1} &\quad \Rightarrow\quad \int_{t_N}^{\cdot} (\pi^{G_k,\ol{H}_N}_u)'dS^{N,\ol{H}_N}_u = \left(\int_{t_N}^\cdot (\pi^{g_k,\ol{h}_N}_u)'dS^{N,\ol{h}_N}_u\right)\big|_{\ol{h}_N = \ol{H}_N}.
\end{split}
\end{equation}

\section{Proofs from Section \ref{S:ExpCaraEq}}\label{AS:ExpCaraEq}


\begin{proof}[Proof of Proposition \ref{P:abstract_cont_time}]
Assume for now the price process $S$ from \eqref{E:cont_price_def} is an Ito process with quadratic covariation $d\langle S, S \rangle_t = \Sigma^S_t dt$ where $\Sigma^S$ is deterministic.  Given this, for any symmetric matrix $E$ the random variable $(1/2) S_b ' E S_b$ can be replicated using the strategy $\pi_t = E S_t$ and initial capital $(1/2)S_a ' E S_a + (1/2)\int_a^b \mathrm{Tr}(E\Sigma^S_u)du$. Indeed, by integration by parts
\begin{equation*}
\begin{split}
\frac{1}{2} S_b'E S_b &= \frac{1}{2}\sum_{i,j} E^{ij}\left(S^i_a S^j_a + \int_a^b S^i_u dS^j_u + \int_a^b S^j_u dS^i_u + \int_a^b (\Sigma^S_u)^{ij} du\right),\\
&= \frac{1}{2}\left(S_a'ES_a + \int_a^b \mathrm{Tr}\left(E\Sigma^S_u\right)du\right) + \int_a^b (ES_u)'dS_u.
\end{split}
\end{equation*}
The result follows as $\Sigma^S$ is not random. Now, consider the optimal investment problem for agent $j$
\begin{equation*}
\inf_{\pi\in \A_{a,b}} \condexpvs{e^{-\gamma_j \We^{\pi}_{a,b} - 1_{j>0}\left(\frac{1}{2}S_b'M_jS_b - S_b'V_j\right) - \frac{1}{2}X_b'M X_b + X_b'V - \lambda_j}}{\F^B_a}.
\end{equation*}
Using the above, make the translation $\pi_t = (1/\gamma_j)\left(1_{j>0}\left(V_j - M_j S_t\right) + \theta_t\right)$. This leaves
\begin{equation*}
e^{1_{j>0}\left(V_j'S_a - \frac{1}{2}S_a'M_j S_a - \frac{1}{2}\int_a^b \mathrm{Tr}\left(M_j \Sigma^S_u\right)du\right)-\lambda_j}\times \inf_{\theta\in \A_{a,b}} \condexpvs{e^{-\We^{\theta}_{a,b} - \frac{1}{2}X_b'M X_b + X_b'V}}{\F^B_a},
\end{equation*}
along with the adjusted outstanding supply $\Psi_t = \Psi + \sum_{k=1}^J \alpha_k\left(M_k S_t - V_k\right)$ for $t\in [a,b]$. In the new optimization problem all agents have the same information and random endowment, and hence they all use the same strategy. In other words, for some common $\theta$ we have $\wh{\theta}^j_t =  \wh{\theta}_t$ for $j = 0, ..., J$. By market clearing $\Psi_t = \theta_t \sum_{j=0}^J \alpha_k = \theta_t / \ol{\gamma}$ and hence
\begin{equation*}
\begin{split}
\wh{\pi}^j_t &= \frac{1_{j>0}}{\gamma_j}\left(V_j - M_j S_t\right) + \frac{\ol{\gamma}}{\gamma_j} \Psi_t,\\
&= \frac{1_{j>0}}{\gamma_j}\left(V_j - M_j S_t\right) + \frac{\ol{\gamma}}{\gamma_j}\left(\Psi + \sum_{k=1}^J \alpha_k\left(M_k S_t - V_k\right)\right).
\end{split}
\end{equation*}
We now construct $Z_b$ which ensures $\wh{\theta}^j$ is indeed optimal for agent $j$. To do this, note that
\begin{equation*}
\begin{split}
\We^{\wh{\theta}^j}_{a,b} &= \ol{\gamma}\left(\Psi - \sum_{k=1}^J \alpha_k V_k\right)'(S_b-S_a) + \frac{1}{2}S_b'\left(\ol{\gamma}\sum_{k=1}^J \alpha_k M_k\right)S_b\\
&\qquad  - \frac{1}{2}S_a'\left(\ol{\gamma}\sum_{k=1}^J \alpha_k M_k\right)S_a - \frac{1}{2}\int_a^b \mathrm{Tr}\left(\left(\ol{\gamma}\sum_{k=1}^J \alpha_k M_k\right)\Sigma^S_u\right)du.
\end{split}
\end{equation*}
Using $S_b = \pvct + \pmtx X_b$ and the notation in \eqref{E:cont_olmv}
\begin{equation*}
\begin{split}
&\ol{\gamma}\left(\Psi - \sum_{k=1}^J \alpha_k V_k\right)'S_b + \frac{1}{2}S_b'\left(\ol{\gamma}\sum_{k=1}^J \alpha_k M_k\right)S_b\\
&\qquad = \frac{1}{2}X_b'\ol{M} X_b - X_b'\ol{V} + \frac{1}{2}\ol{\gamma}\pvct'\left(\sum_{k=1}^J \alpha_k M_k\right)\pvct - \ol{\gamma}\pvct'\left(\sum_{k=1}^J \alpha_k V_k - \Psi\right).
\end{split}
\end{equation*}
As such, the first order optimality condition $e^{-\We^{\wh{\theta}^j}_{a,b}-(1/2)X_b'MX_b + X_b'V} = \lambda^j(a) Z_b$, for some $\F^B_a$ measurable constant $\lambda^j(a)$ is satisfied provided
\begin{equation*}
\begin{split}
Z_b &= \frac{e^{-\frac{1}{2}X_b '\left(M+\ol{M}\right)X_b + X_b'\left(V + \ol{V}\right)}}{\expvs{e^{-\frac{1}{2}X_b '\left(M+\ol{M}\right)X_b + X_b'\left(V+\ol{V}\right)}}}.
\end{split}
\end{equation*}
Under Assumption \ref{A:cont_time_verify} $Z_b$ and $S$ from \eqref{E:cont_price_def} are clearly well-defined, and direct calculations using the normality of $X_b$ given $\F^B_t$ (see \eqref{E:Lambda_def}) show for $t\in [a,b]$
\begin{equation}\label{E:cont_px}
S_t = \pvct + \pmtx \frac{\condexpvs{S_b Z_b}{\F^B_t}}{\condexpvs{Z_b}{\F^B_t}} = \pvct + \pmtx\left(P(b-t) + M  + \ol{M}\right)^{-1}\left(P(b-t)\mu(X_t,b-t) + V + \ol{V}\right).
\end{equation}
Assumption \ref{A:cont_time_verify} also implies the $(S,\filt^B)$ market is complete over $[a,b]$, and as $\mu$ is linear in $X$ and $X$ has deterministic volatility, Ito's formula implies $d\langle S, S \rangle_t = \Sigma^S_t$ for non-random $\Sigma^S$. The remainder of the proof identifies the certainty equivalents. After the translation we have
\begin{equation*}
\begin{split}
e^{-\gamma_j \check{\E}^j_a} &= e^{-\lambda_j + 1_{j>0}\left(V_j'S_a - \frac{1}{2}S_a'M_j S_a - \frac{1}{2}\int_a^b \mathrm{Tr}\left(M_j \Sigma^S_u\right)du\right)}\times \condexpvs{e^{-\We^{\wh{\theta}^j}_{a,b} - \frac{1}{2}X_b'M X_b + X_b'V}}{\F^B_a},
\end{split}
\end{equation*}
so that using \eqref{E:Lambda_def}
\begin{equation*}
\begin{split}
\gamma_j \check{\E}^j_a &= \lambda_j + 1_{j>0}\left(\frac{1}{2}S_a'M_j S_a - S_a'V_j + \frac{1}{2}\int_a^b \mathrm{Tr}\left(M_j \Sigma^S_u\right)du\right) + \ol{\gamma}\left(\sum_{k=1}^J \alpha_k V_k-\Psi\right)'S_a\\
&\quad  - \frac{1}{2}S_a'\left(\ol{\gamma}\sum_{k=1}^J \alpha_k M_k\right)S_a - \frac{1}{2}\int_a^b \mathrm{Tr}\left(\left(\ol{\gamma}\sum_{k=1}^J \alpha_k M_k\right)\Sigma^S_u\right)du\\
&\quad  + \frac{1}{2}\ol{\gamma}\pvct'\left(\sum_{k=1}^J \alpha_k M_k\right)\pvct - \ol{\gamma}\pvct'\left(\sum_{k=1}^J \alpha_k V_k - \Psi\right)\\
&\quad  - \log\left(\condexpvs{e^{-\frac{1}{2}X_b '(M+\ol{M})X_b + X_b'\left(V+\ol{V}\right)}}{\F^B_a}\right).
\end{split}
\end{equation*}
The normality of $X_b$ given $\F^B_a$ implies (here and going forward, we write $P$ for $P(b-a)$ and $\mu$ for $\mu(X_a,b-a)$ to ease notation)
\begin{equation*}
\begin{split}
&-\log\left(\condexpvs{e^{-\frac{1}{2}X_b '(M+\ol{M})X_b + X_b'\left(V+\ol{V}\right)}}{\F^B_a}\right)\\
&\qquad = \frac{1}{2}\log\left(|1_d + P^{-1}(M+\ol{M})|\right) + \frac{1}{2}\mu'P\mu - \frac{1}{2}\left(P\mu+V+\ol{V}\right)'\left(P+M+\ol{M}\right)^{-1}\left(P\mu + V + \ol{V}\right).
\end{split}
\end{equation*}
Define the constant
\begin{equation}\label{E:cont_lambda_def}
\begin{split}
\check{\lambda}_j &\dfn 1_{j>0}\left(\int_a^b \mathrm{Tr}\left(M_j \Sigma^S_u\right)du\right) + \frac{1}{2}\log\left(|1_d + P^{-1}(M+\ol{M})|\right)\\
&\qquad  - \frac{1}{2}\int_a^b \mathrm{Tr}\left(\left(\ol{\gamma}\sum_{k=1}^J \alpha_k M_k\right)\Sigma^S_u\right)du.
\end{split}
\end{equation}
This implies the certainty equivalent is
\begin{equation*}
\begin{split}
\gamma_j \check{\E}^j_a &= \lambda_j + \check{\lambda}_j +  1_{j>0}\left(\frac{1}{2}S_a'M_j S_a - S_a'V_j\right) +  \ol{\gamma}\left(S_a-\pvct\right)'\left( \sum_{k=1}^J \alpha_k V_k - \Psi\right) - \frac{1}{2}S_a'\left(\ol{\gamma}\sum_{k=1}^J \alpha_k M_k\right)S_a\\
&\qquad + \frac{1}{2}\pvct'\left(\ol{\gamma}\sum_{k=1}^J \alpha_k M_k\right)\pvct  + \frac{1}{2}\mu'P\mu - \frac{1}{2}\left(P\mu+V+\ol{V}\right)'\left(P+M+\ol{M}\right)^{-1}\left(P\mu + V + \ol{V}\right).
\end{split}
\end{equation*}
Let us write this as
\begin{equation*}
\gamma_j \check{\E}^j_a = \lambda_j + \check{\lambda}_j +  1_{j>0}\left(\frac{1}{2}S_a'M_j S_a - S_a'V_j\right) + \mathbf{Q}.
\end{equation*}
From \eqref{E:cont_px} we know $S_a = \pvct + \pmtx(P+M+\ol{M})^{-1}(P\mu+V+\ol{V})$, and by construction $\ol{\gamma}\pmtx'(\sum_{k=1}^J \alpha_k V_k - \Psi) = \ol{V} + \ol{\gamma}\pmtx'(\sum_{k=1}^J \alpha_k M_k)\pvct$. With $\mathbf{M} = P + M + \ol{M}$ this yields
\begin{equation*}
\begin{split}
\mathbf{Q} &= \left(P\mu+V+\ol{V}\right)'\mathbf{M}^{-1}\left(\ol{V} + \pmtx'\left(\ol{\gamma}\sum_{k=1}^J \alpha_k M_k\right)\pvct\right)\\
&\quad  - \frac{1}{2}\left(\pvct + \pmtx\mathbf{M}^{-1}\left(P\mu+V+\ol{V}\right)\right)'\left(\ol{\gamma}\sum_{k=1}^J \alpha_k M_k\right)\left(\pvct + \pmtx\mathbf{M}^{-1}\left(P\mu+V+\ol{V}\right)\right)+ \frac{1}{2}\mu'P\mu\\
&\quad + \frac{1}{2}\pvct'\left(\ol{\gamma}\sum_{k=1}^J \alpha_k M_k\right)\pvct  - \frac{1}{2}\left(P\mu+V+\ol{V}\right)'\mathbf{M}^{-1}\left(P\mu + V + \ol{V}\right),\\\
&= \frac{1}{2}\mu'P\mu - \frac{1}{2}\left(P\mu + V + \ol{V}\right)'\mathbf{M}^{-1}\ol{M}\mathbf{M}^{-1}\left(P\mu + V + \ol{V}\right)\\
&\quad - \frac{1}{2}\left(P\mu+V+\ol{V}\right)'\mathbf{M}^{-1}\left(P\mu + V + \ol{V}\right) + \left(P\mu+V+\ol{V}\right)'\mathbf{M}^{-1}\ol{V},\\
&= \frac{1}{2}\mu'P\mu - \frac{1}{2}\left(P\mu + V + \ol{V}\right)'\mathbf{M}^{-1}\left(P + M + 2\ol{M}\right)\mathbf{M}^{-1}\left(P\mu + V + \ol{V}\right)\\
&\quad + \left(P\mu+V+\ol{V}\right)'\mathbf{M}^{-1}\ol{V}
\end{split}
\end{equation*}
The above is evidently linear-quadratic in $(y=\mu,v,\ol{v})$ with coefficients
\begin{equation}\label{E:cont_lq_coeff}
\begin{split}
\frac{1}{2}y' \mathbf{E} y&: \quad\mathbf{E} = P\mathbf{M}^{-1}\left(M + (M+\ol{M})P^{-1}(M+\ol{M})\right)\mathbf{M}^{-1}P,\\
\frac{1}{2}v'\mathbf{E} v&: \quad\mathbf{E} = \mathbf{M}^{-1}(P+M)\mathbf{M}^{-1},\\
\frac{1}{2}\ol{v}'\mathbf{E} \ol{v}&: \quad\mathbf{E} = -\mathbf{M}^{-1}(P+M+2\ol{M})\mathbf{M}^{-1},\\
y' \mathbf{E} v&: \quad\mathbf{E} = -P\mathbf{M}^{-1}\ol{M}\mathbf{M}^{-1},\\
y'\mathbf{E} \ol{v}&: \quad\mathbf{E} = -P\mathbf{M}^{-1}(P+M+2\ol{M})\mathbf{M}^{-1},\\
v'\mathbf{E}\ol{v}&: \quad\mathbf{E} = -\mathbf{M}^{-1}\ol{M}\mathbf{M}^{-1},
\end{split}
\end{equation}
where, bringing back the function argument, $\mathbf{M} = P(b-a) + M + \ol{M}$.

\end{proof}


\begin{proof}[Proof of Proposition \ref{P:abstract_sp}]

Consider the optimal investment problem for agent $j$
\begin{equation*}
\inf_{\pi\in\reals^d} \expvs{e^{-\gamma_j \pi'(S-p) - 1_{j>0}\left(\frac{1}{2}S'M_jS - S'V_j\right) - \frac{1}{2}H'M H + H'V - \lambda_j}}.
\end{equation*}
Make the translation $\pi = (1/\gamma_j)(1_{j>0}V_j + \theta)$ to obtain
\begin{equation*}
e^{-\lambda_j + 1_{j>0}V_j'p} \times \inf_{\theta\in \reals^d} \expvs{e^{-\theta'(S-p) - 1_{j>0}\frac{1}{2}S'M_jS - \frac{1}{2}H'MH + H'V}},
\end{equation*}
along with the adjusted outstanding supply $\Psi - \sum_{k=1}^J \alpha_k V_k$. Plug in $Y = \pvct + \pmtx H$ to obtain
\begin{equation*}
\begin{split}
e^{-\lambda_j - 1_{j>0}\left(\frac{1}{2}\varphi'M_j\varphi - p'V_j\right)} \times \inf_{\theta\in \reals^d}e^{\theta'(p-\pvct)} \expvs{e^{-\frac{1}{2} H'\left(M+1_{j>0}\pmtx'M_j\pmtx\right)H + H'\left(V-1_{j>0}\pmtx'M_j\pvct - \pmtx'\theta\right)}}.
\end{split}
\end{equation*}
Using $H\sim N(0,\Sigma^H)$ this becomes
\begin{equation}\label{E:normal_pre_opt}
\begin{split}
&e^{-\lambda_j + 1_{j>0}\left(\frac{1}{2}\varphi'M_j\varphi - p'V_j\right)-\frac{1}{2}\log\left(\left|1_d + \Sigma^H(1_{j>0}\pmtx'M_j\pmtx + M)\right|\right)}\\
&\qquad \times \inf_{\theta\in\reals^d} e^{\theta'(p-\pvct) + \frac{1}{2}\left(V - 1_{j>0}\pmtx'M_j\pvct - \pmtx'\theta\right)'\left(P^H + 1_{j>0}\pmtx'M_j\pmtx + M\right)^{-1}\left(V - 1_{j>0}\pmtx'M_j\pvct - \pmtx'\theta\right)}
\end{split}
\end{equation}
Assume for now that $\pmtx$ is invertible. The optimizer is then
\begin{equation*}
\wh{\theta}_j = - 1_{j>0}M_j p + (\pmtx')^{-1}\left(V+(M+P^H)\pmtx^{-1}(\pvct-p)\right).
\end{equation*}
Market clearing necessitates
\begin{equation*}
\begin{split}
\Psi - \sum_{k=1}^J \alpha_k V_k &= \sum_{j=0}^J \alpha_j\wh{\theta}_j = \frac{1}{\ol{\gamma}}(\pmtx')^{-1}\left(V+(M+P^H)\pmtx^{-1}\pvct\right)\\
&\qquad - \frac{1}{\ol{\gamma}}(\pmtx')^{-1}\left(M+P^H + \pmtx'\left(\ol{\gamma}\sum_{k=1}^J \alpha_j M_j\right)\pmtx\right)\pmtx^{-1}p,
\end{split}
\end{equation*}
which, using the $\ol{M}$ notation shows that
\begin{equation*}
p = \pmtx\left(P^H + M +\ol{M}\right)^{-1}\left(V+(M+P^H)\pmtx^{-1}\pvct + \ol{\gamma}\pmtx'\left(\sum_{k=1}^J \alpha_k V_k - \Psi\right)\right).
\end{equation*}
With the $\ol{V}$ notation this becomes
\begin{equation*}
p = \pvct + \pmtx\left(P^H + M +\ol{M}\right)^{-1}\left(V+\ol{V}\right).
\end{equation*}
Using that $V = -\ol{V} + (P^H+M+\ol{M})\pmtx^{-1}(p-\pvct)$ we obtain
\begin{equation*}
\wh{\theta}_j = -1_{j>0} M_j p + (\pmtx')^{-1}\left(-\ol{V} + \ol{M}\pmtx^{-1}(p-\pvct)\right) = -1_{j>0}M_j p + \ol{\gamma}\left(\Psi + \sum_{k=1}^J \alpha_k (M_kp-V_k)\right).
\end{equation*}
This implies
\begin{equation*}
\begin{split}
\gamma_j \wh{\pi}_j &= 1_{j>0} V_j + \wh{\theta}_j = 1_{j>0}(V_j - M_j p) + \ol{\gamma}\left(\Psi + \sum_{k=1}^J \alpha_k (M_k p - V_k)\right).
\end{split}
\end{equation*}
Lastly, we consider the certainty equivalent $\gamma_j \E^{j,-}$. Plugging the optimal $\wh{\theta}^j$ into \eqref{E:normal_pre_opt}, taking the negative logarithm and simplifying yields
\begin{equation*}
\begin{split}
\gamma_j \E^{j,-} &= \lambda_j + \frac{1}{2}\log\left(\left|1_d + \Sigma^H(1_{j>0}\pmtx'M_j\pmtx + M)\right|\right) + 1_{j>0}\left(\frac{1}{2}\pvct' M_j\pvct - p'V_j\right)\\
&\qquad  - \left(V-1_{j>0}\pmtx'M_j\pvct\right)'\pmtx^{-1}(p-\pvct) + \frac{1}{2}(p-\pvct)'(\pmtx')^{-1}\left(P^H + 1_{j>0} \pmtx'M_j\pmtx + M\right)\pmtx^{-1}(p-\pvct),\\
&= \lambda_j +\check{\lambda}_j + 1_{j>0}\left(\frac{1}{2}p'M_j p - p'V_j\right) - V'(P^H+M+\ol{M})^{-1}(V+\ol{V})\\
&\qquad + \frac{1}{2}(V+\ol{V})'(P^H + M + \ol{M})^{-1}(P^H+M)(P^H+M+\ol{M})^{-1}(V+\ol{V}),
\end{split}
\end{equation*}
where
\begin{equation}\label{E:sp_lambda_def}
\begin{split}
\lambda_j &\dfn \frac{1}{2}\log\left(\left|1_d + \Sigma^H(1_{j>0}\pmtx'M_j\pmtx + M)\right|\right).
\end{split}
\end{equation}
The result follows with
\begin{equation*}
\begin{split}
\E^{-}(v,\ol{v}) &= \frac{1}{2}(V+\ol{V})'(P^H + M + \ol{M})^{-1}(P^H+M)(P^H+M+\ol{M})^{-1}(V+\ol{V})\\
&\qquad  - V'(P^H+M+\ol{M})^{-1}(V+\ol{V}),
\end{split}
\end{equation*}
which is evidently linear quadratic in $V,\ol{V}$ with coefficients
\begin{equation}\label{E:sp_lq_coeff}
\begin{split}
\frac{1}{2}v'\mathbf{E} v&: \quad\mathbf{E} = -(P^H+M+\ol{M})^{-1}(P^H+M+2\ol{M})(P^H+M+\ol{M})^{-1},\\
\frac{1}{2}\ol{v}'\mathbf{E} \ol{v}&: \quad\mathbf{E} = (P^H+M+\ol{M})^{-1}(P^H+M)(P^H+M+\ol{M})^{-1},\\
v'\mathbf{E}\ol{v}&: \quad\mathbf{E} = -(P^H+M+\ol{M})^{-1}\ol{M}(P^H+M+\ol{M})^{-1}.
\end{split}
\end{equation}
Lastly, note that the equilibrium price $p$, strategies $\cbra{\wh{\pi}^j}$ and endowments $\cbra{\gamma_j \E^{j,-}}$ do not depend upon $\pmtx^{-1}$.  And, using the equilibrium price $p$ and writing $\pi^j = \wh{\pi}^j + \theta^j$ for $j=0,...,J$ one can show, without requiring $\pmtx$ to be invertible, that $\wh{\theta}^j = 0$. Therefore, the result follows without assuming $\pmtx$ is invertible.

\end{proof}


\section{Proofs from Section \ref{S:EqEst}}\label{AS:EqEst}

\begin{proof}[Proof of Lemma \ref{L:Lambda_ident}] We will prove \eqref{E:Lambda_ident} first.  The basic idea of the proof is that using \eqref{E:X_trans_density} and \eqref{E:H_def2}, for $m=1,...,n$ we may write $H_m = \mu(X_t,1-t) + \E_X + \E_m$ where $\E_X, \cbra{\E_m}$ are independent of each other and $\F^B_t$, and where $\E_X \sim N(0,\Sigma(1-t))$, $\E_m \sim N(0,E_m^{-1}), m=1,...,n$.  Then it is clear $\ol{H}_n$ is conditionally normal given $\F^B_t$, and the proof below uses test functions to identify the ratio of the pdfs at $t=t$ and $t=0$. To this end, take test functions $(\varphi_1,...,\varphi_n)$ and let  $A_t\in\F^B_t$.  Using \eqref{E:H_def}, \eqref{E:H_def2}, the independence lemma, the Markov property, and writing $p_C$ as the $N(0,C^{-1})$ pdf we find (throughout we omit the differentials to ease notation)
\begin{equation*}
\begin{split}
\expvs{1_{A_t}\prod_{m=1}^n \varphi_m(H_m) } &= \expvs{1_{A_t}\int\limits_{x,\cbra{y_m}}  p(t,X_t,T,x)\prod_{m=1}^n\varphi_m\left(x + y_m \right) p_{E_m}(y_m)}.
\end{split}
\end{equation*}
Setting $h_m  = x + y_m$, and using $p_{E_m}(h_m-x) = p_{E_m}(h_m)e^{-(1/2)x'E_mx + x'E_m h_m}$ allows us to re-write $\int_{x,\cbra{y_m}} \cdots$ as
\begin{equation*}
\begin{split}
&\int\limits_{x,\ol{h}_n}  \ p(t,X_t,T,x)\prod_{m=1}^n\varphi_m\left(h_m\right)p_{E_m}(h_m-x)\\
&\qquad = \int\limits_{\ol{h}_n}  \left(\prod_{m=1}^n p_{E_m}(h_m)\varphi_m(h_m)\right)\int\limits_{x}  p(t,X_t,T,x) e^{-\frac{1}{2} x'\left(\sum\limits_{m=1}^n E_m\right)x + x'\left(\sum\limits_{m=1}^n E_m h_m\right)},\\
&\qquad = \int\limits_{\ol{h}_n}  \left(\prod_{m=1}^n p_{E_m}(h_m)\varphi_m(h_m)\right)e^{\Lambda\left(t,X_t;M^n, V^n(\ol{h}_n)\right)}.
\end{split}
\end{equation*}
It therefore follows that
\begin{equation}\label{E:olH_n_ident}
\expvs{1_{A_t}\prod_{m=1}^n \varphi_m(H_m) } = \expvs{1_{A_t}\int\limits_{\ol{h}_n} \left(\prod_{m=1}^n\varphi_m(h_m)\right)\times \left(\prod_{m=1}^n p_{E_m}(h_m)\right)e^{\Lambda\left(t,X_t;M^n, V^n(\ol{h}_n)\right)}},
\end{equation}
and hence, as this works for all test functions, and because $\Lambda(t,X_t; M^n,V^n(\ol{h}_n)$ is $\F^B_t$ measurable,
\begin{equation*}
\condprobs{\ol{H}_n \in d\ol{h}_n}{\F^B_t} = \left(\prod_{m=1}^n p_{E_m}(h_m)\right)e^{\Lambda\left(t,X_t;M^n, V^n(\ol{h}_n)\right)} = \left(\prod_{m=1}^n p_{E_m}(h_m)\right)e^{\ell_n\left(t,X_t,\ol{h}_n\right)}.
\end{equation*}
\eqref{E:Lambda_ident} now follows by dividing the answer for $t=t$ by that for $t=0$.  We now establish \eqref{E:Lambda_ident_2}.  Here, we may write
\begin{equation*}
\begin{split}
&(G_k,\ol{H}_n) = \bigg(\mu(X_t,1-t) + \E_X+Y_k, \mu(X_t,1-t) + \E_X+\E_1,...,\mu(X_t,1-t) + \E_X+\E_{k-1},\\
&\quad  \mu(X_t,1-t) + \E_X + Y_k + \frac{1}{\alpha_k}C_k^{-1}Z_k,\mu(X_t,1-t) + \E_X+\E_{k+1},...,\mu(X_t,1-t) + \E_X + E_n\bigg),
\end{split}
\end{equation*}
where  $\E_X, \cbra{\E_m}_{m\neq k}$ are as above, and independent of $Y_k,Z_k$ which themselves are independent of $\F^B_t$.  This implies $(G_k,\ol{H}_n)$ is jointly normal given $\F^B_t$, which using well known properties of Gaussian random vectors, also implies $G_k$ is normally distributed given $\F^B_t, \sigma(\ol{H}_n)$.  The rest of the proof identifies this pdf using test functions.
\begin{equation*}
\begin{split}
&\expvs{1_{A_t} \varphi(G_k)\prod_{m=1}^n \varphi_m(H_m)}\\
&\ = \expvs{1_{A_t}\int\limits_{x,\cbra{y_m,z_m}} p(t,X_t,T,x)\varphi(x+y_k)\prod_{m=1}^n \varphi_m\left(x + y_m + z_m)\right)p_{C_m}(y_m) p_{D_m}(z_m)}.
\end{split}
\end{equation*}
Keeping $x,\cbra{y_m}$ fixed, and writing $h_m = x + y_m + z_m$, the integral may be re-written
\begin{equation*}
\begin{split}
&\int\limits_{x,\cbra{y_m},\ol{h}_n} p(t,X_t,T,x)\varphi(x+y_k)\prod_{m=1}^n \varphi_m\left(h_m\right)p_{C_m}(y_m) p_{D_m}\left(h_m - y_m - x\right).
\end{split}
\end{equation*}
For $m\neq k$, $\int_{y_m} p_{C_m}(y_m)p_{D_m}(h_m-y_m-x) = p_{E_m}(h_m-x)$. This leads to
\begin{equation*}
\begin{split}
&\int\limits_{x,y_k,\ol{h}_n} p(t,X_t,T,x)\varphi(x+y_k)\varphi_k(h_k)p_{C_k}(y_k)p_{D_k}\left(h_k-y_k-x\right)\prod_{m\neq k}^n \varphi_m(h_m) p_{E_m}\left(h_m - x\right).
\end{split}
\end{equation*}
Lastly, we set $g_k = x+y_k$ to obtain
\begin{equation*}
\begin{split}
&\expvs{1_{A_t}\varphi(G_k)\prod_{m=1}^n\varphi_m(H_m)}\\
&\ = \mathbb{E}\bigg[1_{A_t} \int\limits_{x,g_k,\ol{h}_n} p(t,X_t,T,x)\varphi(g_k)\varphi_k(h_k)p_{C_k}(g_k-x)p_{D_k}\left(h_k-g_k\right)\prod_{m\neq k}^n \varphi_m(h_m) p_{E_m}\left(h_m - x\right)\bigg],\\
&\ = \mathbb{E}\bigg[1_{A_t} \int\limits_{g_k,\ol{h}_n} \varphi(g_k)\varphi_k(h_k)p_{C_k}(g_k)p_{D_k}\left(h_k-g_k\right)\left(\prod_{m\neq k}^n \varphi_m(h_m) p_{E_m}\left(h_m\right)\right)\\
&\qquad\qquad \times e^{\Lambda\left(t,X_t; C_k -E_m + M^n, C_k g_k -E_k h_k + V^n(\ol{h}_n)\right)}\bigg].
\end{split}
\end{equation*}
Copying the arguments used to establish \eqref{E:olH_n_ident} gives
\begin{equation*}
\begin{split}
&\expvs{1_{A_t}\chi(t,X_t,\ol{H}_n)\prod_{m=1}^n\varphi_m(H_m)}\\
&\ = \expvs{1_{A_t}\int\limits_{\ol{h}_n} \left(\prod_{m=1}^n p_{E_m}(h_m)\varphi_m(h_m)\right)\chi(t,X_t,\ol{h}_n)e^{\Lambda\left(t,X_t;M^n, V^n(\ol{h}_n)\right)}}.
\end{split}
\end{equation*}
This implies
\begin{equation*}
    \condexpvs{\varphi(G_k)}{\F^B_t \vee \sigma(\ol{H}_n)} = \frac{\int\limits_{g_k} \varphi(g_k)p_{C_k}(g_k)p_{D_k}(h_k-g_k)e^{\Lambda\left(t,X_t; C_k -E_m + M^n, C_k g_k -E_k h_k + V^n(\ol{h}_n)\right)}}{p_{E_k}(h_k)e^{\Lambda\left(t,X_t;M^n, V^n(\ol{h}_n)\right)}}.
\end{equation*}
As this must hold for all test functions $\varphi$ we conclude
\begin{equation*}
\begin{split}
&\condprobs{G_k \in dg_k}{\F^B_t\vee\sigma(\ol{H}_n)}\\
&\ = \frac{p_{C_k}(g_k)p_{D_k}(h_k-g_k)e^{\Lambda\left(t,X_t; C_k -E_m + M^n, C_k g_k -E_k h_k + V^n(\ol{h}_n)\right)}}{p_{E_k}(h_k)e^{\Lambda\left(t,X_t;M^n, V^n(\ol{h}_n)\right)}}.
\end{split}
\end{equation*}
This establishes \eqref{E:Lambda_ident_2}.

\end{proof}


\section{Proofs from Section \ref{S:pce_construct}}\label{AS:last_period}

\begin{proof}[Proof of Proposition \ref{P:eq_last_period}]  The proof follows almost exactly  \cite[Propositions 6.6, 6.7]{detemple2022dynamic} and hence we just provide a sketch, focusing on agent $0$. Write $\wh{\pi}^{0,\ol{h}_N}$ as the optimal policy from Proposition \ref{P:abstract_cont_time}. As clearly $\wh{\pi}^{0,\ol{H}_N} \in \A^{N,0}_{t_N,1}$, it suffices to show it minimizes the conditional expectation in \eqref{E:vf_def_lp}.  To this end, let $\pi = \pi^{\ol{H}_N} \in\A^{N,0}_{t_N,1}$, and to stress the dependence, write $\We^{\pi}_{t_N,1} = \We^{\pi,\ol{H}_N}_{t_N,1}$. As $\pi^{\ol{H}_N}\in \A^{N,0}_{t_N,1}$ there is a set $\E^{\pi,\ol{H}_N}$ of full Lebesgue measure where $\pi^{\ol{h}_N} \in \A^{\ol{h}_N}_{t_N,1}$. Using \eqref{E:U_ident_1}, \eqref{E:intuit_meas} and \eqref{E:intuit_relation}, we  have almost surely
\begin{equation*}
    \begin{split}
        &\condexpvs{e^{-\gamma_0 \We^{\pi,\ol{H}_N}_{t_N,1}}}{\F^m_{t_N}} = 1_{\E^{\pi,\ol{H}_N}}\condexpvs{e^{-\gamma_0 \We^{\pi,\ol{H}_N}_{t_N,1}}}{\F^m_{t_N}},\\
        &\quad = 1_{\E^{\pi,\ol{H}_N}}e^{-\ell_N(t_N,X_{t_N},\ol{H}_N)}\left(\condexpvs{e^{-\gamma_0 \We^{\pi,\ol{h}_N}_{t_N,1}-\frac{1}{2}X_1'M^N X_1 + X_1'V^N(\ol{h}_N)}}{\F^{B}_{t_N}}\right)\bigg|_{\ol{h}_N=\ol{H}_N},\\
        &\quad \geq 1_{\E^{\pi,\ol{H}_N}}e^{-\ell_N(t_N,X_{t_N},\ol{H}_N)}\left(\condexpvs{e^{-\gamma_0 \We^{\wh{\pi},\ol{h}_N}_{t_N,1}-\frac{1}{2}X_1'M^N X_1 + X_1'V^N(\ol{h}_N)}}{\F^{B}_{t_N}}\right)\bigg|_{\ol{h}_N=\ol{H}_N},\\
        &\quad = 1_{\E^{\pi,\ol{H}_N}}\condexpvs{e^{-\gamma_0 \We^{\wh{\pi},\ol{H}_N}_{t_N,1}}}{\F^m_{t_N}} = \condexpvs{e^{-\gamma_0 \We^{\wh{\pi},\ol{H}_N}_{t_N,1}}}{\F^m_{t_N}}.
    \end{split}
\end{equation*}
The same argument (c.f. \cite[Proposition 6.6]{detemple2022dynamic}) yields the optimality of $\wh{\pi}^{k,\ol{H}_N,G_k}$ for agent $k=1,...,N$, where $\wh{\pi}^{k,\ol{h}_N,g_k}$ is from Proposition \ref{P:abstract_cont_time}.  As the clearing condition is satisfied on the signal realization level, it is satisfied on the signal level as well, establishing the existence of the PCE.  The that martingale measures take the form in  \eqref{E:U_mm}, \eqref{E:I_mm} follows immediately from \eqref{E:big_condexp_ident}. Lastly, that the markets are complete is a direct consequence of \eqref{E:last_period_price_function}, as therein $t\to P(1-t) + \ol{M}^N$ is almost surely of full rank for all $t\in [t_N,1]$.
\end{proof}


\begin{proof}[Proof of Lemma \ref{L:sp_ass_verify}]
We start with Assumption \ref{A:cont_time_verify}. For $n=2,...,N$ recall that $\wt{\E}^{0,-}_{t_n}$ is the effective endowment at time $t_n$ (immediately prior to the jump) implied by the gains from all future trading over $[t_n,1]$ for agent $0$. As agent $0$ is allowed not to trade, we know $\wt{\E}^{0,-}_{t_n} \geq 0$ and hence (see equation \ref{E:Ik_prejump_n_endow}) $\check{\E}^{0,-}_n(X_{t_n},\ol{H}_n) \geq -\ell_{n-1}(t_n,X_{t_n},\ol{H}_{n-1})$. Using \eqref{E:N_matrices_def}, \eqref{E:ell_def_0} and the explicit formula for $\Lambda$ in \eqref{E:Lambda_def} we know $-\ell_{n-1}(t_n,x,\ol{h}_{n-1})$ is linear-quadratic in $\mu(x,1-t_n)$ with quadratic matrix
\begin{equation*}
    \begin{split}
       &\frac{1}{2}P(1-t_n) - \frac{1}{2}P(1-t_n)\left(P(1-t_n)+M^{n-1}\right)^{-1}P(1-t_n)\\
       &\qquad = \frac{1}{2}P(1-t_n)\left(P(1-t_n)+M^{n-1}\right)^{-1}M^{n-1}P(1-t_n) \in \sdpos.
    \end{split}
\end{equation*}
Therefore, as $\mu(x,1-t_n)$ is linear in $x$, and as we have already shown $\check{\E}^{0,-}_n(x,\ol{h}_n)$ is linear quadratic, we deduce its quadratic matrix for $x$ satisfies Assumption \ref{A:cont_time_verify}. For the $k^{th}$ insider, $k\leq n-1$ the result follows because $M_k = C_k-E_k \in \sdpos$ and hence $\pmtx'(C_k-E_k)\pmtx$ is non-negative definite.

As for Assumption \ref{A:lq_sp}, again recall that for $n=2,...,N$, $\wt{\E}^{0}_{t_n}$ is the effective endowment at time $t_n$ (right after the jump) implied by the gains from all future trading over $[t_n,1]$ for agent $0$. As such, $\wt{\E}^{0,-}_{t_n} \geq 0$ and hence (see equation \ref{E:checkE_tn_def}) $\check{\E}^0_n(X_{t_n},\ol{H}_n) \geq -\ell_n(t_n,X_{t_n},\ol{H}_n)$. But, this implies for Lebesgue almost every $x,\ol{h}_n$ that, with $P_H = E_n$ in Assumption \ref{A:lq_sp},
\begin{equation*}
    \check{\E}^0_n(x,\ol{h}_n) + \frac{1}{2}h_n'E_n h_n \geq \frac{1}{2}h_n'E_n h_n - \ell_n(t_n,x,\ol{h}_n).
\end{equation*}
Using \eqref{E:N_matrices_def}, \eqref{E:ell_def_0} and the explicit formula for $\Lambda$ in \eqref{E:Lambda_def} the right side above is linear-quadratic in $h_n$ with quadratic matrix
\begin{equation*}
    \begin{split}
       &\frac{1}{2}E_n - \frac{1}{2}E_n\left(P(1-t_n) + M^n\right)^{-1}E_n\\
       &\qquad = \frac{1}{2}E_n\left(P(1-t_n) + M^n\right)^{-1}\left(P(1-t_n) + M^{n-1}\right) \in \sdpos.
    \end{split}
\end{equation*}
Therefore, as we have already shown $\check{\E}^0_n(x,\ol{h}_n)$ is linear quadratic, we deduce its quadratic matrix for $h_n$ satisfies Assumption \ref{A:lq_sp}. For the $k^{th}$ insider, $k\leq n-1$ the result follows because $M_k = C_k-E_k \in \sdpos$ and hence $\pmtx'(C_k-E_k)\pmtx$ is non-negative definite.

\end{proof}


\section{Proofs from Section \ref{S:CJ}}\label{AS:CJ}

\begin{proof}[Proof of Proposition \ref{P:agg_Sig_conv}]
Let us start with the convergence of $F^N_t$. To get an intuitive idea for the result, assume $t$ is such that $n^N_t\dfn 2^N\tau^{-1}(t) - 1$ is an integer for every $N$.  This implies
\begin{equation*}
F^N_t = \sum_{n=1}^{n^N_t} E^N_n = E^N_1 + \sum_{n=2}^{n^N_t} E^N_n .
\end{equation*}
Using the \eqref{E:H_def2} and Assumption \ref{A:converge} we have
\begin{equation*}
C^N_1 = p(0)C_I, \ \alpha^N_1 = \frac{\omega_1}{\gamma(0)},\ D^N_1 \approx \frac{2^N}{\dot{\tau}(0)}D_Z; \quad \Rightarrow\quad E^N_1 \approx p(0)C_I.
\end{equation*}
Next, using \eqref{E:ll_def}
\begin{equation}\label{E:temp_Eval}
\begin{split}
C^N_n &= p\left(\tau\left(\frac{n-1}{2^N}\right)\right)C_I,\  \alpha^N_n = 2^{-N}\frac{\omega\left(\tau\left(\frac{n-1}{2^N}\right)\right)}{\gamma\left(\tau\left(\frac{n-1}{2^N}\right)\right)},\ D^N_n \approx 2^N \frac{1}{\dot{\tau}\left(\frac{n-1}{2^N}\right)}D_Z;\\
&\quad \Rightarrow \quad E^N_n \approx 2^{-N} \frac{\lambda\left(\tau\left(\frac{n-1}{2^N}\right)\right)^2}{\dot{\tau}\left(\frac{n-1}{2^N}\right)}C_I D_Z C_I.
\end{split}
\end{equation}
This implies $\sum_{n=2}^{n^N_t} E^N_n \approx \left(\int_0^{\tau^{-1}(t)} \frac{\lambda\left(\tau\left(u\right)\right)^2}{\dot{\tau}\left(u\right)}du\right)C_I D_Z C_I$ which yields the result for $F^N_t \to F_t$. The proof below makes this precise for all times $t$ by showing the error terms vanish. Indeed, set $n^N_0 = 1$ and for $t\in (0,1)$ define $n^N_t$ to be such that $t \in (t^N_{n^N_t}, t^N_{n^N_t + 1}]$. For all $t\in [0,1]$,
\begin{equation*}
    \tau^{-1}(t)  - \frac{1}{2^N} \leq  \frac{n^N_t}{2^N} < \tau^{-1}(t),
\end{equation*}
so that $n^N_t \to \tau^{-1}(t)$ as $N\to\infty$ and $|n^N_t - \tau^{-1}(t)| \leq 2^{-N}$. Next, recall $r^N_n$ from \ref{E:dyadic}. Then
\begin{equation*}
\begin{split}
F^N_{t} &= \sum_{n=1}^{n^N_t} E^N_n = \sum_{n=1}^{n^N_t} C^N_n\left(1 + \frac{1}{(\alpha^N_n)^2}\left(C^N_n\right)^{-1}(D^N_n)^{-1}\right)^{-1}\\
&= p(0)C_I\left(1+\frac{\gamma(0)^2\tau(2^{-N})}{p(0)(\omega^N_1)^2}C_I^{-1}D_Z^{-1}\right)^{-1}\\
&\quad  + \sum_{n=2}^{n^N_t} p\left(\tau(r^N_{n})\right)C_I\left(1 + 2^{N}\frac{\gamma(\tau(r^N_{n}))^2 2^{N}(\tau(r^N_{n+1})-\tau(r^N_{n}))}{\omega(\tau(r^N_{n})^2p\left(\tau(r^N_{n})\right)} C_I^{-1}D_Z\right)^{-1},\\
&= p(0)C_I\left(1+\frac{\gamma(0)^2\tau(2^{-N})}{p(0)(\omega^N_1)^2}C_I^{-1}D_Z^{-1}\right)^{-1}\\
&\quad  + 2^{-N}\sum_{n=2}^{2^N \times \frac{n^N_t}{2^N}} \frac{\lambda\left(\tau(r^N_{n})\right)^2}{2^{N}(\tau(r^N_{n+1})-\tau(r^N_{n}))}C_I D_Z C_I\bigg(1 + \frac{2^{-N}\lambda(\tau(r^N_n))^2}{p(\tau(r^N_n))2^{N}(\tau(r^N_{n+1})-\tau(r^N_{n}))} D_Z C_I\bigg)^{-1}.
\end{split}
\end{equation*}
Assumption \ref{A:converge}  ensures
\begin{equation*}
    \lim_{N\to\infty} p(0)C_I\left(1+\frac{\gamma(0)^2\tau(2^{-N})}{p(0)(\omega^N_1)^2}C_I^{-1}D_Z^{-1}\right)^{-1} = p(0)C_I.
\end{equation*}
This gives the convergence of $F^N_0$. For $t \in (0,1)$, note that for $n\leq n^N_t$, $r^N_n \leq \tau^{-1}(t) + 2^{-N}$ is uniformly bounded below $1$ for $N$ large enough.  Therefore, Assumption \ref{A:converge} implies $\lambda(\tau(r^N_n))$ and $p(\tau(r^N_n))$ are uniformly bounded.  Additionally, by the mean value theorem $0 < 2^{N}(\tau(r^N_{n+1}) - \tau(r^N_n)) = \dot{\tau}(\xi^N_n)$ for $\xi^N_n \in (r^N_n, r^N_{n+1})$ and hence from Assumption \ref{A:converge} we see it is uniformly  above $0$.  Given this, we conclude by the standard approximation to the integral that
\begin{equation*}
    \begin{split}
    \lim_{N\to\infty} F^N_{t} &= p(0)C_I + \left(\int_0^{\tau^{-1}(t)} \frac{\lambda(\tau(u))^2}{\dot{\tau}(u)}du\right) C_I D_Z C_I.
    \end{split}
\end{equation*}
As for $J_{t}$, given the convergence of $F^N_{n^N_t}$ we need only analyze
\begin{equation*}
\sum_{n=1}^{n^N_t} E^N_n \left(Y^N_n + \frac{1}{\alpha^N_n}\left(C^N_n\right)^{-1}Z^N_n\right).
\end{equation*}
As for $F^N_t$ we start with a heuristic argument. $E^N_1 \approx p(0)C_I$, $Y^N_1 = p(0)^{-1/2}C_I^{-1/2} B^Y_1$ and
\begin{equation*}
\frac{1}{\alpha^N_1}(C_1^N)^{-1}Z^N_1 = \frac{\gamma(0)}{\omega_1 p(0)} C_I^{-1}D_Z^{-1/2}\int_0^{2^{-N}} \sqrt{\dot{\tau}(u)}dB^Z_u \approx 0,
\end{equation*}
imply  $E^N_1(Y^N_1 + (1/\alpha^N_1)(C^N_1)^{-1}Z^N_1 \approx p(0)^{1/2}C_I^{1/2}B^Y_1$.  Next, for $n\geq 2$, using \eqref{E:ll_def}
\begin{equation*}
\begin{split}
Y^N_n &= p\left(\tau\left(\frac{n-1}{2^N}\right)\right)^{-1/2}C_I^{-1/2}(B^Y_n-B^Y_{n-1}),\\
\frac{1}{\alpha^N_1}(C_1^N)^{-1}Z^N_1 &= 2^N \frac{1}{\lambda\left(\tau\left(\frac{n-1}{2^N}\right)\right)} C_I^{-1}D_Z^{-1/2}\int_{\frac{n-1}{2^N}}^{\frac{n}{2^N}}\sqrt{\dot{\tau}(u)}dZ_u,
\end{split}
\end{equation*}
which, along with \eqref{E:temp_Eval} shows
\begin{equation*}
\sum_{n=2}^{n^N_t} E^N_n \left(Y^N_n + \frac{1}{\alpha^N_n}\left(C^N_n\right)^{-1}Z^N_n\right) \approx \int_0^{\tau^{-1}(t)}\frac{\lambda(\tau(u))}{\dot{\tau}(u)}C_I D_Z^{1/2}dZ^u.
\end{equation*}
We now make this argument precise. For $n=1$
\begin{equation*}
\begin{split}
&E^N_1 \left(Y^N_1 + \frac{1}{\alpha^N_1}\left(C^N_1\right)^{-1}Z^N_1\right) = p(0)C_I\left(1 + \frac{\gamma(0)^2\tau(2^{-N})}{p(0)(\omega^N_1)^2} C_I^{-1} D_{Z}^{-1}\right)^{-1}\\
&\qquad\qquad \times \left(p(0)^{-\frac{1}{2}}C_I^{-\frac{1}{2}}B^Y_1 + \frac{\gamma(0)}{p(0)\omega^N_1}C_I^{-1}D_{Z}^{-\frac{1}{2}}\int_0^{\tau(2^{-N})}\sqrt{\dot{\tau}(u)}dB^Z_u\right).
\end{split}
\end{equation*}
The stochastic integral on the right side above has mean $0$ and covariance matrix
\begin{equation*}
   \frac{\tau(2^{-N})}{(\omega_1^N)^2}\times C_I^{-1} D_Z^{-1} C_I^{-1},
\end{equation*}
so that under Assumption \ref{A:converge} we obtain the $L^2$ convergence
\begin{equation*}
\begin{split}
&\lim_{N\to \infty} E^N_1 \left(Y^N_1 + \frac{1}{\alpha^N_1}\left(C^N_1\right)^{-1}Z^N_1\right) = p(0)^{1/2} C_I^{1/2} B^Y_1.
\end{split}
\end{equation*}
This takes care of $J^N_0$. For $t\in (0,1)$ we also have the remaining sum
\begin{equation*}
\begin{split}
&\sum_{n=2}^{n^N_t} E^N_n \left(Y^N_n + \frac{1}{\alpha_{I^N_n}}\left(C^N_n\right)^{-1}Z^N_n\right)\\
&\qquad = \sum_{n=2}^{n^N_t} \frac{\lambda(\tau(r^N_{n}))^2}{2^N(\tau(r^N_{n+1})-\tau(r^N_n))} C_I D_Z C_I\left(1 + 2^{-N}\frac{\lambda(\tau(r^N_{n}))^2}{p(\tau(r^N_{n}))2^{-N}(\tau(r^N_{n+1})-\tau(r^N_n))} D_{Z} C_I\right)^{-1}\\
&\qquad \qquad \times \left(\underbrace{2^{-N}p(\tau(r^N_{n}))^{-\frac{1}{2}} C_I^{-\frac{1}{2}}\left(B^Y_{n+1}-B^{Y}_{n}\right)}_{1} + \underbrace{\frac{1}{\lambda(\tau(r^N_{n}))} C_I^{-1}D_Z^{-\frac{1}{2}}\int_{r^N_n}^{r^N_{n+1}}\sqrt{\dot{\tau}(u)}dB^Z_u}_{2}\right).
\end{split}
\end{equation*}
$t < 1$ implies that $\lambda(\tau(r^N_n)), p(\tau(r^N_n))$ are uniformly bounded, and that $2^N(\tau(r^N_{n+1}) - \tau(r^N_n))$ is uninformed bounded from above and away form $0$. Therefore,  the ``2'' term converges in $L^2$ to
\begin{equation*}
\int_0^{\tau^{-1}(t)} \frac{\lambda(\tau(u))}{\sqrt{\dot{\tau}(u)}}C_I D_Z^{\frac{1}{2}}dB^{Z}_u.
\end{equation*}
The ``1'' term has mean $0$ and covariance matrix
\begin{equation*}
\begin{split}
&2^{-2N}\sum_{n=2}^{n^N_t} \frac{\lambda(\tau(r^N_n))^4}{p(\tau(r^N_n))(2^N (\tau(r^N_{n+1}) - \tau(r^N_n)))^2}C_I D_Z C_I\left(1 + 2^{-N}\frac{\lambda(\tau(r^N_{n}))^2}{ p(\tau(r^N_n)) 2^{-N}(\tau(r^N_{n+1})-\tau(r^N_n))} D_{Z} C_I\right)^{-1}\\
&\quad \times C_I^{-1}\left(1 + 2^{-N}\frac{\lambda(\tau(r^N_{n}))^2}{ p(\tau(r^N_{n})) 2^{-N}(\tau(r^N_{n+1})-\tau(r^N_n))} D_{Z} C_I\right)^{-1}C_I D_Z C_I.
\end{split}
\end{equation*}
As everything is uniformly bounded, this goes to $0$ because
\begin{equation*}
    \begin{split}
        2^{-2N}\sum_{n=2}^{n^N_t} \frac{\lambda(\tau(r^N_n))^4}{p(\tau(r^N_n))(2^N (\tau(r^N_{n+1}) - \tau(r^N_n)))^2} \approx 2^{-N}\int_0^{\tau^{-1}(t)} \frac{\lambda(\tau(u))^4}{p(\tau(u))\dot{\tau}(u)}du,
    \end{split}
\end{equation*}
and $t< 1$. Putting everything together we obtain
\begin{equation*}
\lim_{N\to\infty} \sum_{n=1}^{n^N_t} E^N_n \left(Y^N_n + \frac{1}{\alpha^N_n}\left(C^N_n\right)^{-1}Z^N_n\right) = p(0)^{1/2} C_I^{1/2} B^Y_1 + \int_0^t \frac{\lambda(\tau(u))}{\sqrt{\dot{\tau}(u)}}C_I D_Z^{\frac{1}{2}}dB^{Z}_u,
\end{equation*}
giving the result.
\end{proof}


\begin{proof}[Proof of Lemma \ref{L:N_BM_lem}]  This result follows from \cite[Proposition 2.9]{MR3758346} but we offer a detailed proof using our notation. By construction of $\ell^N_n$ we know $t \to e^{\ell^N_n(t,X_t;\ol{h}^N_n)}$ is a $(\prob,\filt^B)$ Martingale, and a direct calculation using Ito's formula, \eqref{E:Lambda_def}, \eqref{E:theta_Nn_def} and the definition of $J^N_n$ shows
\begin{equation}\label{E:ZtprobjNn_def}
\frac{e^{\ell^N_n(T,X_T,\ol{h}^N_n)}}{e^{\ell^N_n(0,X_0,\ol{h}^N_n)}} = \E\left(\int_0^\cdot (\theta^{N,n}_u(j^N_n))'dB_u\right)_1,
\end{equation}
so that if we define $\tprob^{j^N_n}$ via the stochastic exponential above, then $B^{N,n}(j^N_n)_\cdot \dfn B_\cdot - \int_0^\cdot \theta^{N,n}_u(j^N_n)du$ is a $(\tprob^{j^N_n},\filt^B)$ Brownian motion. Next, from \cite[Theorem 3.2]{MR1775229} (with $Z^{F}\equiv 1$ therein), we know $B$ is a $\left(\prob^{\ol{H}^N_n},\filtg^{N,n}\right)$ Brownian motion, where using Lemma \ref{L:Lambda_ident} the martingale preserving measure $\prob^{\ol{H}^N_n}$  is defined through the Radon-Nikodym derivative
\begin{equation*}
\frac{d\prob^{\ol{H}^N_n}}{d\prob}\big|_{\F^{N,n}_T} = \frac{e^{\ell^N_n(0,X_0,\ol{H}^N_n)}}{e^{\ell^N_n(T,X_T,\ol{H}^N_n)}}.
\end{equation*}
This implies $B^{N,n} = B^{N,n}(J^N_n)$ is a continuous $\left(\prob^{\ol{H}^N_n},\filtg^{N,n}\right)$ semi-martingale, and hence by equivalence of the measures, $B^{N,n}$ is also a continuous $(\prob,\filtg^{N,n})$ semi-martingale. As $B^{N,n}$ has quadratic variation $t 1_d$ at all times $t$, it will be a $(\prob,\filtg^{N,n})$ Brownian motion provided it is a $(\prob,\filtg^{N,n})$ local martingale.  To this end, fix $0 \leq s < t\leq 1$, $A_s\in \F^B_s$ and a test function $\phi$.  As $J^N_n =  (F^N_n)^{-1}\sum_{m=1}^n E^N_m H^N_m$ is a function of $\ol{H}^N_n$, as an abuse of notation we will write $\theta^{N,n}(\ol{H}^N_n)$ and $\theta^{N,n}(J^N_n)$ interchangeably. We then have, for $s\leq t\leq 1$, Lemma \ref{L:Lambda_ident}
\begin{equation*}
\begin{split}
&\expvs{1_{A_s}\phi(\ol{H}^N_n)B^{N,n}_t} =\expvs{1_{A_s} \condexpvs{\phi(\ol{H}^N_n)\left(B_t - \int_0^t \theta^{N,n}_u(\ol{H}^N_n)du\right)}{\F^B_t}},\\
&\qquad = \int_{\ol{h}_n} \phi(\ol{h}^N_n)\left(\expvs{1_{A_s}\left(B_t - \int_0^t \theta^{N,n}_u(\ol{h}^N_n)du\right)\frac{e^{\ell^N_n(t,X_t,\ol{h}^N_n)}}{e^{\ell^N_n(0,X_0,\ol{h}^N_n)}}}\right)\prob\bra{\ol{H}_N \in d\ol{h}^N_n},\\
&\qquad = \int_{\ol{h}_n} \phi(\ol{h}^N_n)]\left(\expv{\tprob^{j^N_n}}{}{1_{A_s} B^{N,n}(j^N_n)_t}\right)\prob\bra{\ol{H}_N \in d\ol{h}^N_n},\\
&\qquad = \int_{\ol{h}_n} \phi(\ol{h}^N_n)]\left(\expv{\tprob^{j^N_n}}{}{1_{A_s} B^{N,n}(j^N_n)_s}\right)\prob\bra{\ol{H}_N \in d\ol{h}^N_n},
\end{split}
\end{equation*}
where the last equality used that $B^{N,n}(j^N_n)$ is a $(\tprob^{j^N_n},\filt^B)$ Brownian motion. Taking $t=s$ in the above chain of equalities gives the result.

\end{proof}


\begin{proof}[Proof of Lemma \ref{L:N_BM_lem2}]

We will prove the martingale property for $t^N_{n-1} < s < t^N_n < t < t^N_{n+1}$.  The other cases are similar. Here, we have
\begin{equation*}
\begin{split}
\condexpvs{B^N_t-B^N_s}{\G^N_s} &= \condexpvs{B_t - B_s - \int_s^{t^N_n} \theta^{N,n-1}(J^N_{n-1})_u du - \int_{t^N_n}^t \theta^{N,n}(J^N_n)_u du}{\G^N_s},\\
&=\condexpvs{\condexpvs{B_t - \int_{t^N_n}^t \theta^{N,n}(J^N_n)_u du}{\G^N_{t^N_t}} - B_s - \int_s^{t^N_n} \theta^{N,n-1}_u du}{\F^G_s},\\
&=\condexpvs{B_{t^N_n} - B_s - \int_s^{t^N_n} \theta^{N,n-1}(J^N_{n-1})_u du}{\G^N_s},\\
&= 0,
\end{split}
\end{equation*}
where we have used Lemma \ref{L:N_BM_lem} twice.
\end{proof}


\begin{proof}[Proof of Lemma \ref{L:theta_converge_p}]

Fix $t\in [0,1)$ and let $n^N_t$ be such that $t \in ( t^N_{n^N_t}, t^N_{n^N_t+1}]$. From \eqref{E:FNt_JNt_def}, \eqref{E:theta_Nn_def} and \eqref{E:theta_N_def} we see that
\begin{equation*}
    \theta^N_t = \sigma'e^{-(1-t)\kappa'}P(1-t)\left(P(1-t)+F^N_t\right)^{-1}F^N_t\left(J^N_t-\mu(X_t,1-t)\right).
\end{equation*}
The result follows immediately from Proposition \ref{P:agg_Sig_conv}.

\end{proof}


\begin{proof}[Proof of Proposition \ref{P:l2bdd}]
Throughout, $t<1$ is fixed. Next, from \eqref{E:theta_Nn_def} we obtain for $u \leq t$ that
\begin{equation*}
\left|\theta^{N,n}_u(J^N_n)\right|^2 = \left(J^N_n - \mu(X_u,1-u)\right)'M^{N,n}_u\left(J^N_n - \mu(X_u,1-u)\right),
\end{equation*}
where $M^{N,n}_u \in \sdpos$ is
\begin{equation*}
M^{N,n}_u = F^N_n(P(1-u)+F^N_n)^{-1}P(1-u)e^{-(1-u)\kappa}\Sigma e^{-(1-u)\kappa'}P(1-u)(P(1-u)+F^N_n)^{-1}F^N_n.
\end{equation*}
As $P(1-u)$ only blows up as $u\to 1$, it is clear for $u\leq t$ that there is a constant $C(t)$ such that $M^{N,n}_u \leq C(t) 1_d$ in the sense of positive definite matrices.  And hence (below, the constant $C(t)$ may change from line to line) we have
\begin{equation*}
\left|\theta^{N,n}_u(J^N_n)\right|^2 \leq C(t)\left( \left|\mu(X_u,1-u)\right|^2 + \left|J^N_n\right|^2\right).
\end{equation*}
Next, take a dyadic rational $r_0= 2^{-N_0}(n_0-1)$ so that $t<\tau(r_0)$. And, for $N>N_0$ let $n^N = 1 + 2^N r_0$ be such that $r_0 = 2^{-N}(n^N -1)$ and hence $\tau(r_0) = t^N_{n^N}$. Using \eqref{E:theta_N_def} we have
\begin{equation*}
    \begin{split}
        \expvs{\int_0^t \left|\theta^N_t\right|^2dt}  &\leq \expvs{\int_0^{t^N_{n^N}} \left|\theta^N_t\right|^2dt} =  \sum_{n=1}^{r_0 2^N} \int_{t^N_n}^{t^N_{n+1}} \expvs{\left|\theta^{N,n}_u(J^N_n)\right|^2} du.
    \end{split}
\end{equation*}
Using \eqref{E:X_trans_density} we know $\mu(X_u,1-u)  = \mu(X_0,1) + \E_X$ where $\E_X \sim N(0, e^{-(1-u)\kappa}\Sigma(u) e^{-(1-u)\kappa'})$. This means
\begin{equation*}
\expvs{\left|\mu(X_u,1-y)\right|^2} = \mu(X_0,1)'\mu(X_0,1) + \mathrm{Tr}\left(e^{-(1-u)\kappa}\Sigma(u) e^{-(1-u)\kappa'}\right),
\end{equation*}
and hence, because $J^N_n \sim N(0,F^N_n)$ we have
\begin{equation*}
\expvs{\left|\theta^{N,n}_u(J^N_n)\right|^2} \leq C(t)\left( 1 + \mathrm{Tr}(F^N_n)\right),
\end{equation*}
so that
\begin{equation*}
    \begin{split}
        \expvs{\int_0^t \left|\theta^N_t\right|^2dt}  &\leq C(t)\left(\tau(r_0) +  \sum_{n=1}^{r_0 2^N} \left(t^N_{n+1}-t^N_n\right)\mathrm{Tr}(F^N_n)\right).
    \end{split}
\end{equation*}
The result essentially follows using the arguments in the proof of Proposition \ref{P:agg_Sig_conv} which showed (see \eqref{E:temp_Eval})
\begin{equation*}
F^N_n = E^N_1 + \sum_{m=2}^n E^N_m  \approx p(0)C_1 + 2^{-N} \frac{\lambda\left(\tau\left(\frac{m-1}{2^N}\right)\right)^2}{\dot{\tau}\left(\tau\left(\frac{m-1}{2^N}\right)\right)} C_I D_Z C_I \leq C(t),
\end{equation*}
because for $m\leq n$ and $n\leq r_0 2^N$ we know $(m-1)/2^N \leq r_0$ which depends upon $t$ but is strictly bounded below $1$. Therefore,
\begin{equation*}
    \begin{split}
        \expvs{\int_0^t \left|\theta^N_t\right|^2dt}  &\leq C(t)\tau(r_0),
    \end{split}
\end{equation*}
giving the result. To make this fully precise, simply note that
\begin{equation*}
E^N_1 = \left(\frac{1}{p(0)}C_I^{-1} + \frac{\gamma(0)^2}{\omega_1^2 p(0)^2}\tau(2^{-N})C_I^{-1}D_Z^{-1}C_I^{-1}\right)^{-1} \leq p(0)C_I,
\end{equation*}
and
\begin{equation*}
\begin{split}
\sum_{m=2}^n E^N_m &= \sum_{m=2}^n \left( \frac{1}{p\left(\tau\left(\frac{m-1}{2^N}\right)\right)}C_I^{-1} + 2^{2N}\frac{\tau\left(\frac{m}{2^N}\right)-\tau\left(\frac{m-1}{2^N}\right)}{\lambda\left(\tau\left(\frac{m-1}{2^N}\right)\right)^2} C_I^{-1}D_Z^{-1}C_I^{-1}\right)^{-1},\\
&\leq \frac{1}{2^{2N}}\sum_{m=2}^n  \frac{\lambda\left(\tau\left(\frac{m-1}{2^N}\right)\right)^2}{\tau\left(\frac{m}{2^N}\right)-\tau\left(\frac{m-1}{2^N}\right)} C_I D_Z C_I
\end{split}
\end{equation*}
By the mean value theorem we have
\begin{equation*}
\tau\left(\frac{m}{2^N}\right)-\tau\left(\frac{m-1}{2^N}\right) = \frac{1}{2^N}\dot{\tau}(\zeta^N_n);\qquad \zeta^N_n \in \left[\frac{m-1}{2^N},\frac{m}{2^N}\right],
\end{equation*}
so that given our regularity assumptions and $m\leq n \leq 2^N r_0$ allow us to conclude
\begin{equation*}
\begin{split}
\sum_{m=2}^n E^N_m &\leq C(t)\frac{n}{2^N} C_I D_Z C_I \leq C(t) C_I D_Z C_I.
\end{split}
\end{equation*}
We can then use these bounds with $\mathrm{Tr}(A) \leq \mathrm{Tr}(B)$ if $B-A \in \sdpos$ to obtain the result.

\nada{

By definition of $\ell^N_n$ (see \eqref{E:Lambda_def}, \eqref{E:ell_def_0}) the Feynman-Kac formula (or direct computation) shows  $(t,x)\to \ell^N_n(t,x;\ol{h}^N_n)$ satisfies the Cauchy PDE
\begin{equation*}
\ell_t + L\ell + \frac{1}{2}\nabla_x \ell'\sigma\sigma'\nabla_ \ell = 0,\qquad \ell(1,\cdot) = \ell^N_n(1,\cdot,\ol{h}^N_n),
\end{equation*}
where $L$ is the extended generator of $X$.  Next, recall that $\sigma'\nabla_x \ell^N_n(t,X_t,\ol{H}^N_n) = \theta^{N,n}_t(J^N_n)$  and note that using Lemma \ref{L:Lambda_ident} and \eqref{E:ell_def_0} we obtain the conditional density
\begin{equation*}
\condprobs{\ol{H}^N_{n} \in d\ol{h}^N_n}{\F^B_t} = e^{\ell^N_n(t,X_t,\ol{h}^N_n)} \left(\prod_{m=1}^n p_{E^N_m}(h^N_n)\right)d\ol{h}^N_n,\qquad t\leq 1.
\end{equation*}
Lastly, write $f^{N,n}_0$ as the unconditional law of $\ol{H}^N_n$ (at $t=0$ above).  As $J^N_n = (F^N_n)^{-1}\sum_{m=1}^n E^N_m H^N_m$ is a function of $\ol{H}^N_n$
\begin{equation*}
    \expvs{\frac{1}{2}\left|\theta^{N,n}_u(J^N_n)\right|^2} =  \int_{\ol{h}^N_n} \expvs{\frac{1}{2}\left|\theta^{N,n}_u(j^N_n)\right|^2\frac{e^{\ell^N_n(u,X_u,\ol{h}^N_n)}}{e^{\ell^N_n(0,X_0,\ol{h}^N_n)}}}f^{N,n}_0(\ol{h}^N_n)d\ol{h}^N_n.
\end{equation*}
Next, recall the measure $\tprob^{j^N_n}$ used in the proof of Lemma \ref{L:N_BM_lem} so that
\begin{equation*}
    \expvs{\frac{1}{2}\left|\theta^{N,n}_u(j^N_n)\right|^2\frac{e^{\ell^N_n(u,X_u,\ol{h}^N_n)}}{e^{\ell^N_n(0,X_0,\ol{h}^N_n)}}} = \expv{\tprob^{j^N_n}}{}{\frac{1}{2}\left|\theta^{N,n}_u(j^N_n)\right|^2}.
\end{equation*}
Write $L^{N,n}$ as the extended generator of $X$ under $\tprob^{j^N_n}$ and note that \eqref{E:ZtprobjNn_def} implies
\begin{equation*}
    \frac{1}{2}\left|\theta^{N,n}_u(j^N_n)\right|^2 = \left(\partial_t + L^{N,n}\right)\ell^N_n(u,X_u,\ol{h}^N_n).
\end{equation*}
As easy calculations show the associated local martingale is a martingale we deduce
\begin{equation*}
    \int_{t^N_{n}}^{t^N_{n+1}} \expv{\tprob^{j^N_n}}{}{\frac{1}{2}\left|\theta^{N,n}_u(j^N_n)\right|^2}du = \expv{\tprob^{j^N_n}}{}{\ell^N_n(t^N_{n+1},X_{t^N_{n+1}},\ol{h}^N_n) -\ell^N_n(t^N_n,X_{t^N_n},\ol{h}^N_n)}.
\end{equation*}
Therefore, using Lemma \ref{L:Lambda_ident} and \eqref{E:ZtprobjNn_def}
\begin{equation*}
    \begin{split}
         &\int_{t^N_{n}}^{t^N_{n+1}}\expvs{\frac{1}{2}\left|\theta^{N,n}_u(J^N_n)\right|^2}du = \int_{\ol{h}^N_n} \expv{\tprob^{j^N_n}}{}{\ell^N_n(t^N_{n+1},X_{t^N_{n+1}},\ol{h}^N_n) -\ell^N_n(t^N_n,X_{t^N_n},\ol{h}^N_n)} f^{N,n}_0(\ol{h}^N_n)d\ol{h}^N_n,\\
         &\qquad = \int_{\ol{h}^N_n} \expvs{\ell^N_n(t^N_{n+1},X_{t^N_{n+1}},\ol{h}^N_n)\frac{e^{\ell^N_n(t^N_{n+1},X_{t^N_{n+1}},\ol{h}^N_n)}}{e^{\ell^N_n(0,X_0,\ol{h}^N_n)}}}f^{N,n}_0(\ol{h}^N_n)d\ol{h}^N_n\\
         &\qquad\qquad -\int_{\ol{h}^N_n} \expvs{\ell^N_n(t^N_n,X_{t^N_n},\ol{h}^N_n)\frac{e^{\ell^N_n(t^N_{n},X_{t^N_{n}},\ol{h}^N_n)}}{e^{\ell^N_n(0,X_0,\ol{h}^N_n)}}} f^{N,n}_0(\ol{h}^N_n)d\ol{h}^N_n\\
         &\qquad = \expvs{\ell^N_n(t^N_{n+1},X_{t^N_{n+1}},\ol{H}^N_n)}  - \expvs{\ell^N_n(t^N_n,X_{t^N_n},\ol{H}^N_n)}.
    \end{split}
\end{equation*}
Let us re-write $\ell^N_n$ from \eqref{E:Lambda_def}, \eqref{E:ell_def_0} as a function of $F^N_n,J^N_n$ from \eqref{E:agg_Sig} (with $\tau=1-t$)
\begin{equation*}
    \begin{split}
    \ell^N_n(t,x,\ol{h}^N_n) &= -\frac{1}{2}\log\left(\left|1 + \Sigma(\tau)F^N_n\right|\right) - \frac{1}{2}\mu(x,\tau)'P(\tau)\mu(x,\tau)\\
    &\qquad + \frac{1}{2}\left( F^N_n j^N_n + P(\tau)\mu(x,\tau)\right)'\left(P(\tau)+F^N_n\right)^{-1}\left( F^N_n j^N_n + P(\tau)\mu(x,\tau)\right).
    \end{split}
\end{equation*}
Write $J^N_n = X_1  \stackrel{\indep}{+} N(0,(F^N_n)^{-1})$ where ``$\indep$'' means this random variable is independent of all other random variables.  Thus, as $X_1 = \mu(X_t,\tau) + e^{-\kappa}\int_t^1 e^{u\kappa}\sigma dB_u + \mu(X_t,\tau) \stackrel{\indep}{+} N(0,\Sigma(\tau))$
\begin{equation*}
    \begin{split}
    F^N_n J^N_n + P(\tau)\mu(X_t,\tau) &= F^N_n\left(X_1 \stackrel{\indep}{+} N(0,(F^N_n)^{-1})\right) + P(\tau)\mu(X_t,\tau),\\
    &= (F^N_n + P(\tau))\mu(X_t,\tau) \stackrel{\indep}{+} F^N_n N(0,\Sigma(\tau)) \stackrel{\indep}{+} N(0,F^N_n).
    \end{split}
\end{equation*}
Continuing, using \eqref{E:X_trans_density} we know $\mu(X_t,1-t)  = \mu(X_0,1) \stackrel{\indep}{+} N\left(0, e^{-(1-t)\kappa}\Sigma(t) e^{-(1-t)\kappa'}\right)$. Given all of this, direct calculation shows
\begin{equation*}
    \begin{split}
        \expvs{\ell^N_n(t,X_t,\ol{h}^N_n)} &= -\frac{1}{2}\log\left(\left|1 + \Sigma(1-t)F^N_n\right|\right) + \frac{1}{2}\mu(X_0,1)'F^N_n\mu(X_0,1)\\
        &\qquad + \frac{1}{2}\mathrm{Tr}\left(F^N_n e^{-(1-t)\kappa}\Sigma(t)e^{-(1-t)\kappa'}\right)\\
        &\qquad + \frac{1}{2}\mathrm{Tr}\left(\left(F^N_n + P(1-t)\right)^{-1}\left(F^N_n + F^N_n\Sigma(1-t)F^N_n\right)\right).
    \end{split}
\end{equation*}
Continuing,
\begin{equation*}
    \begin{split}
    &\left(F^N_n + P(1-t)\right)^{-1}\left(F^N_n + F^N_n\Sigma(1-t)F^N_n\right)\\
    &\qquad \qquad = \left(F^N_n + P(1-t)\right)^{-1}F^N_n\Sigma(1-t)\left(F^N_n + P(1-t)\right),\\
    &\Sigma(1-t) + e^{-(1-t)\kappa}\Sigma(t)e^{-(1-t)\kappa'} = e^{-\kappa}\left(\int_0^1 e^{u\kappa}\Sigma e^{u\kappa'}du \right)e^{-\kappa}.
    \end{split}
\end{equation*}
Therefore, using $\mathrm{Tr}(AB) = \mathrm{Tr}(BA)$ we conclude
\begin{equation}\label{E:ellexp}
    \begin{split}
        \expvs{\ell^N_n(t,X_t,\ol{h}^N_n)} &= -\frac{1}{2}\log\left(\left|1 + \Sigma(1-t)F^N_n\right|\right) + \frac{1}{2}\mu(X_0,1)'F^N_n\mu(X_0,1)\\
        &\qquad + \frac{1}{2}\mathrm{Tr}\left(F^N_n e^{-\kappa}\left(\int_0^1 e^{u\kappa}\Sigma e^{u\kappa'}du \right)e^{-\kappa}\right).
    \end{split}
\end{equation}
With all this preparation, fix $t<1$ and take a dyadic rational $r_0= 2^{-N_0}(n_0-1)$ so that $t<\tau(r_0)$. And, for $N>N_0$ let $n^N = 1 + 2^N r_0$ be such that $r_0 = 2^{-N}(n^N -1)$ and hence $\tau(r_0) = t^N_{n^N}$. Using \eqref{E:theta_N_def} we have
\begin{equation*}
    \begin{split}
        \expvs{\int_0^t \left|\theta^N_t\right|^2dt}  &\leq \expvs{\int_0^{t^N_{n^N}} \left|\theta^N_t\right|^2dt} =  \sum_{n=1}^{n^N-1} \int_{t^N_n}^{t^N_{n+1}} \expvs{\left|\theta^{N,n}_u(J^N_n)\right|^2} du\\
        &= \sum_{n=1}^{n^N-1} \left(\expvs{\ell^N_n(t^N_{n+1},X_{t^N_{n+1}},\ol{H}^N_n)}  - \expvs{\ell^N_n(t^N_n,X_{t^N_n},\ol{H}^N_n)}\right).
    \end{split}
\end{equation*}
As the second two terms on the right in \eqref{E:ellexp} do not depend on $t$, they cancel in the above sum. Therefore, recalling $r^N_n$ from \eqref{E:dyadic} and $t^N_n  = \tau(r^N_n)$
\begin{equation*}
    \begin{split}
        &\expvs{\int_0^t \left|\theta^N_t\right|^2dt}  \leq \sum_{n=1}^{n^N-1} \frac{1}{2}\log\left(\frac{\left|1 + \Sigma(1-t^N_n)F^N_n\right|}{\left|1 + \Sigma(1-t^N_{n+1})F^N_n\right|}\right) = \sum_{n=1}^{n^N-1} \frac{1}{2}\log\left(\frac{\left|(F^N_n)^{-1} + \Sigma(1-t^N_n)\right|}{\left|(F^N_n)^{-1} + \Sigma(1-t^N_{n+1})\right|}\right),\\
        &\quad = \sum_{n=1}^{n^N-1} \frac{1}{2}\log\left(\left|1 + \left((F^N_n)^{-1} + \Sigma\left(1-\tau\left(r^N_{n+1}\right)\right)\right)^{-1} e^{-\kappa}\left(\int_{\tau\left(r^N_n\right)}^{\tau\left(r^N_{n+1}\right)} e^{u\kappa}\Sigma e^{u\kappa'}du\right)e^{-\kappa'}\right|\right),\\
        &\quad = \frac{1}{2}\times \frac{1}{2^N}\sum_{n=1}^{n^N-1} G^N_n,
    \end{split}
\end{equation*}
where
\begin{equation*}
    \begin{split}
    G^N_n &\dfn 2^N\log\left(\frac{\left|(F^N_n)^{-1} + \Sigma(1-t^N_n)\right|}{\left|(F^N_n)^{-1} + \Sigma(1-t^N_{n+1})\right|}\right),\\
    &=2^N\log\left(\left|1 + \left((F^N_n)^{-1} + \Sigma\left(1-\tau\left(r^N_{n+1}\right)\right)\right)^{-1} e^{-\kappa}\left(\int_{r^N_n}^{r^N_{n+1}} e^{\tau(u)\kappa}\Sigma e^{\tau(u)\kappa'}\dot{\tau}(u)du\right)e^{-\kappa'}\right|\right).
    \end{split}
\end{equation*}
If we then define
\begin{equation*}
\begin{split}
G^N(t)\dfn \sum_{n=1}^{2^N} G^N_n 1_{t\in (r^N_n, r^N_{n+1}]},
\end{split}
\end{equation*}
then as $\tau\to\Sigma(\tau)$ is decreasing we see that $G^N(t)\leq C \wt{G}^N(t)$ where
\begin{equation*}
    \begin{split}
    \wt{G}^N_n &=2^N\log\left(\left|1 + \left((F^N_{n})^{-1} + \Sigma\left(1-\tau\left(t\right)\right)\right)^{-1} \left(\tau\left(r^N_{n+1}\right)-\tau\left(r^N_n\right)\right)\right|\right).
    \end{split}
\end{equation*}
and
\begin{equation*}
\begin{split}
\wt{G}^N(t)\dfn \sum_{n=1}^{2^N} \wt{G}^N_n 1_{t\in (r^N_n, r^N_{n+1}]}.
\end{split}
\end{equation*}
Therefore,
\begin{equation*}
    \expvs{\int_0^t \left|\theta^N_t\right|^2dt} \leq  C \int_0^{r_0} \wt{G}^N(t)dt.
\end{equation*}
Next, from \eqref{E:FNt_JNt_def} we see that in $(r^N_n,r^N_{n+1}]$ we may replace $F^N_n$ with $F^N_{\tau(t)}$.  Therefore, as shown in the proof of Proposition \ref{P:agg_Sig_conv}, for $t\leq r_0$ we know $F^N_{\tau(t)}$ is uniformly bounded from above, $\Sigma(1-\tau(t))$ is bounded above $0$ and from above, and $\tau(2^{-N}(n^N_t))-\tau(2^{-N}(n^N_t-1)) = 2^{-N}\dot{\tau}(\zeta^N_t)$ for some $\zeta^N_t \in [2^{-N}(n^N_1-t),2^{-N}n^N_t]$ and hence by Assumption \ref{A:converge} we know $\dot{\tau}(\zeta^N_t)$ is also uniformly bounded above $0$ and from above. Therefore, using the identity $\log(1+x) \leq x$ for $x>0$ there is a constant $C(r_0)$ such that for $t\leq r_0$ we have
\begin{equation*}
    \wt{G}^N(t) \leq 2^N\log\left(1+C(r_0) 2^{-N}\right) \leq C(r_0),
\end{equation*}
which gives the result.

}

\end{proof}


\begin{proof}[Proof of Proposition \ref{P:filt_converge}]

This essentially follows from \cite[Proposition 3.29]{neufcourtthesis}.  Indeed, by Lemma \ref{L:theta_converge_p} we know that $J^n_t\to J_t$ in probability for $t\in [0,1)$.  As $t\to J_t$ is continuous, we conclude that the finite dimensional distributions of $J^n$ converge to those for $J$ on $[0,1)$. \cite[Propositions 3.9 3.28, 3.29]{neufcourtthesis} then give the convergence for the natural filtrations of $(B,J^N)$ to that of $(B,J)$. Next,  \cite[Proposition 3.6]{neufcourtthesis} can easily be modified to allow for convergence in $L^p$.  Indeed, using the notation of \cite[Proposition 3.6]{neufcourtthesis}, if $Y^n\subseteq Z^n$ and $Y$ are sigma-algebras such that $Y^n\stackrel{L^p}{\to} Y$ and $Y^n\subseteq Z^n$, then for $B\in Y$ we have for $1<p\leq 2$ that
\begin{equation*}
    \begin{split}
    \expvs{\left| \condexpvs{1_B}{Z^n}  - 1_B\right|^p} &\leq \expvs{\left| \condexpvs{1_B}{Z^n}  - 1_B\right|^2}^{p/2},\\
    &\leq \expvs{\left| \condexpvs{1_B}{Y^n}  - 1_B\right|^2}^{p/2},\\
    &\leq 2^{\frac{(2-p)p}{2}} \expvs{\left| \condexpvs{1_B}{Y^n}  - 1_B\right|^p}^{p/2}.
    \end{split}
\end{equation*}
Similarly, if $p>2$ then
\begin{equation*}
    \begin{split}
    \expvs{\left| \condexpvs{1_B}{Z^n}  - 1_B\right|^p} &\leq 2^{p-2}\expvs{\left| \condexpvs{1_B}{Z^n}  - 1_B\right|^2},\\
    &\leq 2^{p-2}\expvs{\left| \condexpvs{1_B}{Y^n}  - 1_B\right|^2},\\
    &\leq 2^{p-2}\expvs{\left| \condexpvs{1_B}{Y^n}  - 1_B\right|^p}^{p/2}.
    \end{split}
\end{equation*}
Therefore, \cite[Proposition 3.6]{neufcourtthesis} allows us to say that $\filtg^{N}$ converges to the natural filtration of $(B,J)$ in $L^p$ and hence to $\filtg_0$ as it the natural filtration of $(B,J)$ augmented by $\prob$ null sets.
\end{proof}


\begin{proof}[Proof of Theorem \ref{T:limit_bm}]

Fix $t\in (0,1)$. Lemma \ref{L:theta_converge_p} that shows that for $u\in[0,t]$, $\theta_u^N \to \theta_u$ in probability, and hence $\theta^N\to\theta$ in $\prob\times\Leb_{[0,t]}$ probability. Additionally, Proposition \ref{P:l2bdd} shows $\cbra{\theta^N}$ is bounded in $L^2(P\times \Leb_{[0,t]})$. Therefore, Fatou's lemma implies $\theta \in L^2(P\times \Leb_{[0,t]})$ and by uniform integrability we know $\theta^{N}\to \theta$ in $L^p(P\times \Leb_{[0,t]})$ for any $1\leq p < 2$. Now, fix $s < u\leq t$ and let $A_s \in \G_{0,s}$. Then
\begin{equation*}
    \begin{split}
        &\expvs{1_{A_s}\left(B_u - B_s - \int_s^u \theta_v dv\right)}  = \expvs{\left(1_{A_s} - \condexpvs{1_{A_s}}{\G^N_s}\right)\left(B_u - B_s - \int_s^u \theta_v dv\right)}\\
        &\qquad + \expvs{\condexpvs{1_{A_s}}{\G^N_s}\int_s^u \left(\theta^N_v - \theta_v\right) dv} + \expvs{\condexpvs{1_{A_s}}{\G^N_s}\left(B_u - B_s -\int_s^u \theta^N_v dv\right)}.
    \end{split}
\end{equation*}
The third term on the right is $0$ as can be seen by conditioning on $\G^N_s$ and using Lemma \ref{L:N_BM_lem2}. The second term goes to $0$ because  $\theta^{N}\to \theta$ in $L^1(P\times \Leb_{[0,t]})$. The first term also goes to $0$ because for $1 < q < 2$ and $p>1$ such that $1/p + 1/q = 1$ we have
\begin{equation*}
    \begin{split}
        &\expvs{\left|1_{A_s} - \condexpvs{1_{A_s}}{\G^N_s}\right|\left|B_u - B_s - \int_s^u \theta_v dv\right|}\\
        &\qquad \leq \expvs{\left|1_{A_s} - \condexpvs{1_{A_s}}{\G^N_s}\right|^p}^{1/p}  \expvs{\left|B_u - B_s - \int_s^u \theta_v dv\right|^{q}}^{1/q}.
    \end{split}
\end{equation*}
The first term goes to $0$ as $\G^N \to \G$ in $L^p$. The second term is finite as $q<2$.  The $(\prob,\filtg_0)$ martingale property for $B_\cdot - \int_0^\cdot \theta_u du$ readily follows.  Lastly, we extend to $\filtg$, using \cite[VI.Theorem 2]{MR2273672}.
\end{proof}


\begin{proof}[Proof of Proposition \ref{P:teq1}]
By changing variables we have
\begin{equation*}
    \int_0^1 \mathrm{Tr}\left(\left(\Sigma(1-t) + F_t^{-1}\right)^{-1}\right)dt = \int_0^1 \mathrm{Tr}\left(\left(\Sigma(1-t) + F_t^{-1}\right)^{-1}\right)\dot{\tau}(t)dt.
\end{equation*}
Under Assumption \ref{A:converge} this integral is always finite over $[0,1-\eps]$ for any small $\eps>0$. Thus, we only focus on the integral over $[1-\eps,1]$. Next, recall $F_{\tau(t)}$ from Proposition \ref{P:agg_Sig_conv}
\begin{equation*}
\begin{split}
F_{\tau(t)} =  p(0)C_I + \left(\int_0^{t}\frac{\lambda(\tau(u))^2}{\dot{\tau}(u)}du\right)C_I D_{Z} C_I \rdfn A + g(t)B.
\end{split}
\end{equation*}
With this notation,
\begin{equation*}
    \begin{split}
    &\int_{1-\eps}^1 \mathrm{Tr}\left(\left((F_{\tau(t)})^{-1} + \Sigma(1-\tau(t))\right)^{-1}\right)\dot{\tau}(t)dt\\
    &\qquad = \int_{1-\eps}^1 \frac{g(t)\dot{\tau}(t)}{1+(1-\tau(t))g(t)}\times \mathrm{Tr}\bigg( \bigg(\frac{1+\Sigma(1-\tau(t))A}{1+(1-\tau(t))g(t)} + \frac{(1-\tau(t))g(t)B}{1+(1-\tau(t))g(t)}\\
    &\qquad\qquad \times\frac{1}{1-\tau(t)}\Sigma(1-\tau(t))\bigg)^{-1}\times \left(\frac{1}{g(t)}A + B\right)\bigg)dt
    \end{split}
\end{equation*}
It is straightforward to show that for $t\in [1-\eps,1]$
\begin{equation*}
    \mathrm{Tr}\bigg( \bigg(\frac{1+\Sigma(1-\tau(t))A}{1+(1-\tau(t))g(t)} + \frac{(1-\tau(t))g(t)B}{1+(1-\tau(t))g(t)}\frac{1}{1-\tau(t)}\Sigma(1-\tau(t))\bigg)^{-1}\times \left(\frac{1}{g(t)}A + B\right)\bigg)
\end{equation*}
is uniformly bounded above $0$ and from above. Thus,
\begin{equation*}
    \int_{1-\eps}^1 \mathrm{Tr}\left(\left((F_{\tau(t)})^{-1} + \Sigma(1-\tau(t))\right)^{-1}\right)\dot{\tau}(t)dt <\infty \quad \Longleftrightarrow \quad \int_{1-\eps}^1 \frac{g(t)\dot{\tau}(t)}{1+(1-\tau(t))g(t)}dt < \infty.
\end{equation*}
This proves the first statement. As for $\wt{B}$, assume $Q<\infty$ so that $\wt{B}_1$ is well defined and adapted to $\filtg_0,\filtg$. Next let $s < t$ and $A_s$ be a set in either $(\G_0)_s)$ or $\G_s$. Then, for any $s < u <1$, using the martingale property (c.f. Theorem \ref{T:limit_bm}) over $[0,u]$ we deduce
\begin{equation*}
    \begin{split}
        \expvs{1_{A_s}(B_1-B_s - \int_s^1 \theta_v dv)} = \expvs{1_{A_s}\left(B_1-B_u - \int_u^1 \theta_v dv\right)} \leq  C\expvs{(B_1-B_u)^2 + \int_u^1 |\theta_v|^2 dv}
    \end{split}
\end{equation*}
The result follows taking $u\uparrow 1$.
\end{proof}




\nada{

The time $a$ certainty equivalents are given in the following proposition.   The purpose of providing the lengthy formula below is to show the certainty equivalent is linear-quadratic in $X$ and the appropriate market signals.

\begin{prop}\label{P:abstract_cont_time_welfare}
Write $\mu = \mu(X_a,b-a)$, $\Sigma = \Sigma(b-a)$ and $P=\Sigma^{-1}$.  For $j=1,...,J$, the time $a$ certainty equivalent for agent $j$ is
\begin{equation*}
    \begin{split}
        \gamma_j \E^j_a &\dfn -\log\left(\condexpvs{e^{-\gamma_j \left(\wh{\We}^j_{a,b} + \E^j_b\right)}}{\F^B_a}\right),\\
        &= \ol{\lambda}^j + \frac{1}{2}\log\left(\left|1_d + \Sigma \ol{M}\right|\right) - \frac{1}{2}\mathrm{Tr}\left((\ol{M}-\ol{M}^j)(P+\ol{M})^{-1}\right)\\
        &\qquad + \ol{V}'(P + \ol{M})^{-1}\ol{V}^j  + \frac{1}{2}\ol{V}'(P + \ol{M})^{-1}(P+ \ol{M}^j)(P+ \ol{M})^{-1}\ol{V}\\
        &\qquad + \mu'P(P + \ol{M})^{-1}\left(\ol{V}^j - (\ol{M} - \ol{M}^j)(P + \ol{M})^{-1}\ol{V}\right)\\
        &\qquad + \frac{1}{2}\mu'P(P + \ol{M})^{-1}(\ol{M}\Sigma\ol{M} + \ol{M}^j)(P + \ol{M})^{-1}P\mu.
    \end{split}
\end{equation*}
\end{prop}

\begin{rem}\label{R:lin_quad}
In the above certainty equivalents, the dependence on $X_a$ enters only, and linearly, through $\mu$.  Additionally, when we use these results below, the $\{\ol{V}^j\}$ will be affine in the relevant public signal realizations $\cbra{h_j}$.   Therefore, prices are jointly linear $x,\cbra{h_j}$, and certainty equivalents are jointly linear quadratic in $x,\cbra{h_j}$.
\end{rem}
}



\nada{

\begin{prop}\label{P:abstract_single_period}
Under Assumption \ref{A:sp_ass}, there is a single period equilibrium with price
\begin{equation}\label{E:abstract_single_period_px}
    p = V + M \left(P_H + \ol{M}\right)^{-1}\ol{V}.
\end{equation}
\end{prop}

We also record the certainty equivalents in the following
\begin{prop}\label{P:abstract_single_period_welfare}
The certainty equivalent for agent $j$ is
\begin{equation*}
    \begin{split}
        \E^{j-} &\dfn -\frac{1}{\gamma_j} \log\left(\expvs{e^{-\gamma_j \left((\wh{\pi}^j)'(S-p) + \E^j\right)}}\right),\\
        &= \ol{\lambda}^j + \frac{1}{2}\log\left(\left|1_d + P^{-1}_H \ol{M}\right|\right)  + \ol{V}'(P_H + \ol{M})^{-1}\ol{V}^j \\
        &\qquad + \frac{1}{2}\ol{V}'(P_H + \ol{M})^{-1}(P_H + \ol{M}^j)(P_H + \ol{M})^{-1}\ol{V}.\\
    \end{split}
\end{equation*}
\end{prop}

}



\nada{

\begin{equation*}
    \begin{split}
    \inf_{\pi \in \wt{\A}^{N,\ol{h}_N}_{t_N,1}}& \condexpvs{e^{-\gamma_0 \We^{\pi}_{t_N,1} -\frac{1}{2}X_1' M^{N} X_1 + X_1'V^N(\ol{h}_N)}}{\F^B_{t_N}},\\
    \inf_{\pi \in \wt{\A}^{N,\ol{h}_N}_{t_N,1}}& \condexpvs{e^{-\gamma_{k }\We^{\pi}_{t_N,1} -\frac{1}{2}X_1' M^{N} X_1 + X_1'V^N(\ol{h}_N) -\frac{1}{2} X_1'(C_k-E_k)X_1 + X_1'\left(C_k g_k -E_k h_k\right)}}{\F^{B}_{t_N}},
    \end{split}
\end{equation*}
where the admissible trading strategies $\wt{\A}^{N,\ol{h}_N}_{t_N,1}$ are defined in \eqref{E:last_period_acc} below. With an eye toward Proposition \ref{P:abstract_cont_time}, for a fixed $\ol{h}_N$, we let $Z^{N,\ol{h}_N}$ be an $\filt^B$ adapted, strictly positive martingale, and define for $t\in [t_N,1]$
\begin{equation*}
    S^{N,\ol{h}_N}_t \dfn \condexpv{\qprob^{N,\ol{h}_N}}{}{X_1}{\F^B_t};\qquad \frac{d\qprob^{N,\ol{h}_N}}{d\prob} = \frac{Z_1^{N,\ol{h}_N}}{\expvs{Z_1^{N,\ol{h}_N}}}.
\end{equation*}
$\We^{\pi}_{a,b}$ is from \eqref{E:wealth_process_def} using $S^{N,\ol{h}_N}$, and
\begin{equation}\label{E:last_period_acc}
\begin{split}
\wt{\A}^{N,\ol{h}_N}_{t_N,1} &\dfn \bigg\{\pi\in\mcp\left(\filt^B\right) \such \pi \textrm{ is $S^{N,\ol{h}_N}$ integrable over }[t_N,1], \condexpv{\qprob^{N,\ol{h}_N}}{}{\We^{\pi}_{t_N,1}}{\F^B_{t_N}}\leq 0\bigg\}.
\end{split}
\end{equation}
As a last preparatory step, we remove the terms linear in $X_1$ in the endowments, as $S_1 = X_1$, by translating the agent strategies
\begin{equation*}
    \pi^U_t \longrightarrow \frac{1}{\gamma_0}\sum_{n=1}^N E_n h_n + \pi_t;\qquad \pi^{I_k}_t \longrightarrow \frac{1}{\gamma_k}\left(C_k g_k + \sum_{n=1,n\neq k}^N E_n h_n\right)+ \pi_t.
\end{equation*}
This leads to the problems
\begin{equation*}
    \begin{split}
        \inf_{\pi \in \wt{\A}^{N,\ol{h}_N}_{a,1}}& \condexpvs{e^{-\gamma_0 \We^{\pi}_{t_N,1} - \frac{1}{2}X_1'\ol{M}^{N,0} X_1}}{\F^B_{t_N}},\qquad \inf_{\pi \in \wt{\A}^{N,\ol{h}_N}_{t_N,1}}\condexpvs{e^{-\gamma_{k}\We^{\pi}_{t_N,1} -\frac{1}{2}X_1'\ol{M}^{N,k} X_1 }}{\F^{B}_{t_N}},
    \end{split}
\end{equation*}
along with adjusted outstanding supply, once we add in the noise term $\sum_{k=1}^N z_k$,
\begin{equation*}
    \Psi^N = \Pi - \frac{1}{\ol{\gamma}_N} \sum_{n=1}^N E_n h_n - \sum_{n=1}^N \alpha_n (C_n-E_n)h_n;\qquad \ol{\gamma}_N^{-1} \dfn \sum_{n=1}^N \alpha_n.
\end{equation*}

}



\nada{

Assuming $Z$ enforces a complete market, the first order optimality conditions are
\begin{equation*}
    -\gamma_j \wh{\We}^j_{a,b} = \frac{1}{2}X_b'\ol{M}^j X_b + X_b'\ol{V}^j + \log \left(Z_{a,b}\right) - c^j_a,
\end{equation*}
where $c^j_a$ is $\F^B_a$ measurable. Therefore, in equilibrium it must be that
\begin{equation*}
    \Psi'S_b = \sum_{j=1}^J \omega_j\wh{\We}^j_{a,b} - c_a,
\end{equation*}
where $c_a$ is $\F_a$ measurable (and may change from line to line below). Plugging in for $S_b$, $\wh{\We}^j_{a,b}$, using $\alpha_j = \omega_j/\gamma_j$ and using the definition of $\gamma$, $\ol{M}$ we obtain
\begin{equation*}
    \Psi'\left(M X_b + V\right) = -\frac{1}{2\gamma }X_b'\ol{M} X_b - \frac{1}{\gamma}X_b'\left(\gamma\sum_{j=1}^J \alpha_j \ol{V}^j\right) - \frac{1}{\gamma}\log(Z_{a,b}) - c_a,
\end{equation*}
Using the formula for $\ol{V}$, we see that
\begin{equation*}
\log(Z_{a,b}) = -\frac{1}{2}X_b'\ol{M}X_b - X_b'\ol{V} - c_a,
\end{equation*}
and hence $Z_{a,b}$ takes the form in \eqref{E:abstract_cont_time_z} at $t=b$.  The form for $t\in [a,b]$ follows by conditioning.  Next, and similarly to \eqref{E:Lambda_def}, for $t\in [a,b], x\in\reals, \ol{M}\in\sdpos$ and $\ol{V}\in\reals^d$ define
\begin{equation}\label{E:Lambda_def_b}
\begin{split}
    \Lambda(t,x;b, \ol{M},\ol{V}) &\dfn \log\left(\condexpvs{e^{-\frac{1}{2}X_b'\ol{M} X_b + X_b'\ol{V}}}{X_t = x}\right).
\end{split}
\end{equation}
As $X_b \sim N(\mu(x,b-t),P(b-t)^{-1})$ given $X_t = x$, we deduce
\begin{equation}\label{E:Lambda_value}
    \begin{split}
    \Lambda(t,x;b,\ol{M},\ol{V}) &= -\frac{1}{2}\log\left(|1_d + P(b-t)^{-1}\ol{M}|\right)  - \frac{1}{2}\mu(x,b-t)'P(b-t)\mu(x,b-t)\\
    &\qquad + \frac{1}{2}\left(\ol{V} + P(b-t)\mu(x,b-t)\right)'\left(P(b-t) + \ol{M}\right)^{-1}\\
    &\qquad \qquad \times \left(\ol{V} + P(b-t)\mu(x, b-t)\right).
\end{split}
\end{equation}
Additionally, for $i,j=1,...,d$ (and suppressing the function arguments)
\begin{equation}\label{E:Lambda_derivs}
    \begin{split}
    &\frac{\condexpvs{X^i_b e^{-\frac{1}{2}X_b'\ol{M} X_b + X_b'\ol{V}}}{\F^B_a}}{\condexpvs{e^{-\frac{1}{2}X_b'\ol{M} X_b + X_b'\ol{V}}}{\F^B_a}} = \partial_{\ol{V}_i} \Lambda,\quad \frac{\condexpvs{X^i_b X^j_b e^{-\frac{1}{2}X_b'\ol{M} X_b + X_b'\ol{V}}}{\F^B_a}}{\condexpvs{e^{-\frac{1}{2}X_b'\ol{M} X_b + X_b'\ol{V}}}{\F^B_a}} = \partial_{\ol{V}_i \ol{V}_j} \Lambda + \partial_{\ol{V}_i} \Lambda\ \partial_{\ol{V}_j} \Lambda.
    \end{split}
\end{equation}
Given this, the price process is
\begin{equation*}
    S_t = V + M\frac{\condexpvs{X_be^{ - \frac{1}{2}X_b'\ol{M} X_b + X_b'\ol{V}}}{\F^B_t}}{\condexpvs{e^{- \frac{1}{2}X_b'\ol{M} X_b + X_b'\ol{V}}}{\F^B_t}} = V + M \nabla_V \Lambda\left(t,X_t;b,\ol{M},\ol{V}\right);\qquad t\in [a,b],
\end{equation*}
where the second equality holds using the Markov property. $S$ is clearly well-defined and a martingale over $[a,b]$ under the measure $\qprob$ induced by $Z$.  \eqref{E:abstract_cont_time_px} follows from \eqref{E:Lambda_value}, which in turn implies $S$ has non-degenerate volatility, and hence the market is complete.  Lastly, note that $S$ has instantaneous quadratic covariation matrix
\begin{equation*}
d\langle S, S\rangle_t = \wh{M}_t \Sigma \wh{M}_t' dt;\qquad \wh{M}_t = M\left(P(b-t)+\ol{M}\right)^{-1}P(b-t)e^{-(1-t)\kappa},
\end{equation*}
which shows $\langle S, S \rangle_t$ is deterministic. As for the optimal trading strategies, revisiting the first order optimality conditions we have (below $c^j_a$ may change from line to line)
\begin{equation*}
    \begin{split}
    \gamma_j \wh{\We}^j_{a,b} &= \frac{1}{2}X_b'\left(\ol{M}-\ol{M}^j\right)X_b - X_b'\left(\ol{V}+\ol{V}^j\right)  - c^j_a,\\
    &= \frac{1}{2}S_b'(M')^{-1}\left(\ol{M}-\ol{M}^j\right)M^{-1}S_b - S_b'(M')^{-1}\left(\left(\ol{M}-\ol{M}^j\right)V + \ol{V} + \ol{V}^j\right) - c^j_a.
    \end{split}
\end{equation*}
Stochastic integration by parts, that $|X_b|^2$ is $\qprob$ integrable, and $\langle S, S\rangle$ being deterministic imply that for $p,q = 1,...,d$ the terminal payoff $S^p_b S^q_d$ is replicated using the strategy
\begin{equation*}
\left(\pi^{pq}_t\right)^r = 1_{p=r}S^q_t + 1_{q=r}S^p_t;\qquad r= 1,...,d,
\end{equation*}
and trivially $S^p_b$ is replicated by holding one share of $S^b$.  Using the symmetry of $(M')^{-1}(\ol{M}-\ol{M}^j)M^{-1}$ we see the optimal strategy satisfies
\begin{equation*}
    \begin{split}
    \gamma_j \wh{\pi}^j_{t} = (M')^{-1}\left(\ol{M}-\ol{M}^j\right)M^{-1}S_t - (M')^{-1}\left(\left(\ol{M}-\ol{M}^j\right)V + \ol{V} + \ol{V}^j\right),
    \end{split}
\end{equation*}
which finishes the result.

}



\nada{

\begin{proof}[Proof of Proposition \ref{P:abstract_cont_time_welfare}]

Using the first order optimality conditions
\begin{equation*}
    \begin{split}
        \gamma_j \E^j_a &= -\log\left(\condexpvs{e^{-\gamma_j \left(\wh{\We}^j_{a,b} + \E^j_b\right)}}{\F^B_a}\right) = \ol{\lambda}^j  + c^j_a,
    \end{split}
\end{equation*}
where
\begin{equation*}
    \begin{split}
    c^j_a &= \condexpv{\qprob}{}{\frac{1}{2}X_b'\ol{M}^j X_b + X_b'\ol{V}^j + \log \left(Z_{a,b}\right)}{\F^B_a}.
    \end{split}
\end{equation*}
Using \eqref{E:Lambda_def_b} computation shows
\begin{equation*}
    \begin{split}
        &\frac{1}{2}X_b'\ol{M}^j X_b + X_b'\ol{V}^j + \log \left(Z_{a,b}\right)\\
        &\qquad = -\frac{1}{2}X_b'(\ol{M}-\ol{M}^j)X_b + X_b'(\ol{V}  + \ol{V}^j)  - \Lambda\left(a,X_a;b,\ol{M},\ol{V}\right).
    \end{split}
\end{equation*}
To ease notation we will write $\Lambda$ for $\Lambda(a,X_a;b,\ol{M},\ol{V})$ and $\mu,P$ for $\mu(X_a,b-1),\Sigma(b-a)^{-1}$. We also note from \eqref{E:Lambda_value} that
\begin{equation*}
    D^2_V \Lambda = \left(P + \ol{M}\right)^{-1};\qquad \nabla_V \Lambda = \left(P + \ol{M}\right)^{-1}\left(P\mu + \ol{V}\right).
\end{equation*}
Therefore, using \eqref{E:Lambda_derivs}
\begin{equation*}
    \begin{split}
        c^j_a &= -\frac{1}{2}\mathrm{Tr}\left((\ol{M} - \ol{M}^j)D^2_V \Lambda\right)  - \frac{1}{2}\nabla_V \Lambda'(\ol{M} - \ol{M}^j)\nabla_V \Lambda + \nabla_V \Lambda'\left(\ol{V} + \ol{V}^j\right)  - \Lambda.
    \end{split}
\end{equation*}
Plugging in for $\Lambda,\nabla_V \Lambda$ and $D^2_V\Lambda$ we obtain
\begin{equation*}
    \begin{split}
        c^j_a &= -\frac{1}{2}\mathrm{Tr}\left((\ol{M} - \ol{M}^j)\left(P + \ol{M}\right)^{-1}\right)\\
        &\qquad - \frac{1}{2}(P\mu + \ol{V})'(P+\ol{M})^{-1}(\ol{M} - \ol{M}^j) (P+\ol{M})^{-1}(P\mu + \ol{V})\\
        &\qquad +  \left(P\mu + \ol{V}\right)'\left(P + \ol{M}\right)^{-1}(\ol{V}+\ol{V}^j) + \frac{1}{2}\log\left(\left|1_d + P^{-1}\ol{M}\right|\right) + \frac{1}{2}\mu'P\mu\\
        &\qquad - \frac{1}{2}\left(P\mu + \ol{V}\right)'(P+\ol{M})^{-1}\left(P\mu + \ol{V}\right).
    \end{split}
\end{equation*}
The result follows by grouping according to powers of $P\mu$.
\end{proof}

}



\nada{

\begin{proof}[Proof of Proposition \ref{P:abstract_single_period}]
Similarly to \eqref{E:Lambda_def_b}, for $M\in\mathbb{S}^d$ with $M + P_H\in\sdpos$ and $V\in\reals$ define
\begin{equation*}
\begin{split}
    \Lambda(M,V) &\dfn \log\left(\expvs{e^{-\frac{1}{2}H'M H + H'V}}\right) = -\frac{1}{2}\log\left(|1_d + P^{-1}_H M|\right)   + \frac{1}{2}V'\left(P_H + M\right)^{-1}V,
\end{split}
\end{equation*}
so that, for $i,j=1,...,d$ (and suppressing the function arguments)
\begin{equation*}
    \begin{split}
    &\frac{\expvs{H^i e^{-\frac{1}{2}H'M H + H'V}}}{\expvs{e^{-\frac{1}{2}H'M H + H'V}}} = \partial_{V_i} \Lambda,\quad \frac{\expvs{H^i H^j e^{-\frac{1}{2}H'M H + H'V}}}{\expvs{e^{-\frac{1}{2}H'M H + H'V}}} = \partial_{V_iV_j} \Lambda + \partial_{V_i} \Lambda\times \partial_{V_i} \Lambda.
    \end{split}
\end{equation*}
Therefore, under Assumption \ref{A:sp_ass}, agent $j$ wishes to minimize over $\pi$
\begin{equation*}
    \gamma_j \pi'(p-V) + \Lambda\left(\ol{M}^j, -\ol{V}^j - \gamma_j M'\pi\right),
\end{equation*}
which is equivalent to minimizing
\begin{equation*}
    \frac{1}{2}(\gamma_j\pi)'M\left(P_H+\ol{M}^j\right)^{-1}M'(\gamma_j\pi) - (\gamma_j\pi)'\left(-M\left(P_H+\ol{M}^j\right)^{-1}\ol{V}^j + V - p\right).
\end{equation*}
This has optimal answer
\begin{equation*}
    \gamma_j \wh{\pi}^j = (M')^{-1}\left(- \ol{V}^j + (P_H+\ol{M}^j)M^{-1}(V - p)\right).
\end{equation*}
Therefore, in equilibrium it must be that
\begin{equation*}
    \begin{split}
    \Psi &= \sum_{j=1}^J \alpha_j(M')^{-1}\left(- \ol{V}^j + (P_H+\ol{M}^j)M^{-1}(V - p)\right),\\
    &=\frac{1}{\gamma}(M')^{-1}\left(- \gamma\sum_{j=1}^J \alpha_j \ol{V}^j + (P_H+\ol{M})M^{-1}(V - p)\right).
    \end{split}
\end{equation*}
This gives the price $p$ in \eqref{E:abstract_single_period_px}.
\end{proof}


\begin{proof}[Proof of Proposition \ref{P:abstract_single_period_welfare}]
With the optimal $\wh{\pi}^j$ we have
\begin{equation*}
    \begin{split}
        \gamma_j \E^{j-} &= \ol{\lambda}^j  + \frac{1}{2}\log\left(\left|1_d + P_H^{-1}\ol{M}\right|\right)  - \gamma_j (\wh{\pi}^j)'(p-V)\\
        &\qquad - \frac{1}{2}\left(\ol{V}^j + \gamma_j M'\wh{\pi}^j \right)'\left(P_H+\ol{M}^j\right)^{-1}\left(\ol{V}^j + \gamma_j M'\wh{\pi}^j\right).
    \end{split}
\end{equation*}
Plugging in for $\wh{\pi}^j$, $p$ yields
\begin{equation*}
    \begin{split}
        \gamma_j \E^{j-} &= \ol{\lambda}^j  + \frac{1}{2}\log\left(\left|1_d + P_H^{-1}\ol{M}\right|\right) - \frac{1}{2}\left(\ol{V}^j \right)'\left(P_H+\ol{M}^j\right)^{-1}\ol{V}^j\\
        &\qquad + \frac{1}{2}\left(\ol{V}^j + (P_H+\ol{M}^j)(P_H+\ol{M})^{-1}\ol{V}\right)'\left(P_H+\ol{M}^j\right)^{-1}\\
        &\qquad\qquad \times \left(\ol{V}^j + (P_H+\ol{M}^j)(P_H+\ol{M})^{-1}\ol{V}\right).
    \end{split}
\end{equation*}
The result follows by simplifying the above expression.
\end{proof}


}



\nada{

To prove these results, we follow \cite[Section 6]{detemple2022dynamic}\footnote{In particular, see Propositions 6.6, 6.7 in \cite{detemple2022dynamic}.}, and start by defining the agents' respective classes of acceptable trading  strategies in \eqref{E:vf_def}, applied over $[t_N,1]$. While the definition below appears very technical upon first sight, through the lens of Lemma \ref{L:acc}, the acceptable strategies at the signal level are precisely those which are acceptable at the signal realization level for almost every  signal realization.

Following \cite[Section 3]{detemple2022dynamic}, we introduce some notation, valid on a general probability space $\probtriple$ with sub-sigma algebra $\G \subset \F$. First, recall that conditional expectations of non-negative random variables are always well defined.  Next, for $X\in L^0(\F,\prob)$ define the set
\begin{equation*}
    \Delta_{X}(\F,\G;\prob) \dfn \cbra{\condexpv{\prob}{}{X^+}{\G} = \infty} \bigcup \cbra{\condexpv{\prob}{}{X^-}{\G} = \infty},
\end{equation*}
and the class of functions
\begin{equation*}
    \begin{split}
        L^1(\F,\G;\prob) &\dfn \cbra{X\in L^0(\F;\prob) \such \prob\bra{\Delta_{X}(\F,\G;\prob)= 0}}.
    \end{split}
\end{equation*}
For $X\in L^1(\F,\G;\prob)$ the conditional expectation of $X$ given $\G$ is well defined setting $\condexpvs{X}{\G} \dfn \condexpv{\prob}{}{X^+}{\G} - \condexpv{\prob}{}{X^-}{\G}$, and $\condexpv{\prob}{}{X}{\G}\in L^0(\G;\prob)$.

Given this, we define the acceptable strategies from \eqref{E:vf_def} over $[t_N,1]$ to be, for $k=0,...,N$,
\begin{equation*}
    \begin{split}
        \A^{N,k}_{t_N,1} &\dfn \bigg\{\pi \in \mcp(\filt^{N,k}) \such \pi \textrm{ is } S^N \textrm{ integrable over, }[t_N,1],  \We^{\pi}_{t_N,1} \in L^1(\F^{N,k}_1,\F^{N,k}_{t_N}; \qprob^{N,k}),\\  &\qquad \qquad \condexpv{\qprob^{N,k}}{}{\We^{\pi}_{t_N,1}}{\F^{N,k}_{t_N}} \leq 0\bigg\}.
    \end{split}
\end{equation*}

To gain a better understanding of this class, we have the following lemma. To state it, recall $\wt{\A}^{\ol{h}_N}_{t_N,1}$ from \eqref{E:last_period_acc}, the decompositions (see \cite{Jacod1985}) $\mcp\left(\filt^{N,0}\right) = \B(\reals^{Nd})\times \mcp(\filt^B)$ and $\mcp\left(\filt^{N,k}\right) = \B(\reals^{(N+1)d})\times \mcp(\filt^B)$, and write $\pi^{\ol{H}_N}$ (respectively $\pi^{\ol{H}_N,G_k}$) for a generic $\mcp(\filt^{N,0})$ (resp. $\mcp(\filt^{N,k})$) measurable process. Then we have

\begin{lem}\label{L:acc}

\begin{equation*}
    \begin{split}
    \A^{N,0}_{t_N,1} &= \cbra{\pi^{\ol{H}_N} \such \pi^{\ol{h}_N} \in \wt{\A}^{N,\ol{h}_N}_{t_N,1} \textrm{ for a.e. } \ol{h}_N\in\reals^{Nd}},\\
    \A^{N,k}_{t_N,1} &= \cbra{\pi^{\ol{H}_N,G_k} \such \pi^{\ol{h}_N,g_k} \in \wt{\A}^{N,\ol{h}_N}_{t_N,1} \textrm{ for a.e. } (\ol{h}_N,g_k)\in\reals^{(N+1)d}}.
    \end{split}
\end{equation*}

\end{lem}

Lemma \ref{L:acc} shows at the signal level, that the acceptable trading strategies are precisely those which are acceptable at the signal realization level for almost every signal realization. This is intuitive as the agents have already seen the signals prior to solving their optimal investment problems and should be able to treat the signals as ``constants'' given their information.

}



\nada{

Here, each agent uses the filtration $\filt^B$ and the time interval is $[a,b]$.  The terminal risky asset price is $S_b = Y_b$, where, as a slight generalization of $X$ needed for our recursive argument, the process $Y$ has dynamics
\begin{equation*}
dY_t = \kappa_t(\theta_t - Y_t)dt + \sigma_t dB_t,
\end{equation*}
where $\kappa,\theta,\sigma$ are continuous functions of time such that $\kappa_t\in \reals^{d\times d}$ is of full rank for $t \in [a,b]$; $\theta_t \in \reals^d$; and $\Sigma_t \in \mathbb{S}^d_{++}$ for all $t\in[a,b]$ with $\sqrt{\Sigma_t}$ the unique positive definite square root of $\Sigma_t$.  Similarly to \eqref{E:X_trans_density}, given $\F^B_t$, $Y_b \sim N(\mu^Y(Y_t,t,b), \Sigma^Y(t,b))$ with
\begin{equation*}
\begin{split}
\mu^Y(Y_t,t,b) &= e^{-\int_t^b \kappa_u du} Y_t + \int_t^b e^{-\int_v^b \kappa_u du} \kappa_v \theta_v dv,\qquad \Sigma^Y(t,b) = \int_t^b e^{-\int_{v}^b \kappa_u du} \Sigma_v e^{-\int_v^b \kappa'_u du} dv,
\end{split}
\end{equation*}
and where we set $P^Y(t,b) = (\Sigma^Y(t,b))^{-1}$. Agent $j$ has risk adjusted random endowment
\begin{equation}\label{E:cont_rand_endow}
    \gamma_j \E^j_b = 1_{j>0}\left(\frac{1}{2}Y_b' M_j Y_b - Y_b' V_j\right) + \frac{1}{2}Y_b'M Y_b - Y_b'V,
\end{equation}
where $M_j, M\in\mathbb{S}^d$ and $V_j, V \in\reals^d$.

We let $Z$ be a to-be-determined strictly positive martingale
such that $\expvs{\left(1\vee |Y_b|^2\right)Z_b} < \infty$ and define the price process by
\begin{equation}\label{E:cont_price_def}
    S_t \dfn \frac{\condexpvs{Z_{b}Y_b}{\F^B_t}}{\condexpvs{Z_{b}}{\F^B_t}};\qquad t\in [a,b].
\end{equation}
The set of acceptable strategies $\A_{a,b}$, common for all agents, consists of those $\filt^B$ predictable, $S$ integrable strategies such that $\condexpv{\qprob}{}{\We^{\pi}_{a,b}}{\F^B_a}\leq 0$ for the measure $\qprob$ associated to $Z$. Agent $j$ has optimal investment problem
\begin{equation*}
    \inf_{\pi\in\A_{a,b}} \condexpvs{e^{-\gamma_j \left(\We^{\pi}_{a,b} + \E^j_b\right)}}{\F^B_a},
\end{equation*}
We say $Z$ determines a ``complete market equilibrium'' if there exist optimal $\cbra{\wh{\pi}^j}_{j=0}^J$ such that $\prob\times\Leb_{[a,b]}$ almost surely $\Psi = \sum_{j=0}^J \omega_j \wh{\pi}^j$, and if the $(S,\filt^B)$ market is complete over $[a,b]$.  The following proposition identifies $Z$ enforcing a complete market equilibrium. To state it, define the matrix and vector
\begin{equation*}
    \ol{M} \dfn M + \ol{\gamma} \sum_{j=1}^J \alpha_j M_j;\qquad \ol{V} \dfn V + \ol{\gamma}\sum_{j=1}^J \alpha_j V_j.
\end{equation*}
and assume that $P^Y(a,b) + \ol{M}$ is strictly positive definite.

\begin{prop}\label{P:abstract_cont_time}
There is a complete market equilibrium with martingale measure
\begin{equation}\label{E:abstract_cont_time_z}
    \frac{d\qprob}{d\prob}\big|_{\F^B_b} = Z_{b};\qquad Z_{b} \dfn \frac{e^{ -\frac{1}{2}X_b'\ol{M} X_b + X_b'\left(\ol{V}-\ol{\gamma}\Psi\right)}}{\expvs{e^{- \frac{1}{2}X_b'\ol{M} X_b + X_b'\left(\ol{V}-\ol{\gamma}\Psi\right)}}}.
\end{equation}
The price process is
\begin{equation}\label{E:abstract_cont_time_px}
    S_{t} = \left(P^Y(t,b) + \ol{M}\right)^{-1}\left(P^Y(t,b)\mu^Y(Y_t,t,b) + \ol{V}-\ol{\gamma}\Psi\right);\qquad t\in [a,b].
\end{equation}
For $j=0,...,J$ the optimal strategy for agent $j$ is
\begin{equation*}
\gamma_j \wh{\pi}^j_t = 1_{j>0}\left(V_j - M_j S_t\right) + \ol{\gamma}\left(\Psi + \sum_{k=1}^J \alpha_k\left(M_k S_t - V_k\right)\right),
\end{equation*}
and the time $a$ certainty equivalent is
\begin{equation*}
    \begin{split}
        \gamma_j \check{\E}^j_a &\dfn -\log\left(\condexpvs{e^{-\gamma_j \left(\wh{\We}^j_{a,b} + \E^j_b\right)}}{\F^B_a}\right),\\
        &= 1_{j>0}\left(\frac{1}{2}S_a'M_j S_a - S_a'V_j\right) + \frac{1}{2}S_a'\left(M+\ol{M}\Sigma^Y(a,b)\ol{M}\right) S_a - S_a'\left(V+\ol{M}\Sigma^Y(a,b)\left(\ol{V}-\ol{\gamma}\Psi\right)\right)\\
          &\qquad + \frac{1}{2}\left(\ol{V}-\ol{\gamma}\Psi\right)'\Sigma^Y(a,b)\left(\ol{V}-\ol{\gamma}\Psi\right) + \lambda_j,
    \end{split}
\end{equation*}
where the constant $\lambda_j$ is explicitly identified in \eqref{E:cont_lambda_def} below, and does not depend upon the vectors $V,\cbra{V_j}_{j=1}^J,\Psi$.
\end{prop}

\begin{rem}\label{R:cont_time}

Note the time $a$ certainty equivalent is very similar in structure to the time $b$ random endowment of \eqref{E:cont_rand_endow}. Also, it is clear that the time $a$ certainty equivalent is linear-quadratic in $Y_a$, and when we use this  result below, the $\{V_j\}, V$ will be affine in the relevant signal realizations $\cbra{g_j},\cbra{h_j}$.   Therefore, as will be shown, prices are jointly linear $x,\cbra{h_j}$, and certainty equivalents are jointly linear quadratic in $x,\cbra{g_j}, \cbra{h_j}$.

\end{rem}

}



\nada{

\begin{proof}[Proof of Lemma \ref{L:acc}]

Write $S_N^{\ol{H}_N}$ as the equilibrium price process over $[t_N,1]$. The arguments in \cite[Section S.4]{MR4192554} show that $\pi^{\ol{H}_N}$ being $S^{\ol{H}_N}_N$ integrable is equivalent to $\pi^{\ol{h}_N}$ being $S^{\ol{h}_N}$ integrable for a.e. $h_N$, and $\pi^{\ol{H}_N,G_k}$ being $S^{\ol{H}_N}_N$ integrable is equivalent to $\pi^{\ol{h}_N,g_k}$ being $S^{\ol{h}_N}$ integrable for a.e. $(h_N,g_k)$. Next, consider when $k=0$ and let $\chi^{\ol{H}_N}_1$  be a $\F^{N,0}_1$ measurable random variable where $\chi^{\ol{h}_N}(\omega)_1$ is $\mathcal{B}(\reals^{Nd})\times \F^B_1$ measurable function. Using \eqref{E:ell_def_1}, \eqref{E:U_ident_1}, $\qprob^{N,\ol{h}_N}$ from \eqref{E:Nmkt_density} and $\qprob^{N,0}$ from \eqref{E:U_mm}, we see that
\begin{equation*}
     \chi^{\ol{H}_N}_1 \in L^1(\F^{N,0}_1,\F^{N,0}_{t_N},\qprob^{N,0}) \quad \Leftrightarrow \quad \chi^{\ol{h}_N}_1 \in L^1(\F^B_1,\F^B_{t_N},\qprob^{N,\ol{h}_N}) \textrm{  for a.e.  } \ol{h}_N \in\reals^{Nd}.
\end{equation*}
This proves the claim for the uninformed agent. Similarly, for the $k^{th}$ insider we see that for $\F^B_T \times \mathcal{B}(\reals^{(N+1)d})$ measurable random variables $\chi^{\ol{H}_n,G_k}$
\begin{equation*}
   \chi^{\ol{H}_N,G_k}_1 \in L^1(\F^{N,k}_1,\F^{N,k}_{t_N},\qprob^{N,k}) \quad \Leftrightarrow \quad \chi^{\ol{h}_N,g_k}_1 \in L^1(\F^B_1,\F^B_{t_N},\qprob^{N,\ol{h}_N}) \textrm{  for a.e.  } \ol{h}_N \in\reals^{Nd}, g\in\reals^d.
\end{equation*}

\end{proof}
}


\nada{
Next, note that
\begin{equation*}
\begin{split}
\ol{\gamma} \sum_{k=1}^J \alpha_k V_k - \ol{\gamma} \Psi &= \ol{V} - \ol{\gamma}\Psi + P^Y(a,b)\mu^Y(Y_a,a,b) - V - P^Y(a,b)\mu^Y(Y_a,a,b),\\
&= \left(P^Y(a,b) + \ol{M}\right)S_a - V - P^Y(a,b)\mu^Y(Y_a,a,b).
\end{split}
\end{equation*}
Plugging this in yields
\begin{equation*}
\begin{split}
\gamma_j \check{\E}^j_a &= \lambda_j + 1_{j>0}\left(\frac{1}{2}S_a'M_j S_a - S_a'V_j\right) + \frac{1}{2}S_a'\left(P^Y(a,b)+\ol{M}\right)S_a - \frac{1}{2}S_a'\left(\ol{\gamma}\sum_{k=1}^J \alpha_k M_k\right)S_a\\
&\qquad + \frac{1}{2}\mu^Y(Y_a,a,b)'P^Y(a,b)\mu^Y(Y_a,a,b) - S_a'V - S_a'P^Y(a,b)\mu^Y(Y_a,a,b).
\end{split}
\end{equation*}
$\ol{M} = M + \ol{\gamma}\sum_{k=1}^J \alpha_k M_k$ yields
\begin{equation*}
\begin{split}
\gamma_j \E^j_a &= \lambda_j + 1_{j>0}\left(\frac{1}{2}S_a'M_j S_a - S_a'V_j\right) + \frac{1}{2}S_a'\left(P^Y(a,b)+M\right)S_a\\
&\qquad + \frac{1}{2}\mu^Y(Y_a,a,b)'P^Y(a,b)\mu^Y(Y_a,a,b) - S_a'V - S_a'P^Y(a,b)\mu^Y(Y_a,a,b),\\
&= \lambda_j + 1_{j>0}\left(\frac{1}{2}S_a'M_j S_a - S_a'V_j\right) + \frac{1}{2}S_a'MS_a- S_a'V\\
&\qquad + \frac{1}{2}\left(S_a-\mu^Y(Y_a,a,b)\right)'P^Y(a,b)\left(S_a-\mu^Y(Y_a,a,b)\right).
\end{split}
\end{equation*}
Lastly, from the pricing formula we deduce
\begin{equation*}
\mu^Y(Y_a,a,b) = S_a + \Sigma^Y(a,b)\ol{M}S_a - \Sigma^Y(a,b)\left(\ol{V}-\ol{\gamma}\Psi\right).
\end{equation*}
Plugging this in gives
\begin{equation*}
\begin{split}
\gamma_j \check{\E}^j_a &= \lambda_j + 1_{j>0}\left(\frac{1}{2}S_a'M_j S_a - S_a'V_j\right) + \frac{1}{2}S_a'\left(M + \ol{M}\Sigma^Y(a,b)\ol{M}\right)S_a- S_a'\left(V + \ol{M}\Sigma^Y(a,b)\left(\ol{V}-\ol{\gamma}\Psi\right)\right)\\
&\qquad + \frac{1}{2}\left(\ol{V}-\ol{\gamma}\Psi\right)\Sigma^Y(a,b)\left(\ol{V}-\ol{\gamma}\Psi\right).
\end{split}
\end{equation*}

}



\bibliographystyle{siam}

\bibliography{master}


\end{document}